\title{Algorithmic interpretations of fractal dimension}
\author{
Anastasios Sidiropoulos\footnote{Dept.~of Mathematics and Dept.~of Computer Science \& Engineering, The Ohio State University.}
\footnote{Supported by the National Science Foundation (NSF) under grant CCF 1423230 and award CAREER 1453472.}
\and
Vijay Sridhar\footnote{Dept.~of Computer Science \& Engineering, The Ohio State University.}
\footnotemark[2]
}
\date{}
\begin{document}

\def\lf{\left\lfloor}   
\def\rf{\right\rfloor}
\def\lc{\left\lceil}   
\def\rc{\right\rceil}
\newtheorem{theorem}{Theorem}
\newtheorem{proposition}[theorem]{Proposition}
\newtheorem{claim}[theorem]{Claim}
\newtheorem{lemma}[theorem]{Lemma}
\newtheorem{corollary}[theorem]{Corollary}
\newtheorem{definition}[theorem]{Definition}
\newtheorem{observation}[theorem]{Observation}
\newtheorem{fact}[theorem]{Fact}
\newtheorem{property}{Property}
\newtheorem{notation}{Notation}[section]
\newtheorem{algorithm}{Algorithm}
\newtheorem{conjecture}{Conjecture}
\newtheorem{question}[conjecture]{Question}

\newcommand{\eps}{\varepsilon}
\newcommand{\tw}{\mathsf{tw}}
\newcommand{\pw}{\mathsf{pw}}
\newcommand{\dimF}{\dim_{\mathsf{f}}}
\newcommand{\dimD}{\dim_{\mathsf{d}}}
\newcommand{\diam}{\mathsf{diam}}
\newcommand{\dist}{\mathsf{dist}}
\newcommand{\volume}{\mathsf{volume}}
\newcommand{\ball}{\mathsf{ball}}
\newcommand{\sphere}{\mathsf{sphere}}
\newcommand{\origin}{\mathbf{o}}
\newcommand{\OPT}{\mathsf{OPT}}
\newcommand{\XXX}{\textcolor{red}{XXX}}
\newcommand{\siep}{Sierpi\'{n}ski}
\newcommand{\ImSucc}{S}
\newcommand{\Succ}{S'}
\newcommand{\shrunk}{\mathsf{shrunk}}
\newcommand{\rep}{\mathsf{rep}}
\newcommand{\length}{\mathsf{length}}
\newcommand{\size}{\mathsf{size}}
\newcommand{\father}{\mathsf{father}}
\newcommand{\Near}{\mathsf{Near}}

\newcommand{\CC}[1]{\textcolor{red}{c_{#1}}}

\maketitle

\begin{abstract}
We study algorithmic problems on subsets of Euclidean space of low \emph{fractal dimension}. These spaces are the subject of intensive study in various branches of mathematics, including geometry, topology, and measure theory. 
There are several well-studied notions of fractal dimension for sets and measures in Euclidean space.
We consider a definition of fractal dimension for finite metric spaces which agrees with standard notions used to empirically estimate the fractal dimension of various sets.
We define the fractal dimension of some metric space to be the infimum $\delta>0$, such that for any $\eps>0$, for any ball $B$ of radius $r\geq 2\eps$, and for any $\eps$-net $N$ (that is, for any maximal $\eps$-packing), we have $|B\cap N|=O((r/\eps)^\delta)$.

Using this definition we obtain faster algorithms for a plethora of classical problems on sets of low fractal dimension in Euclidean space.
Our results apply to exact and fixed-parameter algorithms, approximation schemes, and spanner constructions.
Interestingly, the dependence of the performance of these algorithms on the fractal dimension nearly matches the currently best-known dependence on the standard Euclidean dimension.
Thus, when the fractal dimension is strictly smaller than the ambient dimension, our results yield improved solutions in all of these settings.

We remark that our definition of fractal dimension is equivalent up to constant factors to the well-studied notion of doubling dimension.
However, in the problems that we consider, the dimension appears in the exponent of the running time, and doubling dimension is not precise enough for capturing the best possible such exponent for subsets of Euclidean space.
Thus our work is \emph{orthogonal} to previous results on spaces of low doubling dimension;
while algorithms on spaces of low doubling dimension seek to extend results from the case of low dimensional Euclidean spaces to more \emph{general} metric spaces, our goal is to obtain faster algorithms for \emph{special} pointsets in Euclidean space.

\end{abstract}


\section{Introduction}
Sets of non-integral dimension are ubiquitous in nature and can be used to model a plethora of processes and phenomena in science and engineering \cite{takayasu1990fractals}.
Sets and measures in Euclidean space of certain fractal dimension are the subject of study in several branches of mathematics, including geometry, topology, and measure theory.

In many problems in computational geometry, the dimension of the input set often determines the complexity of the best-possible algorithms.
In this work we study the computational complexity of geometric problems on sets of bounded fractal dimension in low-dimensional Euclidean space.
We observe the following interesting phenomenon: For many problems, it is possible to obtain algorithms with dependence on the fractal dimension similar to the best-possible dependence to the standard Euclidean dimension.
This implies asymptotically faster algorithms when the fractal dimension of the input is smaller than the ambient dimension.

\subsection{Definition of fractal dimension}
Intuitively, some $X\subseteq \mathbb{R}^d$ has fractal dimension $\delta\in [0,d]$ if when scaling $X$ by a factor of $\alpha>0$, the ``volume'' of $X$ is multiplied by a factor $\alpha^\delta$.
There are many different ways this intuition can be formalized, such as Hausdorff dimension, Minkowski dimension, and so on.
Unfortunately, some of these definitions are not directly applicable in the context of discrete computational problems.
For example, the Hausdorff dimension of any countable set is 0.

Despite this, there are some natural methods that are used to estimate the fractal dimension of a set in practice.
Let $X\subseteq \mathbb{R}^d$.
Let $\Gamma_\eps$ be a $d$-dimensional grid where each cell has width $\eps>0$, and let $I_\eps(X)$ be the number of cells in $\Gamma_\eps$ that intersect $X$.
The \emph{fractal box-counting dimension} of $X$ is defined to be $\lim_{\eps\to 0}\log(I_\eps(X))/\log(1/\eps)$ \cite{falconer2004fractal}.
This definition is often used experimentally as follows:
Intersect $X$ with a regular lattice $(\eps \mathbb{Z})^d$,  and estimate the rate by which the cardinality of the intersection grows when $\eps \to 0$.
In that context, $X$ has fractal dimension $\delta$ when the size of the intersection grows as $(1/\eps)^\delta$
\cite{khoury2010fractal}.

We consider a definition that is closely related to box-counting dimension, but is more easily amenable to algorithmic analysis.
Let $S\subseteq A$. We say that $S$ is an \emph{$\eps$-covering} of $A$ if for any $x \in A$ we have that $\dist(\{x\},S)\leq \eps$.
For any $x \in A$ and $y \in S$ we say that $x$ is covered by $y$ if $\rho(x,y) \leq \eps$. $S$ is an \emph{$\eps$-packing} if for any $x,y \in S$ we have $\rho(x,y) \geq \eps$. If $S$ is both an $\eps$-covering and an $\eps$-packing of $A$ then we say that $S$ is an $\eps$-net of $A$. We define the \emph{fractal dimension} of some family of pointsets $P\subseteq \mathbb{R}^d$, denoted by $\dimF(P)$, to be the infimum $\delta$, such that for any $\eps>0$ and $r\geq 2\eps$, for any $\eps$-net\footnote{We arrive at an equivalent definition if we require $N$ to be a $\eps$-packing instead of a $\eps$-net.} $N$ of $P$, and for any $x\in \mathbb{R}^d$, we have 
$|N \cap \ball(x,r)| = O((r/\eps)^\delta)$.
For the sake of notational simplicty, we will be referring to the fractal dimension of some familty of pointsets $P$, as the fractal dimension of the pointset $P$, with the understanding that in the asymptotic notation $|P|$ is unbounded.

Figure \ref{fig:carpet} depicts an example of an infinite family of discrete pointsets $P$ with non-integral fractal dimension constructed as follows:
We begin with the $3^k\times 3^k$ integer grid, for some $k\in \mathbb{N}$, we partition it into 9 subgrids of equal size, we delete all the points in the central subgrid, and we recurse on the remaining 8 subgrids.
The recursion stops when we arrive at a subgrid containing a single point.
This is a natural discrete variant of the Sierpi\'{n}ski carpet.
It can be shown that $\dimF(P)=\log_{3}8$, which is equal to the Hausdorff dimension of the standard Sierpi\'{n}ski carpet.

\begin{figure}
\begin{center}
\scalebox{0.4}{\includegraphics{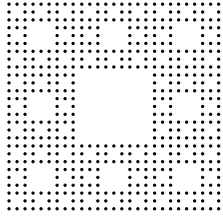}}
\caption{A discrete variant of the Sierpi\'{n}ski carpet for $k=3$.\label{fig:carpet}}
\end{center}
\end{figure}

\subsection{Why yet another notion of dimension?}
We now briefly compare the above notion of fractal dimension to previous definitions and motivate its importance.
The most closely related notion that has been previously studied in the context of algorithm design is \emph{doubling dimension} \cite{assouad1983plongements,heinonen2012lectures,gupta2003bounded}.
We recall that the doubling dimension of some metric space $M$, denoted by $\dimD(M)$, is defined to be $\log \kappa$, where $\kappa$ is the minimum integer such that for all $r>0$, any ball in $M$ of radius $r$ can be covered by at most $\kappa$ balls of radius $r/2$.
It is easy to show that for any metric space $M$, we have\footnote{Note that for a set $X$ containing two distinct points we have $\dimF(X)=0$ while $\dimD(X)=1$ and thus it is not always the case that $\dimD(X)=O(\dimF(X))$.}
$\dimD(M) = \dimF(M) + O(1)$ and $\dimF(M) = O(\dimD(M))$.
Thus our definition is equivalent to doubling dimension up to constant factors.
However, in the problems we consider, the dimension appears in the \emph{exponent} of the running time of the best-known algorithms;
therefore, determining the best-possible constant is of importance.
As we shall see, for several algorithmic problems, our definition yields nearly optimal bounds on this exponent, while doubling dimension is not precise enough for this task.

Let us illustrate this phenomenon on the problem of solving TSP exactly of a set of $n$ points in the Euclidean plane.
It is known that TSP admits an algorithm with running time $2^{O(\sqrt{n} \log{n})} n^{O(1)}$ in this case \cite{smith1998geometric}.
Moreover, the exponent of $O(\sqrt{n} \log{n})$ is known to be nearly optimal assuming the Exponential Time Hypothesis (ETH) \cite{marx2014limited} (see later in this Section for a more precise statement). 
We show that for sets of fractal dimension $\delta\in (1,2]$, there exists an algorithm with running time $2^{O(n^{1-1/\delta} \log n)}$.
Thus, for any fixed $\delta<2$, we achieve an asymptotically faster algorithm than what is possible for general pointsets (assuming ETH).
On the other hand, it is known that the unit disk cannot be covered with 6 disks of radius $1/2$ (see \cite{zahn1962black}). Thus $\dimD(\mathbb{R}^2) \geq \log_2 7 > 2.807$, while $\dimF(\mathbb{R}^2)=2$.
Therefore doubling dimension is not precise enough to capture the best-possible exponent in this setting.

In summary, while algorithms on spaces of low doubling dimension seek to extend results from the case of low dimensional Euclidean space to a more \emph{general} setting, our goal is to obtain faster algorithms for \emph{special} classes of pointsets in Euclidean space.

\subsection{Our results}
We obtain algorithms for various problems on sets of low fractal dimension in Euclidean space.
We consider exact algorithms, fixed parameter algorithms, and approximation schemes. 
In each one of these settings, we pick classical representative problems.
We believe that our techniques should be directly applicable to many other problems.

\textbf{Exact algorithms.}
We first consider exact algorithms in $\mathbb{R}^d$.
It is known that for any fixed $d$, TSP on a set of $n$ points in $\mathbb{R}^d$ can be solved in time $2^{O(n^{1-1/d} \log n)}$ \cite{smith1998geometric}.
By adapting ideas from the Euclidean setting,
we show that TSP on a set of $n$ points of fractal dimension $\delta > 1$ in constant-dimensional Euclidean space, can be solved in time $2^{O(n^{1-1/\delta} \log n)}$.
When $\delta=1$ and $\delta<1$, our algorithm has running time $n^{O(\log^2 n)}$ and $n^{O(\log n)}$ respectively.
We remark that it has been shown by Marx and Sidiropoulos \cite{marx2014limited} that assuming ETH, there is no algorithm for TSP in $\mathbb{R}^d$ with running time $2^{O(n^{1-1/d-\eps})}$, for any $\eps>0$.
Thus, our result bypasses this lower bound for sets of low fractal dimension.
In particular, our result implies that, in a certain sense, the hardest instances for TSP in $\mathbb{R}^d$ must be close to \emph{full-dimensional}; that is, they must have fractal dimension close to $d$.
Our technique also extends to the Minimum Rectilinear Steiner Tree problem in $\mathbb{R}^2$.

\textbf{Parameterized problems.}
We also consider algorithms for problems parameterized by the value of the optimum solution.
A prototypical geometric problem in this setting is Independent Set of unit balls in $\mathbb{R}^d$.
Formally, we show that given a set $D$ of unit balls in $\mathbb{R}^d$, the $k$-Independent Set problem on $D$ can be solved in time  $n^{O(k^{1-1/\delta})}$, for any fixed $d$, where $\delta>1$ is the fractal dimension of the set of centers of the disks in $D$.
When $\delta \leq 1$, we get an algorithm with running time $n^{O(\log k)}$.
Previously known algorithms for this problem in $d$-dimensional Euclidean space have running time $n^{O(k^{1-1/d})}$, for any $d\geq 2$  \cite{alber2002geometric,marx2014limited}.
Moreover, it has been shown that there is no algorithm with running time $f(k) n^{o(k^{1-1/d})}$, for any computable function $f$, assuming ETH \cite{marx2014limited} (see also \cite{marx2005efficient}).
Thus, our result implies that this lower bound can also be bypassed for sets of fractal dimension $\delta < d$.

\textbf{Approximation schemes.}
We next consider approximation schemes.
Let $P$ be a set of $n$ points of fractal dimension $\delta>0$, in $d$-dimensional Euclidean space.
We show that for any $R>0$, for any $\ell>0$, we can compute a $(1+d/\ell)$-approximate $R$-cover of $P$ in time $\ell^{d+\delta} n^{O({(\ell \sqrt{d})^{\delta})}}$.
This matches the performance of the algorithm of Hochbaum and Maass \cite{hochbaum1985approximation} after replacing $\delta$ by $d$.
We also obtain a similar algorithm for the $R$-packing problem.

\textbf{Spanners and pathwidth.}
Recall that for any pointset in $\mathbb{R}^d$, and for any $c\geq 1$, a \emph{$c$-spanner} for $P$ is a graph $G$ with $V(G)=P$, such that for all $x,y\in P$, we have
$\|x-y\|_2 \leq d_G(x,y) \leq c \cdot \|x-y\|_2$,
where $d_G$ denotes the shortest path distance in $G$.
The parameter $c$ is called the \emph{dilation} of $G$.
It is known that for any $\eps>0$, any set of $n$ points in $\mathbb{R}^d$ admits a $(1+\eps)$-spanner of size $n (1/\eps)^{O(d)}$ \cite{salowe1991construction,vaidya1991sparse}.
We strengthen this result in the following way.
We show that for any $\eps>0$, any set of $n$ points of fractal dimension $\delta$ in constant-dimensional Euclidean space admits a $(1+\eps)$-spanner of size $n(1/\eps)^{O(d)}$, and of pathwidth at most $O(n^{1-1/\delta} \log n)$ if $\delta>1$, at most $O(\log^{2} n)$ if $\delta= 1$, and at most $O(\log n)$ if $\delta<1$.
Our spanner is obtained via a modification of the construction due to Vaidya \cite{vaidya1991sparse}.
This provides a general polynomial-time reduction for geometric optimization problems on Euclidean instances of low fractal dimension to corresponding graph instances of low pathwidth.
This result can be understood as justification for the fact that instances of low fractal dimension appear to be ``easier'' than arbitrary instances.
We remark that our construction also implies, as a special case, that arbitrary $n$-pointsets in $\mathbb{R}^d$ admit $(1+\eps)$-spanners of size $n(1/\eps)^{O(d)}$ and pathwidth $O(n^{1-1/d} \log n)$; this bound on the pathwidth appears to be new, even for the case $d=2$.

\subsection{Related work}
There is a large body of work on various notions of dimensionality in computational geometry.
Most notably, there has been a lot of effort on determining the effect of doubling dimension on the complexity of many problems \cite{har2006fast,bartal2012traveling,cole2006searching,krauthgamer2005black,gottlieb2008optimal,krauthgamer2005measured,chan2009small,chan2005hierarchical,gupta2012online,talwar2004bypassing}.
Other notions that have been considered include low-dimensional negatively curved spaces \cite{krauthgamer2006algorithms}, 
growth-restricted metrics \cite{karger2002finding},
as well as 
generalizations of doubling dimension to metrics of  bounded global growth \cite{hubert2012approximating}.

A common goal in all of the above lines of research is to extend tools and ideas from the Euclidean setting to more general geometries.
In contrast, as explained above, we study restricted classes of Euclidean instances, with the goal of obtaining faster algorithms than what is possible for arbitrary Euclidean pointsets.

\subsection{Notation and definitions}
Let $(X,\rho)$ be some metric space.
For any $x\in X$ and $r\geq 0$, we define $\ball(x,r)= \{ y \in X : \rho(x,y) \leq r \}$ and $\sphere(x,r)= \{ y \in X : \rho(x,y) = r \}$.
For some $A,B\subseteq X$, we write $\dist(A,B)=\inf_{x\in A, y\in B}\{ \rho(x,y)\}$.
For some $r\geq 0$, we write 
$N(A,r) = \{x\in X : \dist(A, \{x\}) \leq r\}$.
Let $S\subseteq A$. We say that $S$ is an \emph{$\eps$-covering} of $A$ if for any $x \in A$ we have that $\dist(\{x\},S)\leq \eps$.
For any $x \in A$ and $y \in S$ we say that $x$ is covered by $y$ if $\rho(x,y) \leq \eps$. $S$ is an \emph{$\eps$-packing} if for any $x,y \in S$ we have $\rho(x,y) \geq \eps$. If $S$ is both an $\eps$-covering and an $\eps$-packing of $A$ then we say that $S$ is an $\eps$-net of $A$.

We recall the following definition from \cite{smith1998geometric}.
Let $D$ be a collection of subsets of $\mathbb{R}^d$. $D$ is said to be \emph{$\kappa$-thick} if no point is covered by more than $\kappa$ elements of $D$. Let $D'$ be any subset of $D$ such that the ratio between the diameters of any pair of elements in $D'$ is at most $\lambda$. Then $D'$ is said to be \emph{$\lambda$-related}. $D$ is said to be \emph{$(\lambda, \kappa)$-thick} if no point is covered by more than $\kappa$ elements of any $\lambda$-related subset of $D$.


The pathwidth of some graph $G$, denoted by $\pw(G)$, is the minimum integer $k\geq 1$, such that there exists a sequence $C_1,\ldots,C_{\ell}$ of subsets of $V(G)$ of cardinality at most $k+1$, such that for all $\{u,v\}\in E(G)$, there exists $i\in \{1,\ldots,\ell\}$ with $\{u,v\}\subseteq C_i$, and for all $w\in V(G)$, for all $i_1<i_2<i_3 \in \{1,\ldots,\ell\}$, if $w\in C_{i_1}\cap C_{i_3}$ then $w\in C_{i_2}$.


\subsection{Organization}
The rest of the paper is organized as follows.
In Section \ref{sec:separator} we derive a separator Theorem for a set of balls whose set of centers has bounded fractal dimension.
In Section \ref{sec:exact} we present our exact algorithms for TSP and RSMT.
In Section \ref{sec:fpt} we give a fixed-parameter algorithm for Independent Set of unit balls.
In Section \ref{sec:approx} we give approximation schemes for packing and covering unit balls.
Finally, in Section \ref{sec:spanners} we present our spanner construction.

\section{A separator theorem}\label{sec:separator}

In this section we prove a separator theorem for a set of $d$-balls intersecting a set of points with bounded fractal dimension. 
Subsequently, this result will form the basis for some of our algorithms.
The proof uses an argument due to Har-Peled \cite{har2011simple}.

\begin{theorem}\label{sep}
Let $d\geq 2$ be some integer, and let $\delta\in (0,d]$ be some real number.
Let $P \subset \mathbb{R}^d$ such that $\dimF(P)= \delta$. 
Let $B$ be a $(\lambda,\kappa)$-thick set of $d$-balls in $\mathbb{R}^d$, with $|B|=n$, $\lambda \geq 2$ and such that for all $b \in B$ we have $b \cap P \neq \emptyset$.
Then there exists a $(d-1)$-sphere $C$ such that at most $(1-2^{-O(d)})n$ of the elements in $B$ are entirely contained in the interior of $C$, at most $(1-2^{-O(d)})n$ of the elements in $B$ are entirely outside $C$,
and
\begin{align*}
|A| &= \left\{\begin{array}{ll}
O\left(\kappa (5\lambda)^d 6^{\delta} \frac{\lambda}{1-\lambda^{(1-\delta)}} n^{1-1/\delta}\right) & \text{ if } \delta>1\\
O\left(\kappa (5\lambda)^d 6^{\delta}\log n\right) & \text{ if } \delta=1\\
O\left(\frac{\kappa (5\lambda)^d 6^{\delta}}{\lambda^{1-\delta}-1}\right) & \text{ if } \delta<1
\end{array}\right.,
\end{align*}
where
$A=\{b\in B : \diam(b) \leq \diam(C) \text{ and } b\cap C\neq \emptyset\}$.
\end{theorem}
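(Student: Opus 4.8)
The plan is to adapt the geometric separator construction of Har-Peled~\cite{har2011simple}, using the net bound in the definition of $\dimF$ in place of the ambient-dimension volume estimates. First rescale so that the smallest ball of $B$ has diameter $1$; this changes neither $\dimF(P)$, nor $(\lambda,\kappa)$-thickness, nor the quantities in the statement. For each $b \in B$ fix a representative $p_b \in b \cap P$ and lift it to $\widehat{p_b} = (p_b, \|p_b\|^2) \in \mathbb{R}^{d+1}$; under this lift the non-vertical hyperplanes of $\mathbb{R}^{d+1}$ are in bijection with the spheres of $\mathbb{R}^d$, with $x$ strictly inside the sphere iff $\widehat x$ lies strictly below the hyperplane. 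Let $z = (z_0, z_{d+1})$ be a centerpoint of the multiset $\{\widehat{p_b}\}_{b \in B}$ (every closed halfspace containing $z$ contains at least $n/(d+2)$ of the lifted points), and set $L := \sqrt{z_{d+1} - \|z_0\|^2}$, well-defined since $z$, lying in the convex hull of points on the upward paraboloid, satisfies $z_{d+1} \ge \|z_0\|^2$. The candidate spheres are $C_c := \sphere(c, r_c)$ with $r_c := \sqrt{L^2 + \|c - z_0\|^2}$, for $c \in \ball(z_0, L)$; each $C_c$ is the image of a hyperplane through $z$, so by the centerpoint property at least $n/(d+2) = 2^{-O(d)}n$ representatives lie strictly outside $C_c$ and at least that many strictly inside. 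Since a ball of $B$ entirely inside (resp.\ outside) $C_c$ has its representative inside (resp.\ outside), at most $(1-2^{-O(d)})n$ balls lie entirely inside and at most $(1-2^{-O(d)})n$ entirely outside $C_c$, for \emph{every} $c$; the balance condition is thus automatic, and it remains to exhibit $c$ with $|A(C_c)|$ small, which we do by bounding $\mathbb{E}_c[|A(C_c)|]$ over $c$ uniform in $\ball(z_0, L)$.

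Two facts drive the estimate. (i) Since $\|c-z_0\|\le L$ on $\ball(z_0,L)$, one checks that the map $c \mapsto \|c - c_b\| - r_c$ — whose small-level set describes the $c$ with $C_c$ crossing $b = \ball(c_b,\rho_b)$ — has gradient of norm bounded below by an absolute constant, so by the co-area inequality $\Pr_c[\,C_c \text{ crosses } b\,] = O(\rho_b/L) = O(\diam(b)/L)$; moreover $r_c \le \sqrt{2}\,L$, so any ball with $\diam(b) \le \diam(C_c)$ that meets some $C_c$ has its representative in $B_0 := \ball(z_0, 6L)$. (ii) For each scale $i \ge 0$, the number of balls $b$ with $\diam(b) \in [\lambda^i,\lambda^{i+1})$ and $p_b \in B_0$ is at most $\min\{\,n,\ O(\kappa(5\lambda)^d 6^\delta (L/\lambda^i)^\delta)\,\}$: the first bound is trivial, and for the second we take a $\lambda^i$-net $N_i$ of $P$ — which by $\dimF(P)=\delta$ satisfies $|N_i \cap B_0| = O((6L/\lambda^i)^\delta)$ — and note that all such balls are $\lambda$-related, so by $(\lambda,\kappa)$-thickness at most $O(\kappa(5\lambda)^d)$ of them are charged to any one net point (cover the relevant $O(\lambda^i)$-ball around a net point by $(5\lambda)^d$ balls of radius $\lambda^i/4$; each contains the centers of at most $\kappa$ of the balls, since any two such balls contain each other's center). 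Combining (i) and (ii),
\[
\mathbb{E}_c\big[|A(C_c)|\big] \;\le\; O(1)\sum_{i \ge 0}\frac{\lambda^i}{L}\,\min\Big\{\,n,\ \kappa(5\lambda)^d 6^\delta\Big(\frac{L}{\lambda^i}\Big)^{\!\delta}\,\Big\}.
\]

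It remains to sum. Let $i_0 := \lceil \log_\lambda(L\, n^{-1/\delta}) \rceil$; for $i < i_0$ the minimum equals $n$, and those terms sum to $O\!\big(\tfrac{n}{L}\sum_{i<i_0}\lambda^i\big) = O(n\lambda^{i_0}/L) = O(n^{1-1/\delta})$. For $i \ge i_0$ the minimum equals the fractal term, and $\sum_{i \ge i_0}\tfrac{\lambda^i}{L}\kappa(5\lambda)^d 6^\delta(L/\lambda^i)^\delta = \kappa(5\lambda)^d 6^\delta L^{\delta-1}\sum_{i \ge i_0}\lambda^{i(1-\delta)}$ is a geometric series with ratio $\lambda^{1-\delta}$: when $\delta > 1$ it is dominated by $i = i_0$ and equals $O\!\big(\kappa(5\lambda)^d 6^\delta \tfrac{\lambda}{1-\lambda^{1-\delta}}\big)L^{\delta-1}\lambda^{i_0(1-\delta)} = O\!\big(\kappa(5\lambda)^d 6^\delta \tfrac{\lambda}{1-\lambda^{1-\delta}} n^{1-1/\delta}\big)$; when $\delta = 1$ it has $O(\log n)$ equal terms and is $O(\kappa(5\lambda)^d 6^\delta \log n)$; and when $\delta < 1$ it is dominated by its last term $i = i_{\max} = O(\log_\lambda L)$ and equals $O\!\big(\tfrac{\kappa(5\lambda)^d 6^\delta}{\lambda^{1-\delta}-1}\big)L^{\delta-1}L^{1-\delta} = O\!\big(\tfrac{\kappa(5\lambda)^d 6^\delta}{\lambda^{1-\delta}-1}\big)$. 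In every case the $i \ge i_0$ part dominates the $O(n^{1-1/\delta})$ from the $i < i_0$ part (which is $O(1)$ when $\delta \le 1$), so $\mathbb{E}_c[|A(C_c)|]$ is within a constant factor of the claimed bound, and any $c$ attaining at most the expectation yields the desired sphere $C = C_c$.

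\emph{Main obstacle.} The delicate part is the construction of the family together with fact (i): one needs the candidate spheres to simultaneously force a balanced split (handled cleanly by the centerpoint in the lifted space, which is why no control on the scale $L$ of the spheres is needed) and to be parametrized so that each ball is crossed by only an $O(\diam(b)/L)$-fraction of them while all relevant balls land in a single ball $B_0$ of radius $O(L)$; establishing the gradient lower bound behind the $O(\diam(b)/L)$ probability is exactly where the restriction $c \in \ball(z_0,L)$ (rather than unrestricted slopes) is used. The remaining ingredient — the per-scale count (ii) — is a routine combination of the packing properties of $(\lambda,\kappa)$-thick families with the net bound defining fractal dimension, and the three cases of the statement are precisely the three regimes of the geometric series $\sum_i \lambda^{i(1-\delta)}$.
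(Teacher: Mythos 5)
Your proof is correct in outline, but it takes a genuinely different route from the paper in the part that constructs the separating sphere. The paper follows Har-Peled's one-parameter scheme: take the smallest ball $C'$ containing a $\frac{1}{2^{O(d)}}$-fraction of the elements of $B$, rescale it to radius $1$, and choose the radius $r\in[1,2]$ of $C=\sphere(\origin,r)$ uniformly at random; the crossing probability of a ball is then bounded by its diameter with no work at all (the favourable radii form an interval of length at most $\diam(b)$), while the balanced-split property needs a separate covering-number argument tied to the minimality of $C'$. You instead lift the representatives to the paraboloid in $\mathbb{R}^{d+1}$, take a centerpoint $z$, and average over the $d$-parameter family of spheres corresponding to hyperplanes through $z$; this makes the balance condition automatic for \emph{every} candidate sphere (and with the better constant $\frac{1}{d+2}$), and, as you note, no lower bound on the scale $L$ is needed because $A$ only counts balls no larger than $C$. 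The price is that the crossing-probability estimate, which is trivial in the paper, becomes the delicate step in your argument. The remainder — bucketing the crossing balls by diameter into $\lambda$-related classes, charging each class to a $\lambda^i$-net of $P$ via $(\lambda,\kappa)$-thickness, invoking $\dimF(P)=\delta$ inside a ball of radius $O(L)$, and splitting the geometric series $\sum_i\lambda^{i(1-\delta)}$ into the three regimes — is essentially identical to the paper's computation, just carried out at absolute scales $\lambda^i$ instead of scales $\lambda^i n^{-1/\delta}$ relative to a unit separator.

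One step needs more care than you give it: the claim $\Pr_c[C_c\text{ crosses }b]=O(\diam(b)/L)$ does not follow from the gradient lower bound alone. The co-area formula turns a lower bound on $\|\nabla f\|$ into an upper bound on the volume of $\{c\in\ball(z_0,L):|f(c)|\leq \rho_b\}$ only after you also bound the $(d-1)$-measure of the level sets $f^{-1}(t)\cap\ball(z_0,L)$ by $O_d(L^{d-1})$; a function with large gradient can still have level sets of huge area, which would ruin the estimate. Here the bound is true — after clearing the two square roots, each level set of $f(c)=\|c-c_b\|-\sqrt{L^2+\|c-z_0\|^2}$ lies on a quadric hypersurface, and a quadric meets any line in at most two points, so by a Cauchy--Crofton argument its area inside a ball of radius $L$ is $O_d(L^{d-1})$ — but this needs to be said, and it introduces constants depending on $d$ (as does your prefactor, which should be $O_d(\lambda)$ rather than $O(1)$, and your $6^\delta$ which becomes $9^\delta$ once net points within distance $\lambda^i$ of $\ball(z_0,6L)$ are counted). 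These affect only the explicit constants displayed in the theorem, not the dependence on $n$, and are harmless for every use of the theorem later in the paper, where $d$, $\lambda$, $\kappa$ and $\eps$ are fixed; still, a complete write-up should state the level-set area bound, track the $d$-dependent factors, and dispose of the degenerate case $L=0$ (where $A=\emptyset$ and the balance condition still holds via the representatives lying on $C$).
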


\begin{proof}
It is known that any ball in $\mathbb{R}^d$ of radius $r$ can be covered by at most $k(d) = 2^{O(d)}$ balls of radius $\frac{r}{2}$. Let $C'$ be the $d$-ball of minimum radius that contains at least $\frac{1}{k(d)+1} n$ of the elements in $B$, breaking ties by choosing the ball that contains the maximum number of elements in $B$.
Let $\origin$ denote the origin in $\mathbb{R}^d$.
Without loss of generality we can scale and translate the elements of $B$ and $P$ until the radius of $C'$ is $1$ and it is centered at $\origin$. Now, let $B^*$ denote the set of $d$-balls in $B$ of diameter less than or equal to $4$ after scaling. We pick uniformly at random $r \in [1,2]$ and let $C=\sphere(\origin,r)$. Now we are ready to obtain an upper bound on the number of elements of $B^*$ that intersect $\sphere(\origin,r)$ in expectation.

Consider any $d$-ball $b \in B^*$ of diameter $x$. The probability that $\sphere(\origin,r)$ intersects $b$ is at most $x$. Now let $ M_1=\{b\in B^* : \diam(b) \leq n^{\frac{-1}{\delta}} \text{ and } b\cap \sphere(\origin,r)\neq \emptyset\} \text{ and }
M_2=\{b\in B^* :   n^{\frac{-1}{\delta}} < \diam(b) \leq 4  \text{ and } b\cap \sphere(\origin,r)\neq \emptyset\}$. $|M_1|$ in expectation is at most $O(n^{1 - \frac{1}{\delta}})$ as $|B^*| \leq n$. It remains to bound the expected value of $|M_2|$.

Let $B_i = \{b\in B^* :   \lambda^i n^{\frac{-1}{\delta}} < \diam(b) \leq \min \{ \lambda^{i+1} n^{\frac{-1}{\delta}} , 4 \}  \text{ and } b\cap \sphere(\origin,r)\neq \emptyset\}$. Let $n_i$ denote $|B_i|$. We will construct a $\lambda^i n^{\frac{-1}{\delta}}$-net of $P$ as follows. Let $B'_i = B_i$. Let $\pi$ be some arbitrary ordering of the elements of $B'_i$. In the sequence determined by $\pi$ pick the next $d$-ball $b$ from $B'_i$. Remove all $d$-balls from $B'_i$ that are entirely within a ball of diameter $5 \cdot \lambda^{i+1}n^{\frac{-1}{\delta}}$ centered at the center of $b$. Repeat this procedure for the next element determined by $\pi$ until all the remaining $d$-balls in $B'_i$ have been visited. From the fact that $B$ is $(\lambda,\kappa)$-thick  we have that there can be at most $\kappa 5^d \lambda^d$ elements in $B'_i$ that are contained within a ball of diameter $5 \cdot \lambda^{i+1} n^{\frac{-1}{\delta}}$. This implies that we retain at least a constant fraction of the elements of $B_i$ in $B'_i$. Now from each $b \in B'_i$ pick a point $p_b$ that also belongs to $P$ and take the union of all such points to get a set of points $N_i$. From the choice of $d$-balls in the above argument $|N_i| \geq \frac{1}{\kappa 5^d \lambda^d} n_i$ and $N_i$ is $\lambda^i n^{\frac{-1}{\delta}}$-packing. We can add more points from $P$ to $N_i$ to obtain a $\lambda^i n^{\frac{-1}{\delta}}$-net $N'_i$. We have that  $|N_i| \leq |N'_i \cap \ball(\origin,6)| \leq O((\frac{6}{\lambda^i n^{\frac{-1}{\delta}}})^{\delta})$ since $\dimF(P)= \delta $ and the points of $N_i$ are contained within the ball of radius $6$ centered at the origin. This implies that $|B_i| \leq \ O( \kappa (5\lambda)^d 6^{\delta}
\lambda^{-i \delta}n)$. Since the $d$-balls in $B_i$ are intersected by $\sphere(\origin,r)$ with probability at most $\lambda^{i+1} n^{\frac{-1}{\delta}}$ we have that the expected number of elements of $B_i$ that are intersected by $\sphere(\origin,r)$ is $O( \kappa (5\lambda)^d 6^{\delta} \lambda^{i+1 -i\delta} n^{1 -  \frac{1}{\delta}})$. We thus get
\begin{align*}
\displaystyle \mathop{\mathbb{E}}[|M_2|] \leq \sum_{i = 0}^{\frac{\log{n}}{\delta} +2} |B_i| \lambda^{i+1} n^{\frac{-1}{\delta}} \leq \sum_{i = 0}^{\frac{\log{n}}{\delta} +2} O( \kappa (5\lambda)^d 6^{\delta}  \lambda^{i+1 -i\delta} n^{1 -  \frac{1}{\delta}}).
\end{align*} 
When $\delta > 1$ this implies  
\[
\mathop{\mathbb{E}}[|M_2|]\leq O( \kappa (5\lambda)^d 6^{\delta} (\frac{\lambda}{1-\lambda^{(1-\delta)}})n^{1 -  \frac{1}{\delta}}).
\]
When $\delta = 1$ we have
\[
\mathop{\mathbb{E}}[|M_2|]\leq O( \kappa (5\lambda)^d 6^{\delta} \lambda(\frac{\log{n}}{\delta} + 3)n^{1 -  \frac{1}{\delta}}) \leq O(\kappa (5\lambda)^d 6^{\delta} \log{n}).
\]
When $\delta < 1$ we have
\[
\mathop{\mathbb{E}}[|M_2|]\leq O( \kappa (5\lambda)^d 6^{\delta} \lambda(\frac{\lambda^{(\frac{\log{n}}{\delta} +3)(1-\delta)}-1}{\lambda^{(1-\delta)}-1})n^{1 -  \frac{1}{\delta}}) \leq O(\frac{\kappa (5\lambda)^d 6^{\delta}}{\lambda^{1-\delta}-1}).
\]

For any $r \in [1,2]$ we have that $A \subseteq B^*$. Thus
\begin{align*}
\displaystyle \mathop{\mathbb{E}}[|A|] = \mathop{\mathbb{E}}[|M_1|] + \mathop{\mathbb{E}}[|M_2|] \leq O(n^{1-\frac{1}{\delta}}) + \mathop{\mathbb{E}}[|M_2|],
\end{align*} 
which implies that
\begin{align*}
\mathop{\mathbb{E}}[|A|] &= \left\{\begin{array}{ll}
O( \kappa (5\lambda)^d 6^{\delta} (\frac{\lambda}{1-\lambda^{(1-\delta)}})n^{1 -  \frac{1}{\delta}}) & \text{ if } \delta>1\\
O(\kappa (5\lambda)^d 6^{\delta} \log{n}) & \text{ if } \delta=1\\
 O(\frac{\kappa (5\lambda)^d 6^{\delta}}{\lambda^{1-\delta}-1}) & \text{ if } \delta<1
\end{array}\right.
\end{align*}

Finally we need to ensure that $C$ separates a constant fraction of the elements of $B$.
The choice of $C'$ ensures that at least $\frac{1}{k(d) + 1} n = \frac{1}{2^{O(d)}} n$ of the elements in $B$ are entirely contained in the interior of $C$. This implies that at most $(1-2^{-O(d)}) n$ of the elements of $B$ are in the exterior of $C$. Since the $(d-1)$-ball of radius $2$ is covered by the union of at most $k(d)$ $(d-1)$-balls of unit radius we have that there are at most $\frac{k(d)}{k(d)+1} n = (1-2^{-O(d)}) n$ of the elements in $B$ contained in the interior of $C$.
 We note that the upper bound on $\mathop{\mathbb{E}}[|A|]$ remains unaltered for any choice of $C'$. We further remark that using a more complicated argument similar to the one used by Smith and Wormald \cite{smith1998geometric} a cube separator can be found that separates a constant fraction of $d$-balls where the constant is independent of $d$.
\end{proof}

\section{Exact algorithms}\label{sec:exact}

In this Section we give exact algorithms for TSP and RSMT problems on fractal dimension pointsets.

\subsection{TSP on fractal dimension pointsets}
We first use Theorem \ref{sep} with the following Lemmas due to Smith and Wormald \cite{smith1998geometric}  to obtain a separator for any optimal TSP solution.

\begin{lemma}[Smith and Wormald \cite{smith1998geometric}]\label{smithwortsp1}
Let $d\geq 2$ be some integer, and let $P \subset \mathbb{R}^d$. Let $W$ be the edge set of an optimal traveling salesman tour of the points of $P$. Let $B$ be the set of circumballs of the edges of $W$. Then $B$ is $(2,\kappa)$-thick where $\kappa = 2^{O(d)}$.
\end{lemma}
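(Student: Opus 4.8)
The plan is to reduce the statement to a short packing argument combined with the $2$-exchange (``$2$-opt'') optimality of the tour. Fix a point $x\in\mathbb{R}^d$ and a $2$-related subcollection $B'\subseteq B$ whose members all contain $x$; equivalently, fix a set $E$ of edges of $W$ whose Euclidean lengths all lie in some interval $[\ell,2\ell]$ and whose circumballs (the balls having the corresponding edge as a diameter) all contain $x$. It suffices to show $|E|=2^{O(d)}$. First I would record the elementary geometric fact that if $x$ lies in the ball with diameter $ab$, then both $a$ and $b$ lie within distance $|ab|$ of $x$ (from $|x-a|\le|x-m|+|m-a|$ where $m$ is the midpoint); hence every endpoint of every edge in $E$ lies in $\ball(x,2\ell)$.

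Next I would invoke optimality in the form of the standard $2$-opt inequality. Orient the tour $W$ once and for all, so that every edge $e\in E$ is traversed from a well-defined tail $u(e)$ to a well-defined head $v(e)$. For any two distinct edges $e,f\in E$, deleting $e$ and $f$ from the cycle and reconnecting tail-to-tail and head-to-head (with respect to this orientation) produces a single Hamiltonian cycle; optimality of $W$ therefore gives $|u(e)v(e)|+|u(f)v(f)|\le |u(e)u(f)|+|v(e)v(f)|$, and since each edge has length at least $\ell$ this yields $|u(e)u(f)|+|v(e)v(f)|\ge 2\ell$ for every pair $e\ne f$ in $E$.

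Finally I would combine these two facts by a pigeonhole/packing step. Cover $\ball(x,2\ell)$ by $2^{O(d)}$ cells of diameter at most $\ell/2$ (for instance by the balls of radius $\ell/4$ centered at the points of an $(\ell/4)$-net of $\ball(x,2\ell)$, whose number is $2^{O(d)}$ by a volume bound). Assign to each edge $e\in E$ the ordered pair consisting of the cell containing $u(e)$ and the cell containing $v(e)$. If two edges $e\ne f$ received the same pair, then $|u(e)u(f)|\le \ell/2$ and $|v(e)v(f)|\le \ell/2$, so $|u(e)u(f)|+|v(e)v(f)|\le \ell<2\ell$, contradicting the inequality from the previous paragraph. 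Hence the assignment is injective and $|E|\le (2^{O(d)})^2 = 2^{O(d)}$, which gives the claimed bound on $\kappa$.

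As for the main obstacle: the geometric fact and the pigeonhole step are routine, so the only real content is the $2$-opt inequality, and the point that needs care there is verifying that deleting $e$ and $f$ and reconnecting tail-to-tail/head-to-head always yields one Hamiltonian cycle rather than two disjoint cycles — this is exactly where the fixed orientation is used, and it is what forces the ``tail--tail/head--head'' pairing (the other reconnection merely reproduces $W$ and yields no information). A degenerate case, when $e$ and $f$ are consecutive along the tour, has to be handled separately, but there the reconnection identity holds trivially.
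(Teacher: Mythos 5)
Your proof is correct. The paper itself gives no proof of this lemma (it is imported from Smith and Wormald), so there is nothing internal to compare against; your argument --- orient the tour, use the tail--tail/head--head $2$-opt exchange to get $|u(e)u(f)|+|v(e)v(f)|\geq 2\ell$ for any two edges of comparable length whose circumballs share a point $x$, observe that all four endpoints lie in $\ball(x,2\ell)$, and then pigeonhole the (tail-cell, head-cell) pairs over an $(\ell/4)$-net of $\ball(x,2\ell)$ --- is essentially the standard exchange-plus-packing argument behind the cited result, and you correctly note both why the fixed orientation guarantees a single Hamiltonian cycle after the exchange and why the consecutive-edge degenerate case is harmless (the exchange then reproduces $W$ and the inequality holds trivially).
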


\begin{lemma}[Smith and Wormald \cite{smith1998geometric}]\label{smithwortsp2}
Let $d\geq 2$ be some integer, and let $P \subset \mathbb{R}^d$. Let $W$ be the edge set of an optimal traveling salesman tour of the points of $P$. For any $x \in \mathbb{R}^d$ let $W_x = \{ w \in W : \diam(w) \geq 1 \text{ and } w\cap \ball(x,1)\neq \emptyset \}$. Then $|W_x| \leq 2^{O(d)}$ for all $x \in \mathbb{R}^d$.
\end{lemma}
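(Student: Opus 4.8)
The plan is to show that every edge of $W_x$ is determined, up to bounded multiplicity, by the pair consisting of a point where it meets $\ball(x,1)$ together with the direction in which it then leaves that point; a packing bound on $\ball(x,1)\times S^{d-1}$ will then finish the argument. Concretely, for each $w=(a_w,b_w)\in W_x$ fix a point $q_w\in w\cap\ball(x,1)$ (which exists by definition of $W_x$). Since $\diam(w)=\|a_w-b_w\|\geq 1$, one of the two sub-segments of $w$ with endpoint $q_w$ has length at least $1/2$; relabel the endpoints of $w$ so that this sub-segment is $[q_w,b_w]$, and set $u_w=(b_w-q_w)/\|b_w-q_w\|\in S^{d-1}$ and $b_w'=q_w+\tfrac12 u_w$. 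Then $b_w'$ lies on $[q_w,b_w]$, hence on the edge $w$ itself.

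The heart of the proof is the following separation property: if $w,w'$ are distinct edges of $W_x$, then $\|q_w-q_{w'}\|>\tfrac15$ or the angle between $u_w$ and $u_{w'}$ exceeds $\tfrac15$. Suppose both fail. Mark on the tour the four points $q_w,b_w',q_{w'},b_{w'}'$ (subdividing the edges $w,w'$ if needed). Since $b_w'$ lies on $w$ strictly between $q_w$ and $b_w$ while $q_{w'},b_{w'}'$ lie on the distinct edge $w'$, the marked point following $q_w$ (traversing $w$ from $q_w$ toward $b_w$) is $b_w'$, and likewise $b_{w'}'$ follows $q_{w'}$; thus the cyclic order of the marked points along the tour is $q_w,b_w',q_{w'},b_{w'}'$. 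Now delete the two marked stubs $[q_w,b_w']$ and $[q_{w'},b_{w'}']$, each of length $\tfrac12$, and reconnect the two surviving arcs (one running from $b_w'$ to $q_{w'}$, the other from $b_{w'}'$ to $q_w$) by inserting the segments $[q_w,q_{w'}]$ and $[b_w',b_{w'}']$ and reversing one arc; this yields a valid closed tour on the same points. Optimality of $W$ then forces
\[
1 \;=\; \|q_w-b_w'\| + \|q_{w'}-b_{w'}'\| \;\leq\; \|q_w-q_{w'}\| + \|b_w'-b_{w'}'\| \;\leq\; \tfrac15 + \Big(\tfrac15 + \tfrac12\|u_w-u_{w'}\|\Big) \;\leq\; \tfrac15+\tfrac15+\tfrac1{10} \;<\; 1,
\]
a contradiction, where we used $\|b_w'-b_{w'}'\|\leq\|q_w-q_{w'}\|+\tfrac12\|u_w-u_{w'}\|$ and $\|u_w-u_{w'}\|\leq\tfrac15$ (since the angle is at most $\tfrac15$).

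To conclude, the images $\{(q_w,u_w):w\in W_x\}\subseteq \ball(x,1)\times S^{d-1}$ are pairwise at distance $>\tfrac15$ in the metric $\max\{\|\cdot-\cdot\|,\angle(\cdot,\cdot)\}$. The product of a $(1/10)$-net of $\ball(x,1)$ (of size $2^{O(d)}$ by a volume argument) with a $(1/10)$-net of $S^{d-1}$ in the angular metric (of size $2^{O(d)}$) is a $(1/10)$-net of the product space of size $2^{O(d)}$, and each of its balls contains at most one image. Hence $|W_x|\leq 2^{O(d)}$, as claimed.

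I expect the main obstacle to be the reconnection step. A naive $2$-opt exchange directly on the edges $w,w'$ fails: their other endpoints $a_w,a_{w'}$ are entirely unconstrained — possibly arbitrarily far from $x$ — so the deleted edges can be arbitrarily long and no useful inequality results. Splicing instead at the interior points $b_w',b_{w'}'$ works precisely because the hypothesis $\diam(w)\geq 1$ guarantees each edge carries a length-$\tfrac12$ ``stub'' emanating from $q_w$, and deleting only these stubs makes the length accounting favourable regardless of the global geometry. The remaining care is with degenerate configurations where two of the four marked points coincide (e.g.\ when $w$ and $w'$ share an endpoint, or $q_w$ is itself a vertex of the tour); these collapse to a shorter reconnection that still satisfies the same inequality.
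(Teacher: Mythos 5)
The paper only cites Smith and Wormald for this lemma, so your proposal stands on its own; unfortunately it has a genuine gap at the reconnection step. You assert that the cyclic order of the four marked points along the tour is $q_w,b_w',q_{w'},b_{w'}'$, but the tour's orientation induces a traversal direction on each edge, while your stub direction $u_w$ was chosen by an orientation-blind rule (``take a sub-segment of length $\geq 1/2$''). If the tour traverses the stub of $w$ as $q_w\to b_w'$ but the stub of $w'$ as $b_{w'}'\to q_{w'}$, the cyclic order is $q_w,b_w',b_{w'}',q_{w'}$, and then deleting the two stubs and inserting $[q_w,q_{w'}]$ and $[b_w',b_{w'}']$ joins each arc to itself, producing \emph{two disjoint cycles} rather than a tour; the only non-trivial reconnection in that case inserts $[q_w,b_{w'}']$ and $[b_w',q_{w'}]$, whose total length is about $\tfrac12+\tfrac12$, so optimality is not violated. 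This is not a removable technicality: your ``separation property'' is false as stated. Take an optimal tour on two clusters of diameter $0.01$ at mutual distance $D\gg 1$; it crosses between them via two nearly parallel edges of length about $D$ lying within $0.01$ of each other, traversed in opposite directions. Choosing $x$ on one of them midway, both halves of each edge at $q$ exceed $1/2$, and resolving your arbitrary relabeling so that both $u$'s point toward the same cluster gives $\|q_w-q_{w'}\|\leq 0.01$ and angle essentially $0$, yet the tour is optimal. Hence no packing over your orientation-blind pairs $(q_w,u_w)$ can finish the proof.

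The repair is to make the direction the \emph{traversal} direction and to use the exchange that is valid irrespective of cyclic position: for directed edges $a_w\to b_w$ and $a_{w'}\to b_{w'}$ of an oriented tour, replacing them by $a_w\to a_{w'}$ and $b_w\to b_{w'}$ (tails joined, heads joined, one arc reversed) always yields a single cycle. Writing $a_w=p_w-t_w\hat u_w$ and $b_w=p_w+f_w\hat u_w$ with $p_w\in \ball(x,1)$ and $\hat u_w$ the unit traversal direction, optimality gives, after the triangle-inequality computation, $\left(\min(t_w,t_{w'})+\min(f_w,f_{w'})\right)\left(2-\|\hat u_w-\hat u_{w'}\|\right)\leq 2\|p_w-p_{w'}\|$. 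Since each edge has length at least $1$, either its forward part $f_w$ or its backward part $t_w$ is at least $1/2$; bucketing the edges of $W_x$ by which of these holds, covering $\ball(x,1)$ by $2^{O(d)}$ balls of small constant radius, and applying the displayed inequality within a bucket forces the traversal directions of any two edges sharing a bucket and a small ball to be at constant angular distance, so a sphere-packing bound on $S^{d-1}$ gives $2^{O(d)}$ per bucket and hence overall. Note also that your closing worry about the ``naive'' 2-opt is misplaced for this corrected version: the far endpoints largely cancel in the length accounting above, precisely because the new edges join tail to tail and head to head; your length-$\tfrac12$ stub device is not needed once traversal directions are tracked. The rest of your argument (subdividing the tour, shortcutting to recover a Hamiltonian cycle, and the product packing over position and direction) is fine and matches the standard skeleton.
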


\begin{theorem}\label{septsp}
Let $d\geq 2$ be some integer, and let $\delta\in (0,d]$ be some real number. Let $P$ be a set of $n$ points in $\mathbb{R}^d$ with $\dimF(P)= \delta$. Let $W$ be the set of edges of any optimal Euclidean TSP tour of $P$. Then there exists a $(d-1)$-sphere  $C$ such that at most $(1-2^{-O(d)})n$ points in $P$ are contained in the interior of $C$, at most $(1-2^{-O(d)})n$ points in $P$ are contained outside $C$, and 
\begin{align*}
|W_C| &= \left\{\begin{array}{ll}
O( n^{1-1/\delta}) & \text{ if } \delta>1\\
O(\log n) & \text{ if } \delta=1\\
O\left(\frac{1}{2^{1-\delta}-1}\right) & \text{ if } \delta<1
\end{array}\right.,
\end{align*}
where
 $W_C=\{w\in W : w\cap C\neq \emptyset\}$. 
\end{theorem}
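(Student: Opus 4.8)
The plan is to derive Theorem~\ref{septsp} as a direct application of the separator theorem (Theorem~\ref{sep}) to the collection of circumballs of the edges of an optimal TSP tour, using the two structural lemmas of Smith and Wormald to certify the thickness hypotheses.

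First I would set $B$ to be the set of circumballs of the edges of $W$. By Lemma~\ref{smithwortsp1}, $B$ is $(2,\kappa)$-thick with $\kappa = 2^{O(d)}$; in particular it is $(\lambda,\kappa)$-thick with $\lambda = 2$. Each such ball contains the two endpoints of its edge, which lie in $P$, so $b \cap P \neq \emptyset$ for every $b \in B$, and $|B| = |W| = \Theta(n)$ since an optimal tour has exactly $n$ edges. Thus all hypotheses of Theorem~\ref{sep} are met, and we obtain a $(d-1)$-sphere $C$ such that at most $(1-2^{-O(d)})n$ of the circumballs lie entirely inside $C$, at most $(1-2^{-O(d)})n$ lie entirely outside, and the ``crossing plus small'' set $A = \{b \in B : \diam(b) \le \diam(C),\ b \cap C \neq \emptyset\}$ has size $O(\kappa (5\lambda)^d 6^\delta \cdot \frac{\lambda}{1-\lambda^{1-\delta}} n^{1-1/\delta})$ when $\delta > 1$, $O(\kappa(5\lambda)^d 6^\delta \log n)$ when $\delta = 1$, and $O(\frac{\kappa(5\lambda)^d 6^\delta}{\lambda^{1-\delta}-1})$ when $\delta < 1$. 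Since $d$ and $\lambda=2$ are constants (for fixed $d$), the prefactors $\kappa (5\lambda)^d 6^\delta = 2^{O(d)}$ are absorbed into the $O(\cdot)$, giving the claimed bounds $O(n^{1-1/\delta})$, $O(\log n)$, and $O(1/(2^{1-\delta}-1))$ respectively for $|A|$.

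Next I would translate statements about circumballs into statements about edges, and handle the large edges separately. An edge $w \in W$ is a diameter of its circumball $b_w$, so $b_w \cap C \neq \emptyset$ whenever $w \cap C \neq \emptyset$. After the scaling in the proof of Theorem~\ref{sep} that normalizes $C'$ to unit radius, $C$ has radius $r \in [1,2]$, hence $\diam(C) \in [2,4]$. Partition $W_C = \{w \in W : w \cap C \neq \emptyset\}$ into the edges whose circumball has diameter at most $\diam(C)$ and those whose circumball has diameter larger than $\diam(C)$. The first group corresponds exactly to circumballs in $A$ (each $w$ giving a distinct $b_w$), so it has size $|A|$, bounded as above. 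For the second group, every such edge $w$ has $\diam(w) = \diam(b_w) > \diam(C) \ge 2$; pick any point $x$ on $C$ that $w$ passes through — then $w \cap \ball(x,1) \neq \emptyset$ trivially and $\diam(w) \ge 2 \ge 1$, so by Lemma~\ref{smithwortsp2} (after rescaling so the relevant radius is $1$, or covering $C$ by $2^{O(d)}$ unit balls and applying the lemma to each) there are at most $2^{O(d)} = O(1)$ such edges. Adding the two groups gives the stated bound on $|W_C|$.

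Finally I would verify that $C$ separates the point set, not just the ball set. If a circumball $b_w$ lies entirely in the interior of $C$, then so do both endpoints of $w$; since the tour is connected and visits every point of $P$, a point $p \in P$ can fail to be strictly inside $C$ only if some edge incident to $p$ is not entirely inside $C$, i.e.\ that edge crosses $C$ or lies outside. The number of points touched by edges in $W_C$ is at most $2|W_C|$, which is lower-order, and the number of points incident only to edges lying entirely outside $C$ is at most (twice) the number of such edges, which is $(1-2^{-O(d)})n$; combining, the number of points of $P$ strictly inside $C$ is at least $2^{-O(d)} n - O(|W_C|) = 2^{-O(d)} n$, and symmetrically for the outside, which after adjusting the hidden constant yields the balance conditions. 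The main obstacle is this last bookkeeping step: one must be careful that the "$A$" set of the separator theorem is phrased in terms of circumballs crossing $C$, whereas the TSP application needs a bound on edges crossing $C$ and a genuine vertex-separation guarantee, and reconciling the two requires the extra appeal to Lemma~\ref{smithwortsp2} for large edges together with the observation that crossing edges touch only $O(|W_C|) = o(n)$ vertices.
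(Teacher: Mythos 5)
Your bound on $|W_C|$ is correct and is essentially the paper's own argument: Lemma~\ref{smithwortsp1} gives $(2,2^{O(d)})$-thickness of the circumballs, Theorem~\ref{sep} applied to them controls the crossing edges whose diameter is at most $\diam(C)$ via the set $A$, and the long crossing edges are $2^{O(d)}$ in number by Lemma~\ref{smithwortsp2} (your variant, covering $C$ by $2^{O(d)}$ unit balls and applying the lemma to each, works; the paper simply rescales so that $C$ itself has unit radius).

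The gap is in your final paragraph, where you deduce the balance condition on the \emph{points} of $P$ from the balance condition on the \emph{circumballs} provided by Theorem~\ref{sep}. You bound the points that fail to be strictly inside $C$ by $2|W_C|$ plus the number of edges lying entirely outside $C$, and you take the latter to be at most $(1-2^{-O(d)})n$ by Theorem~\ref{sep}. But Theorem~\ref{sep} bounds the number of \emph{circumballs} entirely outside $C$, which is not the same quantity: an edge can lie entirely outside $C$ while its circumball crosses $C$ (the ball has the edge as a diameter and can dip into the interior of $C$ even when the segment stays outside), and symmetrically an edge can lie entirely inside $C$ while its ball bulges across $C$. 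The discrepancy is the number of circumballs crossing $C$, and for balls of diameter larger than $\diam(C)$ this is controlled neither by $|A|$, nor by $|W_C|$, nor by Lemma~\ref{smithwortsp2} (which counts long \emph{edges} meeting a unit ball, not long circumballs meeting $C$). (A minor further slip: the number of points both of whose incident tour edges lie outside is at most the number of such edges, not twice it.) The paper sidesteps this bookkeeping entirely: by the remark at the end of the proof of Theorem~\ref{sep}, the bound on $|A|$ holds for any choice of $C'$, so one chooses $C'$ to be the smallest ball containing a $2^{-O(d)}$-fraction of the \emph{points} of $P$ (exactly as done for RSMT in Theorem~\ref{seprsmt}); then $C \supseteq C'$ has at least $2^{-O(d)}n$ points of $P$ in its interior and, by the covering argument, at most $(1-2^{-O(d)})n$, which gives the stated point balance directly. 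You should either invoke that choice of $C'$ or otherwise control the circumballs that cross $C$; as written, the balance claim is not established.
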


\begin{proof}
Let $B$ be the set of circumballs of the edges in $W$. From  Lemma \ref{smithwortsp1}  we have that $B$ is $(2,2^{O(d)})$-thick. Every ball in $B$ contains an edge in $W$ and therefore also two points in $P$. Therefore we can use Theorem \ref{sep} on $B$ to find a separator $C$.
It remains to bound the number of edges in $W$ that are intersected by $C$. Let $W_1 = \{ w \in W : \diam(w) \leq \diam(C) \text{ and } w\cap C\neq \emptyset  \}$ and $W_2 = \{ w \in W : \diam(w) > \diam(C) \text{ and } w\cap C\neq \emptyset \}$. Therefore $W_C = W_1 \cup W_2$. Let $B_1$ denote the circumballs of the edges in $W_1$ and $B_2$ denote the circumballs of the edges in $W_2$. If an edge in $W_1$ is intersected by $C$ then the corresponding circumball in $B_1$ is also intersected by $C$. From Theorem \ref{sep} we have that
\begin{align*}
|W_1| &= \left\{\begin{array}{ll}
O( n^{1-1/\delta}) & \text{ if } \delta>1\\
O(\log n) & \text{ if } \delta=1\\
O\left(\frac{1}{2^{1-\delta}-1}\right) & \text{ if } \delta<1
\end{array}\right.
\end{align*}
W.l.o.g.~we can assume that $C$ has unit radius and is centered at the origin by scaling and translation. Therefore any edge in $W_2$ also intersects the unit ball centered at the origin. Combining this with Lemma \ref{smithwortsp2} we have that $|W_2| \leq O(1)$.
Since $|W_C|\leq |W_1|+|W_2|$, this concludes the proof.
\end{proof}

We now use Theorem \ref{septsp} to obtain an exact algorithm for TSP. 
We note that the $O$-notation hides a factor of $n^{O(1)^d}$.

\begin{theorem}\label{tsp}
Let $d\geq 2$ be some fixed integer, and let $\delta\in (0,d]$ be some real number.
Let $P$ be a set of $n$ points in $\mathbb{R}^d$ with $\dimF(P)= \delta $. Then for any fixed $d$ an optimal Euclidean TSP tour for $P$ can be found in time $T(n)$, where
\begin{align*}
T(n) &= \left\{\begin{array}{ll}
n^{O(n^{1-1/\delta})} & \text{ if } \delta>1\\
n^{O(\log^2 n)} & \text{ if } \delta=1\\
n^{O(\log n)} & \text{ if } \delta<1
\end{array}\right.
\end{align*}
\end{theorem}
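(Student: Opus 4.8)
The plan is to run a recursive divide-and-conquer using Theorem~\ref{septsp} as the separator, inside the framework of Smith and Wormald~\cite{smith1998geometric}. Define a subproblem by a region $\mathcal{R}\subseteq\mathbb{R}^d$ (an intersection of finitely many balls and ball-complements produced by earlier splits), the point set $P'=P\cap\mathcal{R}$, a set of \emph{terminals} lying on $\partial\mathcal{R}$, and a perfect matching on the terminals prescribing which pairs are to be joined; the value of the subproblem is the minimum total length of a family of internally disjoint paths contained in $\mathcal{R}$ that visits every point of $P'$ and whose endpoints realize the prescribed matching. The top-level instance is $P$ with $\mathcal{R}=\mathbb{R}^d$ and no terminals, constrained so that the path family is a single closed tour; its value is the optimal TSP length, and the tour is recovered with back-pointers.

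To evaluate one subproblem the algorithm first enumerates candidate separators: since balls in $\mathbb{R}^d$ have VC-dimension $d+1$, a sphere can partition $P'$ into (interior, boundary, exterior) in only $n^{O(d)}$ ways, and it suffices to keep one representative sphere $C$ of each type that lies within $\mathcal{R}$ and leaves at most $(1-2^{-O(d)})|P'|$ points of $P'$ on either side; in particular the separator guaranteed by Theorem~\ref{septsp} is of one of these types. For each such $C$ the algorithm guesses the \emph{interface}: a set of at most $m=|W_C|$ edges, each between two points of $P'$ lying on opposite sides of $C$, that are to cross $C$ — there are $n^{O(m)}$ such sets, and the lengths of the chosen edges are charged to the cost — together with, for each of the two sides, a matching of the at most $b+2m$ terminals relevant to that side (the old terminals on that side plus the newly created endpoints of crossing edges), of which there are $(b+2m)^{O(b+2m)}$; only combinations that glue back to the prescribed global matching are retained. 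Each retained combination defines two strictly smaller subproblems, on $\mathcal{R}\cap\mathrm{int}(C)$ and on $\mathcal{R}\setminus\mathrm{int}(C)$, whose values are looked up; the value of the subproblem is the minimum over all combinations of the charged interface length plus the two looked-up values. Processing subproblems in order of increasing $|P'|$ (with $|P'|$ below a suitable constant handled directly) makes this a valid dynamic program.

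Correctness is an induction on $|P'|$. Fix an optimal subsolution; its edges are straight segments between points of $P'$, and by the exchange arguments of Smith and Wormald (the analogues of Lemmas~\ref{smithwortsp1} and~\ref{smithwortsp2} for optimal path families with terminals) their circumballs form a $(2,2^{O(d)})$-thick collection of balls, each meeting $P$. Hence Theorem~\ref{sep}, together with the long-edge reduction used in the proof of Theorem~\ref{septsp}, supplies a sphere $C$ of one of the enumerated types that leaves at most $(1-2^{-O(d)})|P'|$ points of $P'$ on each side and is crossed by at most $m$ edges of the optimal subsolution. The algorithm considers this $C$, the correct crossing edges, and the correct per-side matchings among its guesses; by the inductive hypothesis it evaluates the two resulting subproblems correctly, so its minimum equals the optimum.

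For the running time, first bound $b$: a split at a subproblem of size $n'$ creates at most $2m(n')$ new terminals, with $m(n')$ as in Theorem~\ref{septsp}, and since $|P'|$ shrinks by a constant factor per level and the recursion has depth $O(\log n)$ (the constant depending on the fixed $d$), the geometric series gives $b=O(n^{1-1/\delta})$ if $\delta>1$, $b=O(\log^2 n)$ if $\delta=1$, and $b=O(\log n)$ if $\delta<1$. The number of distinct subproblems is at most (number of regions) $\times$ (number of size-$\le b$ subsets of $P$) $\times$ (number of matchings on $\le b$ terminals); a region is a nested intersection of $O(\log n)$ of the $n^{O(d)}$ candidate spheres, giving $n^{O(d\log n)}$ regions, and with the above bounds on $b$ the product is $n^{O(n^{1-1/\delta})}$, $n^{O(\log^2 n)}$, and $n^{O(\log n)}$ in the three cases. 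The work per subproblem — enumerating separators ($n^{O(d)}$), interfaces ($n^{O(m)}$), and per-side matchings ($(b+m)^{O(b+m)}$) — is of the same order, so the total is as claimed, the dependence on the fixed $d$ being absorbed into the $O$-notation as the factor $n^{O(1)^d}$. The main obstacle is exactly this bookkeeping: one must guarantee that the separator theorem is available at every level, i.e.\ for the path-family subproblems and not only for a global TSP tour, and that the accumulated terminal set stays within the stated bounds, since it is the interface size — which enters the exponent through the matchings — that controls the final exponent.
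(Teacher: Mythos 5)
Your proposal follows essentially the same route as the paper: recursively apply the sphere separator of Theorem~\ref{septsp}, enumerate the $n^{O(d)}$ candidate separating spheres determined by points of $P$, guess the few crossing edges together with the boundary conditions, and recurse on the two sides, with the exponent governed by the crossing bound $f(n,\delta)$ from Theorem~\ref{septsp}. The differences are presentational --- you encode boundary conditions by matchings on terminals and memoize subproblems, whereas the paper guesses the permutation in which the tour traverses the crossing edges and analyzes the plain recurrence $T(n)\leq n^{O(d)}\, n^{O(f(n,\delta))}\cdot 2T\left((1-2^{-O(d)})n\right)$ --- and you are, if anything, more explicit than the paper about why the separator remains available for the path-family subproblems at deeper recursion levels.
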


\begin{proof}
First we observe that the $(d-1)$-sphere separator $C$ described in Theorem \ref{septsp} can be assumed to intersect at least $d+1$ points in $P$. This is because we can always decrease the radius of $C$ without changing $W_C$ until at least one point in $P$ lies on it. 
 We exhaustively consider all separating $(d-1)$-spheres to find the separator from Theorem \ref{septsp}. Since every relevant $(d-1)$-sphere is uniquely defined by at most $d+1$ points of $P$ intersecting it, there are at most $n^{O(d)}$ spheres to consider.
Let $f(n,\delta)$ denote the number of edges intersected by the separator $C$. From Theorem \ref{septsp} we have that
\begin{align*}
f(n,\delta) &= \left\{\begin{array}{ll}
O(n^{1-1/\delta}) & \text{ if } \delta>1\\
O(\log n) & \text{ if } \delta=1\\
O(1) & \text{ if } \delta<1
\end{array}\right.
\end{align*} 
We guess a set $E'$ of at most $f(n,\delta)$ edges in the optimal tour that intersect $C$.
For each such guess $E'$, we also guess the permutation of $E'$ defined by the order in which the optimal tour traverses the edges in $E'$.
For each such permutation we solve the two sub-problems in the exterior and interior of the separator respecting the boundary conditions. The resulting running time is $T(n) \leq n^{O(d)} n^{O(f(n,\delta))} 2 T\left((1-2^{-O(d)})n\right)$ which implies that for any fixed $d$ implies the assertion.
\end{proof}

\subsection{Finding the rectilinear Steiner minimal tree in $\mathbb{R}^2$}
Given $P \subset \mathbb{R}^d $ a set of $n$ points. A
\emph{Rectilinear Steiner Tree} (RST) is a geometric graph connecting all the points in $P$ and consisting only of line segments parallel to the coordinate axes. The length of a RST is the sum of the lengths of the line segments in it. A \emph{Rectilinear Steiner Minimal Tree} (RSMT) is a RST of minimal length.

We will use the following lemmas to prove our theorem.
\begin{lemma}\label{rsmtlem}
Let $d \geq 2$ be some integer. Let $P \subset \mathbb{R}^d$. Let $S$ be an RSMT of $P$. Let $B$ be the set of circumballs of the line segments of $S$. Then  for any $\lambda > 0$, $B$ is $(\lambda,O(\lambda^d))$-thick.
\end{lemma}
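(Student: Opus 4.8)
The plan is to follow the template of Smith and Wormald's thickness arguments: restrict attention to a bounded region that contains only comparably-sized tree segments, cover that region by $O(\lambda^d)$ small balls, and then invoke a \emph{local sparsity} property of rectilinear Steiner minimal trees to bound how many of the relevant segments can meet each small ball.

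Fix a point $q\in\mathbb{R}^d$ and a $\lambda$-related subfamily $B'\subseteq B$ every circumball of which contains $q$; let $s_1,\dots,s_m$ be the corresponding segments of $S$ and put $\ell_i=\diam(s_i)=\length(s_i)$, so that after relabeling $\ell:=\ell_1\le\ell_i\le\lambda\ell$ for all $i$. The circumball of $s_i$ is the ball of radius $\ell_i/2$ about the midpoint of $s_i$, so $q$ lying in it forces the midpoint of $s_i$ to be within $\ell_i/2\le\lambda\ell/2$ of $q$, whence $s_i\subseteq\ball(q,\lambda\ell)$ for every $i$. Now cover $\ball(q,\lambda\ell)$ by $N=O(\lambda^d)$ balls of radius $\ell$ (a ball of radius $R$ in fixed dimension $d$ is covered by $O((R/r)^d)$ balls of radius $r$). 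Each $s_i$ has length $\ge\ell$ and lies in $\ball(q,\lambda\ell)$, hence meets at least one of these $N$ balls. Therefore, if I can show that each ball of radius $\ell$ is met by only $O(1)$ of the segments $s_i$, I get $m\le N\cdot O(1)=O(\lambda^d)$; since $q$ and $B'$ were arbitrary, this is exactly $(\lambda,O(\lambda^d))$-thickness.

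The remaining ingredient, which I expect to be the main obstacle, is the RSMT analogue of Lemma~\ref{smithwortsp2}: for any $d\ge 2$, any $x\in\mathbb{R}^d$, and any $r>0$, at most $O(1)$ segments of an RSMT $S$ have length $\ge r$ and meet $\ball(x,r)$, with the constant depending only on $d$. Applying this with $r=\ell$ to the center of each covering ball finishes the proof, since every $s_i$ meeting that ball has length $\ell_i\ge\ell$. To establish the local bound itself one would argue by a rerouting (exchange) argument exploiting the minimality of $S$: if too many long segments passed near $x$, then by pigeonhole many would be parallel to a common coordinate axis, and one could reconnect the component cut off by one such segment to a nearby parallel one at strictly smaller total length, contradicting optimality. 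The rectilinear setting turns this into a finite case analysis over the $2d$ axis directions rather than an angle argument, but the geometry of the rerouting and the book-keeping needed to guarantee a net decrease in length are where the real work lies; alternatively this local bound can be quoted directly from Smith and Wormald~\cite{smith1998geometric}, who prove the corresponding structural properties for rectilinear Steiner trees, in which case the lemma follows from the covering step alone. Finally, one should keep in mind throughout that a ``segment of $S$'' is an edge between two consecutive vertices (terminals or Steiner points), so that a long axis-parallel run through several terminals is counted as many short segments; this is precisely what rules out the obvious dense-grid ``comb'' as a counterexample.
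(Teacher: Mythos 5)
Your reduction is sound as far as it goes: restricting to the circumballs of a $\lambda$-related family that contain a fixed point $q$, observing that all the corresponding segments have length in $[\ell,\lambda\ell]$ and lie in $\ball(q,\lambda\ell)$, covering that ball by $O(\lambda^d)$ balls of radius $\ell$, and invoking a local bound of the form ``at most $O(1)$ segments of length $\ge \ell$ meet a ball of radius $\ell$'' does give the $(\lambda,O(\lambda^d))$-thickness claim, with constants depending only on $d$ as allowed. The local bound you need is true --- it is exactly Lemma \ref{rsmtlem2} of the paper, the RSMT analogue of Lemma \ref{smithwortsp2}.

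The genuine gap is that this local bound is precisely the content you do not prove. Your rerouting sketch (pigeonhole over the $2d$ axis directions, then reconnect a cut-off component to a nearby parallel segment) is not carried out, and you yourself flag that the geometry and book-keeping of the exchange are ``where the real work lies''; as stated it is not a proof. The fallback of quoting the bound from Smith and Wormald \cite{smith1998geometric} is also not available in the form you want: what they prove for rectilinear Steiner minimal trees is the disjointness of the \emph{diamonds} $D_{xy}=\{p:\|m_{xy}-p\|_1\le\tfrac12\|m_{xy}-x\|_1\}$ attached to the edges, not the local count itself. The paper closes exactly this hole by using the diamond property directly, and in a way that makes your covering step unnecessary: if $\alpha$ is the minimum edge length in the $\lambda$-related family, every circumball covering $p$ lies in $\ball(p,\lambda\alpha)$, each associated diamond has volume $\Omega(\alpha^d)$, and the diamonds are pairwise disjoint, so at most $O\!\left((\lambda\alpha)^d/\alpha^d\right)=O(\lambda^d)$ of them fit --- a one-step volume comparison. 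The same disjoint-diamond volume argument is also how the paper proves the local sparsity statement (Lemma \ref{rsmtlem2}) that your proposal treats as the missing ingredient, so to complete your route you would in any case have to import the diamond lemma and run essentially that packing computation inside each radius-$\ell$ ball.
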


\begin{proof}
Consider any edge $xy \in S$. Let $m_{xy}$ denote the mid-point of $xy$.
 Let the diamond of $xy$ denoted by $D_{xy}$ be defined as $D_{xy} = \{p : p \in \mathbb{R}^d \text{ and } \|m_{xy}-p\|_1 \leq \frac{1}{2} \|m_{xy}-x\|_1 \}$. The $d$-volume of $D_{xy}$ is $O( \|x-y\|^d)$. Smith and Wormald\cite{smith1998geometric} proved that for any pair of edges $xy,wz \in S$, $D_{xy}$ and $D_{wz}$ (called diamonds) are disjoint. Let $B' \subseteq B$ be $\lambda$-related. Let the minimal diameter of any ball in $B'$ be $\alpha$. Let $qr \in S$ be an edge of minimal length $\alpha$ whose circumball is in $B'$. This implies that the maximal diameter of any ball in $B'$ is at most $\lambda \alpha$. Now consider any point $p \in \mathbb{R}^d$. Any element of $B'$ that covers $p$ lies within the $\ball(p,\lambda \alpha)$. Since the diamonds of any pair of edges in $S$ are disjoint it follows that the number of circumballs covering $p$ is at most $\frac{\volume(\ball(p,\lambda \alpha))}{\volume(D_{qr})} = O(\frac{(\lambda \alpha)^d}{\alpha^d}) = O(\lambda^d)$.
\end{proof}

\begin{lemma}\label{rsmtlem2}
Let $d \geq 2$ be some integer. Let $P \subset \mathbb{R}^d$. Let $S$ be an RSMT of $P$. Let $B$ be the set of circumballs of the line segments of $S$. For any $x \in \mathbb{R}^d$ let $S_x = { s \in S : \diam(s) \geq 1 \text{ and } s\cap \ball(x,1)\neq \emptyset }$. Then $|S_x| \leq O(2^d)$ for all $x \in \mathbb{R}^d$.
\end{lemma}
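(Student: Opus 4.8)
This is the RSMT analogue of Smith and Wormald's Lemma~\ref{smithwortsp2} for optimal tours, and the plan is to mimic that argument, using structural properties of the RSMT (in particular the disjointness of diamonds exploited in the proof of Lemma~\ref{rsmtlem}) in place of the properties of optimal tours. Fix $x\in\mathbb{R}^d$. For each $s\in S_x$ we have $\length(s)=\diam(s)\geq 1$ together with a point $p_s\in s$ with $\|p_s-x\|_2\leq 1$; since $\length(s)\geq 1$, one of the two sides of $p_s$ along $s$ has length at least $1/2$, so I would choose a sub-segment $\sigma_s\subseteq s$ of length exactly $1/2$ having $p_s$ as an endpoint, which gives $\sigma_s\subseteq\ball(x,3/2)$. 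Because $\sigma_s$ is parallel to one of the $d$ coordinate axes, I would partition $S_x$ into at most $d$ classes according to the axis of $\sigma_s$; it then suffices to bound the size of each class by $2^{O(d)}$.

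Fix a class, say with every $\sigma_s$ parallel to the first axis, and let $\pi\colon\mathbb{R}^d\to\mathbb{R}^{d-1}$ be the orthogonal projection that forgets the first coordinate, so that each $\sigma_s$ projects to a single point $y_s$ lying in the ball of radius $3/2$ about $\pi(x)$ in $\mathbb{R}^{d-1}$. First I would run a constant-loss pigeonhole step: every edge of the class has first-coordinate extent an interval of length $\geq 1$ meeting $[x_1-1,x_1+1]$, so by bucketing these intervals according to which of the three points $x_1-1,x_1,x_1+1$ they contain and whether they extend at least $1/2$ to the left or to the right of that point, one passes (losing a factor of $6$) to a sub-class in which any two first-coordinate extents overlap in a common interval of length $\geq 1/2$. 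Within this sub-class I would then prove that the points $y_s$ are pairwise at distance $\Omega(1)$; a volume/packing bound in $\mathbb{R}^{d-1}$ then leaves only $2^{O(d)}$ of them, and unwinding the reductions yields $|S_x|=2^{O(d)}$ (matching the stated bound up to the constant in the exponent, which is all that is needed since $d$ is treated as fixed). To prove the separation, suppose $\|y_s-y_t\|$ is tiny for two distinct edges $s,t$ of the sub-class; restricting $s$ and $t$ to their common first-coordinate overlap interval $J$ (of length $\geq 1/2$) gives two parallel axis-parallel segments lying over $J$ at perpendicular $\ell_1$-distance strictly less than $|J|$, and deleting the portion of one of them over $J$ from $S$ and reconnecting the two resulting pieces of that edge by a single axis-parallel jump of that perpendicular length produces a rectilinear Steiner tree spanning $P$ of strictly smaller total length (delete an edge of any cycle created, to stay a tree), contradicting optimality of $S$.

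\textbf{Main obstacle.} The crux is the separation claim of the previous paragraph — the RSMT counterpart of the fact, proved by Smith and Wormald for tours, that only $2^{O(d)}$ long, nearly parallel edges of an optimal solution can pass near a single point. The one genuinely delicate ingredient is that long edges all meeting a common unit ball need not pairwise overlap in the coordinate direction they share, which is exactly what forces the preliminary pigeonhole reduction to a common overlap interval before the reroute can be applied; once that reduction is in place, the reroute is a short exchange argument of the same flavour as the one underlying the diamond property of Lemma~\ref{rsmtlem}, and the remaining work (choosing the packing constant, and checking the rerouted graph is connected, spans $P$, and is strictly shorter) is routine.
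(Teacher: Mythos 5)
Your route is genuinely different from the paper's. The paper disposes of this lemma in two lines by reusing the diamond property already invoked for Lemma \ref{rsmtlem}: every $s\in S_x$ contributes its diamond, the diamonds of distinct edges of an RSMT are pairwise disjoint, each relevant diamond occupies volume bounded below (in terms of $d$ only) inside $\ball(x,2)$, and a volume comparison inside $\ball(x,2)$ bounds $|S_x|$; all the optimality of $S$ is thus packaged once and for all in the disjointness of diamonds. You instead re-derive an optimality consequence from scratch: split by axis direction, pigeonhole the first-coordinate extents to force a common overlap interval $J$ with $|J|\geq 1/2$, and rule out two nearby parallel long edges by an exchange. Your exchange is sound once phrased on the tree: removing the sub-segment of one edge over $J$ splits $S$ into exactly two components, one of the two cut ends lies in the component not containing the other edge, and reconnecting it to the other edge by a rectilinear staircase of length $\|y_s-y_t\|_1<|J|$ (or rerouting through the other edge at cost $2\|y_s-y_t\|_1$) gives a strictly shorter connected rectilinear network spanning $P$, a contradiction. (The ``single axis-parallel jump'' is really a staircase of up to $d-1$ segments, and no terminal can sit on the deleted open portion since edge interiors contain no vertices; these are immaterial.) So within each sub-class the projections are indeed pairwise $\ell_1$-separated by a constant.

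The concrete gap is the final packing step. From pairwise $\ell_1$ separation $1/2$ among points lying in a Euclidean ball of radius $O(1)$ in $\mathbb{R}^{d-1}$, a volume argument (disjoint $\ell_1$ balls of radius $1/4$, each of volume $(1/2)^{d-1}/(d-1)!$) yields only $(O(\sqrt{d}))^{d-1}=2^{O(d\log d)}$ points, not $2^{O(d)}$; and $d^{\Theta(d)}$ points with exactly this kind of separation do fit in such a ball (take a maximal $\ell_1$-packing), so no packing bound can do better from $\ell_1$ separation alone. To recover $2^{O(d)}$ you would need $\Omega(1)$ separation in $\ell_2$, which your reroute does not provide, since the reconnection cost is the $\ell_1$ distance of the projections. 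Hence, as written, your argument proves $|S_x|\leq d^{O(d)}$, weaker than the stated $O(2^d)$ --- though entirely sufficient for this paper, where $d$ is fixed and the lemma is only applied with $d=2$ to conclude $|S_2|=O(1)$. For what it is worth, the paper's own bookkeeping is comparably loose: the ratio of $\volume(\ball(x,2))$ to the volume of the diamond of a unit-length edge is itself $d^{\Theta(d)}$, and its proof tacitly assumes the diamond of every $s\in S_x$ meets $\ball(x,2)$, which can fail when $\diam(s)\gg 1$ (the diamond is centered at the far-away midpoint), so a clipping step is needed there as well --- an issue your construction avoids entirely by never using diamonds.
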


\begin{proof}
Consider any line segment $t \in S_x$. We have that the diamond of $t$ $D_t$ occupies at least $O(1^d)$ $d$-volume in $\ball(x,2)$. Since the diamonds of all of these edges are disjoint we have that $|S_x| \leq \frac{\volume(\ball(x,2))}{O(1^d)} = O(2^d)$.
\end{proof}

\begin{lemma}[Smith and Wormald \cite{smith1998geometric}]\label{smithworrsmt}
Let $d>0$ be some integer. Let $P \subset \mathbb{R}^d$. Let $S$ be an RSMT of $P$. There are only $n^d$ possible locations for Steiner points in $S$. Consequently there are at most $n^{O(d)}$ possible line segments that can be part of $S$.
\end{lemma}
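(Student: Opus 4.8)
The plan is to establish the classical \emph{Hanan grid} property, extended from $\mathbb{R}^2$ to $\mathbb{R}^d$: some RSMT of $P$ has all of its vertices, and in particular all of its Steiner points, on the axis-parallel lattice generated by $P$. For each $k\in\{1,\dots,d\}$ let $\Xi_k\subseteq\mathbb{R}$ be the set of $k$-th coordinates of the points of $P$, so $|\Xi_k|\le n$, and let $H=\Xi_1\times\cdots\times\Xi_d$, so that $|H|\le n^d$. Granting that some RSMT $S$ may be chosen with all Steiner points in $H$, the first assertion is immediate, and the second follows by counting: an axis-parallel segment of such an $S$ is specified by the axis it is parallel to ($d$ choices), the common values of the other $d-1$ coordinates of its endpoints ($\le n^{d-1}$ choices), and the two endpoint values on the remaining axis ($\le n^2$ choices), so there are at most $d\,n^{d+1}=n^{O(d)}$ of them.

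The main step is a \emph{grid-line sliding} argument, applied to one coordinate at a time. Fix $k$ and let $S$ be an RSMT which, after routine preprocessing, has finitely many vertices and no zero-length edges. Let $\xi_1<\dots<\xi_m$ be the distinct $k$-th coordinates occurring among the vertices of $S$; since terminals are vertices, $\{\xi_1,\dots,\xi_m\}\supseteq\Xi_k$. If some $\xi_j\notin\Xi_k$, translate by a signed amount $s$, along the $k$-th axis, exactly those vertices whose $k$-th coordinate equals $\xi_j$, fixing everyone else; no terminal moves, since terminals have $k$-th coordinate in $\Xi_k$. An edge parallel to the $k$-th axis has both endpoints at $k$-coordinate $\xi_j$ (impossible, as it would be zero-length) or neither, so it translates rigidly or is untouched; an edge parallel to another axis has both endpoints at a common $k$-coordinate, so likewise. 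Hence $S$ stays rectilinear and connected, and its total length is a linear function of $s$ with slope $p-q$, where $p$ (resp.\ $q$) counts the edges parallel to the $k$-th axis that join the moved set to a vertex of strictly smaller (resp.\ larger) $k$-coordinate. Sliding to $\xi_{j-1}$ means $s=\xi_{j-1}-\xi_j<0$ and sliding to $\xi_{j+1}$ means $s=\xi_{j+1}-\xi_j>0$, so the two moves change the length by quantities of opposite sign (or both zero); perform whichever one does not increase it, merging vertices that become coincident. The outcome is again an RSMT, with strictly fewer distinct $k$-th coordinates among its vertices. Iterating lowers this count to $|\Xi_k|$, so the surviving values are exactly the elements of $\Xi_k$; and since processing coordinate $k$ leaves the other coordinates untouched, carrying this out for $k=1,\dots,d$ in turn yields an RSMT with all vertices in $H$.

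I expect the only real obstacles to be bookkeeping. One must check that the total length is linear in $s$ over the entire range $[\xi_{j-1}-\xi_j,\ \xi_{j+1}-\xi_j]$, not merely near $s=0$: this holds because $\xi_j$ is the only vertex $k$-coordinate strictly between $\xi_{j-1}$ and $\xi_{j+1}$, so for every fixed endpoint $w$ outside the moved set the sign of $\xi_j+s-x_k(w)$ cannot change as $s$ sweeps the range (it can only vanish at an endpoint, where a merge occurs). One must also dispose of the boundary cases $j=1$, $j=m$ (and the possibility, vacuous for an optimal tree, that $\xi_1<\min\Xi_k$ or $\xi_m>\max\Xi_k$): there is then a single admissible direction, whose slope has the correct sign because all axis-$k$ edges out of the moved set point the same way. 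Finally one verifies the routine facts that an RSMT may be taken reduced (no crossings, no zero-length edges, no suppressible degree-two vertices) without loss, and that contracting degenerate edges after a merge again gives a valid RSMT. The geometric content is entirely in the slope computation and in the observation that opposite sliding directions cannot both strictly increase the length.
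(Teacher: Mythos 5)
The paper gives no proof of this lemma at all: it is imported verbatim from Smith and Wormald \cite{smith1998geometric}, and the underlying fact is the Hanan-grid theorem (Hanan for $d=2$, extended to arbitrary $d$ by Snyder), which is exactly what you set out to prove. So your route is not so much different as strictly more informative than the paper's citation, and in substance it is correct: the coordinate-by-coordinate sliding argument, the linearity of the length in the shift $s$ over the whole interval between adjacent coordinate values, the exclusion of the extreme layers by optimality, and the count $d\,n^{d+1}=n^{O(d)}$ of candidate segments are all sound. Two points deserve attention. First, what you prove (and what the RSMT algorithm in Section 3.2 actually uses) is the existence form: \emph{some} RSMT has all vertices in $\Xi_1\times\cdots\times\Xi_d$; the lemma's literal phrasing about an arbitrary RSMT $S$ should be read this way, since ties permit optimal trees with structure off the grid --- you adopt the right reading, but say so explicitly. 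Second, one sentence of your case analysis is garbled: you claim an edge parallel to the $k$-th axis has both endpoints at $k$-coordinate $\xi_j$ or neither, whereas such an edge is precisely the one that can have exactly one endpoint in the moved set (it is the edges parallel to the \emph{other} axes that translate rigidly or are untouched). Your slope formula $p-q$ then treats these one-endpoint edges correctly, so this is a wording slip rather than a gap, but the sentence should be fixed so the case analysis matches the computation.
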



\begin{theorem}\label{seprsmt}
Let $\delta \in (0,2]$ be some real number. Let $P \subset \mathbb{R}^2$ such that $\dimF(P)= \delta $. Let $S$ be the set of line segments of an RSMT of $P$. Then there exists a $1$-sphere $C$ such that at least $\frac{1}{8}$ of the points in $P$ are contained in the interior of $C$ and at least $\frac{1}{8}$ of the points in $P$ are contained outside $C$. Let $S_C=\{s\in S : s\cap C\neq \emptyset\}$. Then we have,
\begin{align*}
|S_C| &= \left\{\begin{array}{ll}
O( n^{1-1/\delta}) & \text{ if } \delta>1\\
O(\log n) & \text{ if } \delta=1\\
O(\frac{1}{\lambda^{1-\delta}-1}) & \text{ if } \delta<1
\end{array}\right.
\end{align*}
\end{theorem}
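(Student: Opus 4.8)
The plan is to mirror the proof of Theorem~\ref{septsp}, substituting the rectilinear analogues \ref{rsmtlem} and \ref{rsmtlem2} of the Smith--Wormald lemmas. Let $B$ be the set of circumballs of the segments of $S$; by Lemma~\ref{rsmtlem}, applied with a fixed constant (say $\lambda=2$), $B$ is $(2,O(1))$-thick in $\mathbb{R}^2$. I would run the argument inside Theorem~\ref{sep} directly, choosing the balancing ball with respect to the terminal set $P$ and rescaling so that the resulting separating sphere $C$ is the unit sphere at the origin. The pigeonhole/covering argument of Theorem~\ref{sep}, with $k(2)=7$ (a disk of radius $2$ requires $7$ unit disks to cover it, consistent with the impossibility, cited in the introduction, of covering the unit disk by $6$ disks of radius $1/2$), then yields at least $n/8$ points of $P$ strictly inside $C$ and at least $n/8$ strictly outside. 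Segments in $S_C$ of diameter at least $1$ meet $\ball(\origin,1)$, so Lemma~\ref{rsmtlem2} bounds their number by $O(1)$.

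For the segments in $S_C$ of diameter less than $1$, I would follow the $M_1/M_2$ dichotomy of Theorem~\ref{sep}'s proof. Those shorter than $n^{-1/\delta}$ contribute, in expectation, at most $O(n)\cdot n^{-1/\delta}=O(n^{1-1/\delta})$, using only that an RSMT on $n$ terminals has $O(n)$ segments. For the rest, bucket segments by length in $[2^i n^{-1/\delta},2^{i+1}n^{-1/\delta})$; a bucket-$i$ segment meets the random sphere with probability $O(2^i n^{-1/\delta})$, so summing over $i\ge 0$ yields the three stated bounds, provided one shows that the number of bucket-$i$ segments of $S$ meeting $\ball(\origin,O(1))$ is $O((2^i n^{-1/\delta})^{-\delta})$ --- no more than a packing of a $\delta$-dimensional set at that scale allows. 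In Theorem~\ref{sep} this is precisely the step that pulls a point of $P$ out of each ball, prunes to a packing via $(\lambda,\kappa)$-thickness, and invokes $\dimF(P)=\delta$.

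The obstacle is that this last step does not transfer verbatim: an RSMT in $\mathbb{R}^2$ may contain a segment both of whose endpoints are Steiner points, and the circumball of such a segment can be disjoint from $P$, so Theorem~\ref{sep} is not applicable as a black box here. The key missing ingredient --- and the part I expect to be hardest --- is a charging claim: every segment $s$ of an RSMT lies within distance $O(\diam(s))$ of some terminal of $P$. (Every naive attempt to violate this --- a short Steiner--Steiner segment with all terminals far away --- turns out not to be length-optimal, so I believe it holds; it should follow from the structural theory of rectilinear Steiner minimal trees, such as Hwang's characterization of full components, possibly combined with local-exchange arguments and the disjoint-diamond property of Smith--Wormald.) Granting the charging claim, one recovers the per-scale bound as follows: the midpoints of the bucket-$i$ segments meeting $\ball(\origin,O(1))$ are $\Omega(2^i n^{-1/\delta})$-separated, since the corresponding Smith--Wormald diamonds are disjoint $L_1$-balls of comparable radius; keep a constant fraction whose midpoints are separated by a large multiple of $2^i n^{-1/\delta}$, snap each midpoint to a terminal within $O(2^i n^{-1/\delta})$, and the resulting $(2^i n^{-1/\delta})$-packing of $P$ inside $\ball(\origin,O(1))$ has size $O((2^i n^{-1/\delta})^{-\delta})$ by the definition of $\dimF(P)$. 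Since each terminal is within $O(2^i n^{-1/\delta})$ of only $O(1)$ bucket-$i$ segments (again by disjointness of diamonds), this bounds their number, and the argument concludes exactly as in Theorems~\ref{sep} and~\ref{septsp}.
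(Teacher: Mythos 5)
Your overall route is the paper's route: take the circumballs $B$ of the segments, use Lemma \ref{rsmtlem} for $(2,O(1))$-thickness, run the machinery of Theorem \ref{sep} with the balancing ball $C'$ chosen as the smallest disk containing $n/8$ points of $P$ (so the balance is with respect to $P$, with $k(2)=7$), bound the long crossing segments by Lemma \ref{rsmtlem2}, and bound the short ones by the scale-bucketed packing argument driven by $\dimF(P)=\delta$. The difference is at the one load-bearing step: to invoke the fractal-dimension bound you must be able to charge each crossing segment to a nearby point of $P$, and your proposal does this via the claim that every segment $s$ of an RSMT lies within distance $O(\diam(s))$ of a terminal. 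You explicitly leave this claim unproven ("I believe it holds", to follow "possibly" from Hwang's characterization and local exchanges). That is a genuine gap: it is exactly the ingredient on which the whole bound for the short segments rests, and a proof of Theorem \ref{seprsmt} cannot rest on a conjectured structural property of RSMTs, however plausible. Granting the claim, the rest of your per-scale accounting (disjoint diamonds give separated midpoints, snap to terminals, packing bound from $\dimF$, $O(1)$ segments per terminal per scale) is sound and mirrors the pruning step inside Theorem \ref{sep}.

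For comparison, the paper does not need any quantitative charging claim: it asserts that, w.l.o.g., every line segment of an RSMT of a planar point set has at most one Steiner endpoint, hence every circumball in $B$ contains a point of $P$, and then Theorem \ref{sep} applies as a black box (its hypothesis is precisely $b\cap P\neq\emptyset$ for all $b\in B$); the choice of $C'$ and the treatment of $S_1$, $S_2$ are then as in Theorem \ref{septsp}. Your worry about Steiner--Steiner segments is a reasonable thing to probe in that assertion, but your proposal replaces it with a different, stronger-sounding structural statement that you do not establish, so as written the argument is incomplete; to close it you would either have to prove your $O(\diam(s))$ charging claim (e.g., via Hwang's canonical form plus an exchange argument) or justify the paper's claim that segments with two non-terminal endpoints can be avoided, so that each circumball genuinely meets $P$.
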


\begin{proof}
Let $B$ be the set of circumballs of the line segments in $S$. From  lemma \ref{rsmtlem} we have that $B$ is $(2,O(2^2))$-thick. W.l.o.g we may assume that every line segment in $S$ has at most one Steiner point as an end point since $P \subset \mathbb{R}^2$. Every circumball in $B$ contains a line segment in $S$ and therefore at least one point in $P$. Therefore we can use theorem \ref{sep} on $B$ to find a separator $C$. We choose $C'$ to be the $2$-ball of minimum radius that contains $\frac{1}{8}$ of the points in $P$. This gives us a separator $C$ that separates a constant fraction of the points in $P$.

Now it remains to bound the number of edges in $W$ that are intersected by $C$. Let $S_1 = \{ s \in S : \diam(s) \leq diam(C) \text{ and } s\cap C\neq \emptyset  \}$ and $S_2 = \{ s \in S : \diam(s) > diam(C) \text{ and } s\cap C\neq \emptyset \}$. Therefore $S_C = S_1 \cup S_2$. Let $B_1$ denote the circumballs of the line segments in $S_1$ and $B_2$ denote the circumballs of the line segments in $S_2$. If a line segment in $S_1$ is intersected by $C$ then the corresponding circumball in $B_1$ is also intersected by $C$. From theorem \ref{sep} we have that, 
\begin{align*}
|S_1| &= \left\{\begin{array}{ll}
O( n^{1-1/\delta}) & \text{ if } \delta>1\\
O(\log n) & \text{ if } \delta=1\\
O(\frac{1}{\lambda^{1-\delta}-1}) & \text{ if } \delta<1
\end{array}\right.
\end{align*}
Without loss of generality we can assume that $C$ has unit radius and is centered at the origin due to scaling and translation. Therefore any line segment in $S_2$ also intersects the unit ball centered at the origin. Combining this with lemma \ref{rsmtlem2} we have that $|S_2| \leq O(1)$. This implies that, 
\begin{align*}
|S_C| &= \left\{\begin{array}{ll}
O( n^{1-1/\delta}) & \text{ if } \delta>1\\
O(\log n) & \text{ if } \delta=1\\
O(\frac{1}{\lambda^{1-\delta}-1}) & \text{ if } \delta<1
\end{array}\right.
\end{align*}

\end{proof}

\begin{theorem}
Let $\delta \in (0,2]$ be some real number. Let $P \subset \mathbb{R}^2$ such that $\dimF(P)= \delta $. Then an RSMT of $P$ can be found in running time $T(n)$, where 
\begin{align*}
T(n) &= \left\{\begin{array}{ll}
2^{O(n^{1-1/\delta} \log n)} & \text{ if } \delta>1\\
n^{O(\log^2 n)} & \text{ if } \delta=1\\
n^{O(\log n)} & \text{ if } \delta<1
\end{array}\right.
\end{align*}
\end{theorem}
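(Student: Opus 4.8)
The plan is to follow the same divide-and-conquer scheme as in the proof of Theorem~\ref{tsp}, now using the RSMT separator of Theorem~\ref{seprsmt} in place of the TSP separator of Theorem~\ref{septsp}, and replacing ``guess the crossing edges and their cyclic order'' with ``guess the crossing segments and the partition of terminals they induce''. First I would argue that only $n^{O(1)}$ candidate separating circles need to be examined: by Lemma~\ref{smithworrsmt} all Steiner points of an RSMT lie in a set $\Sigma$ of $n^{O(1)}$ candidate locations, and a separating circle in $\mathbb{R}^2$ can be taken to have its center at the center of the minimum enclosing circle of at most three points of $P$, and its radius equal to one of the $n^{O(1)}$ canonical values at which the inside/outside partition of $P$, or the set of segments of the fixed optimal solution that cross the circle, changes; one checks that the circle guaranteed by Theorem~\ref{seprsmt} can be replaced by such a canonical one without losing the balance property or increasing the crossing count beyond the bound of that theorem.

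For a fixed candidate circle $C$, I would then enumerate the ``interface'' across $C$: the set $S_C$ of segments of the optimal RSMT crossing $C$ --- of which there are at most $f(n,\delta)$, equal to $O(n^{1-1/\delta})$, $O(\log n)$, or $O(1)$ according to whether $\delta>1$, $\delta=1$, or $\delta<1$ (Theorem~\ref{seprsmt}) --- together with the partition of the endpoints of these segments into the groups that the optimal solution connects on each side of $C$. Since each segment of the RSMT is one of $n^{O(1)}$ possibilities (Lemma~\ref{smithworrsmt}), there are $n^{O(f(n,\delta))}$ choices for $S_C$, and the partition of the $O(f(n,\delta))$ interface endpoints contributes a further $2^{O(f(n,\delta)\log n)}$ factor. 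Deleting the segments of $S_C$ splits the RSMT into two subforests, one on each side of $C$; each side is then an instance of the problem of computing a minimum-length rectilinear Steiner forest on the points of $P$ on that side together with the $O(f(n,\delta))$ interface endpoints on $C$, realizing a prescribed partition into trees --- a mild generalization of RSMT which is solved recursively. By Theorem~\ref{seprsmt} each side contains at most $\tfrac78 n$ points of $P$, and by Lemma~\ref{rsmtlem2} only $O(1)$ of the crossing segments have diameter exceeding $\diam(C)$, so those are absorbed into $S_C$ at no asymptotic cost.

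This yields the recurrence $T(n) \le n^{O(1)}\cdot n^{O(f(n,\delta))}\cdot 2^{O(f(n,\delta)\log n)} \cdot 2\,T(\tfrac78 n)$, which unrolls to $2^{O(n^{1-1/\delta}\log n)}$ when $\delta>1$, to $n^{O(\log^2 n)}$ when $\delta=1$, and to $n^{O(\log n)}$ when $\delta<1$, as claimed. The step I expect to be the main obstacle is controlling the interface through the recursion: the subproblems are not pure RSMT instances but Steiner-forest instances with additional ``boundary'' terminals, and one must verify (i) that restricting Steiner points to the polynomial-size set $\Sigma$ is without loss of optimality for these generalized instances; (ii) that Theorem~\ref{seprsmt} still applies to them --- the relevant ingredients (the disjoint-diamonds property and Lemmas~\ref{rsmtlem} and~\ref{rsmtlem2}) are consequences of local optimality and survive the generalization, the boundary terminals merely being extra points of the terminal set, which affects the fractal-dimension counting only by $O(f(n,\delta))$; and (iii) that the total number of boundary terminals accumulated along any root-to-leaf path stays $O(n^{1-1/\delta})$ (respectively $O(\log^2 n)$, $O(1)$), which holds because the per-level crossing bound $f(\cdot,\delta)$ decreases geometrically down the recursion when $\delta>1$, so the added terminals never dominate and the recurrence above remains valid (with a possibly slightly worse constant in place of $\tfrac78$).
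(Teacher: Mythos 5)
Your proposal follows essentially the same route as the paper: canonicalize the separating circle using the polynomially many candidate Steiner locations of Lemma~\ref{smithworrsmt}, guess the at most $f(n,\delta)$ crossing segments and the boundary (connectivity) condition in $n^{O(f(n,\delta))}$ time, and recurse on the two sides, yielding the same recurrence $T(n)\le n^{O(1)}\cdot n^{O(f(n,\delta))}\cdot 2\,T\bigl(\tfrac{7n}{8}\bigr)$ and the claimed bounds. Your explicit treatment of the subproblems as rectilinear Steiner-forest instances with boundary terminals is, if anything, more careful than the paper, which simply states that the two subproblems are solved respecting the guessed boundary conditions.
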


\begin{proof}
Let $S$ be an RSMT of $P$. Let 
\begin{align*}
f(n,\delta) &= \left\{\begin{array}{ll}
2^{O(n^{1-1/\delta} \log n)} & \text{ if } \delta>1\\
n^{O(\log n)} & \text{ if } \delta=1\\
n^{O(1)} & \text{ if } \delta<1
\end{array}\right.
\end{align*}
Theorem \ref{seprsmt} implies that there exists a separator $C$ intersecting at most $f(n,\delta)$ elements of $S$ and separating a constant fraction of the points in $P$ for a certain choice of $C'$. Like in theorem \ref{tsp} we first choose $C'$ by finding the smallest $2$-ball containing at least $\frac{1}{8}$ points in $P$. We can do this exhaustively in time $n^{O(1)}$ because every relevant $2$-ball is uniquely determined by at most $3$ points in $P$. Next we fix the center of our separator to be the center of $C'$ and exhaustively consider all relevant radii. Since the separator $C$ can be assumed to intersect at least one point in $P \cup S$ where $S$ is the set of possible Steiner points we only need to consider at most $|P \cup S|$ different radii. From lemma \ref{smithworrsmt} we have that this can be done in time at most $n^{O(2)}$. From theorem \ref{seprsmt} we can also guess the line segments intersected by $C$ in time $n^{O(2f(n,\delta))}$. Then we can guess the boundary condition. Given $M$ crossing line segments there are at most $O(n^M)$ possible boundary conditions. Here $M \leq f(n,\delta)$ so we can guess these in time at most $O(n^{f(n,\delta)})$. Finally we can solve the two smaller subproblems in the interior and exterior of $C$. The running time follows the recursion $T(n) \leq n^{O(1)} \cdot n^{O(2f(n,\delta))} \cdot O(n^{f(n,\delta)}) \cdot [2T(\frac{7n}{8})]$. This implies that,
\begin{align*}
T(n) &= \left\{\begin{array}{ll}
2^{O(n^{1-1/\delta} \log n)} & \text{ if } \delta>1\\
n^{O(\log^2 n)} & \text{ if } \delta=1\\
n^{O(\log n)} & \text{ if } \delta<1
\end{array}\right.
\end{align*}
\end{proof}


\section{Parameterized problems}\label{sec:fpt}

In this section we present an algorithm for the parameterized version of the Independent Set problem on a set of unit $d$-balls in $\mathbb{R}^d$, where set of centers of the $d$-balls has bounded fractal dimension. We first prove a separator theorem which will be used in the algorithm.

\begin{theorem}\label{septhmparam}
Let $d \geq 2$ be an integer. Let $\delta \in (0,d]$ be a real number. Let $P$ be a set of $n$ points in $\mathbb{R}^d$ with $\dimF(P) = \delta$. Let $D = \{ \ball(x,1): x\in P\}$.
 Let $D' \subseteq D$ be a set of disjoint elements of $D$ such that $|D'| = k$. Then there exists $c \in \mathbb{R}^d$ and $r>0$ such that at most $H$ $d$-balls in $D'$ intersect $\sphere(c,r)$ and at 
 most $(1-2^{-O(d)})k$ $d$-balls in $D'$ are contained on either side (interior and exterior) of $\sphere(c,r)$ where
\begin{align*}
H &= \left\{\begin{array}{ll}
O(k^{1-\frac{1}{\delta}}) & \text{ if } \delta>1\\
O(1) & \text{ if } \delta \leq 1
\end{array}\right.
\end{align*} 
\end{theorem}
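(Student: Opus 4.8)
The plan is to adapt the random-shell (ball-growing) argument behind Theorem~\ref{sep}, which becomes considerably simpler here because every ball of $D'$ has radius exactly $1$, so there is no scale hierarchy of classes $B_i$ to sum over. Write $P' \subseteq P$ for the set of centers of the balls in $D'$, so $|P'| = k$. Since the balls of $D'$ are pairwise disjoint, the points of $P'$ are pairwise at distance at least $2$, i.e.\ $P'$ is a $2$-packing of $P$; greedily adding points of $P$ extends it to a $2$-net $N \supseteq P'$, so by the definition of $\dimF(P) = \delta$ we get, for every $x \in \mathbb{R}^d$ and every $t \ge 4$, the estimate $|P' \cap \ball(x,t)| \le |N \cap \ball(x,t)| = O((t/2)^\delta) = O(t^\delta)$. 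This packing/fractal-dimension bound is the only place the hypothesis on $P$ enters.

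Next I would set up the separator. Let $\kappa = 2^{O(d)}$ be chosen so that every $d$-ball of radius $2\rho$ can be covered by $\kappa$ balls of radius $\rho/2$ (a fixed function of $d$), and let $\ball(\origin,\rho)$ --- after translating so its center is the origin --- be a ball of \emph{minimum} radius containing at least $\frac{k}{\kappa+1}$ points of $P'$. If $\rho$ is below a suitable absolute constant, then because $P'$ is a $2$-packing the ball $\ball(\origin,\rho)$ contains only $2^{O(d)}$ points of $P'$, forcing $k = 2^{O(d)} = O(1)$; in that degenerate regime the asserted bound on $H$ holds trivially, so assume from now on that $\rho$ is at least that constant (in particular $\rho \ge 4$). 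Now choose $r \in [\rho,2\rho]$ uniformly at random and put $C = \sphere(\origin,r)$. Observe that the balanced-separation requirement holds for \emph{every} $r$ in this range: any ball of $D'$ lying in the interior of $C$ has its center inside $\ball(\origin,2\rho)$, which --- covering $\ball(\origin,2\rho)$ by $\kappa$ balls of radius $\rho/2 < \rho$, each containing fewer than $\frac{k}{\kappa+1}$ points of $P'$ by minimality of $\rho$ --- contains fewer than $\frac{\kappa}{\kappa+1}k = (1-2^{-O(d)})k$ points of $P'$; and since each of the $\ge \frac{k}{\kappa+1}$ centers $p \in P' \cap \ball(\origin,\rho)$ satisfies $\dist(\origin,p) \le \rho \le r$, the ball $\ball(p,1)$ --- which contains $p$ --- cannot lie entirely outside $C$, so at most $(1-2^{-O(d)})k$ balls of $D'$ lie entirely outside $C$.

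It then remains to bound the \emph{expected} number of balls of $D'$ that meet $C$. A ball $\ball(p,1)$ meets $\sphere(\origin,r)$ only if $\dist(\origin,p) \le r+1 \le 3\rho$, so the number of candidates is at most $|P' \cap \ball(\origin,3\rho)| = O(\rho^\delta)$, and trivially at most $k$. For a fixed candidate $p$, the event $\ball(p,1) \cap \sphere(\origin,r) \ne \emptyset$ is exactly $r \in [\dist(\origin,p)-1,\dist(\origin,p)+1]$, an interval of length $2$ inside a range of length $\rho$, so its probability is at most $2/\rho$. Linearity of expectation then gives an expected crossing number at most $\min\{k,\, O(\rho^\delta)\}\cdot \frac{2}{\rho} = \min\{O(k/\rho),\, O(\rho^{\delta-1})\}$. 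For $\delta \le 1$ the term $\rho^{\delta-1}$ is non-increasing, so (using $\rho \ge 4$) this is $O(1)$; for $\delta > 1$ the two terms of the minimum are, respectively, decreasing and increasing in $\rho$, so whatever the actual value of $\rho$ happens to be, their minimum is at most their common value at the crossover $\rho = \Theta(k^{1/\delta})$, which is $O(k^{1-1/\delta})$. Hence there is a concrete $r \in [\rho,2\rho]$ with at most $H$ balls of $D'$ meeting $C = \sphere(\origin,r)$; since the balanced-separation property holds for that $r$ as well, $C$ with $c=\origin$ is the required separator.

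I expect the only real subtlety to be the two-regime handling of the radius $\rho$: the probability estimate $2/\rho$ and the fractal-dimension bound both need $\rho$ to be bounded below by a constant, and the complementary case $\rho = O(1)$ has to be dispatched via the observation that a small ball cannot contain many points of a $2$-packing (so $k = O(1)$ and the claim is trivial). Everything else is a routine specialization of the Har-Peled-style argument used for Theorem~\ref{sep}.
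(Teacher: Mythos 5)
Your proposal is correct and follows essentially the same route as the paper's proof: a random radius $r\in[\rho,2\rho]$ around a minimum-radius enclosing ball chosen via a $2^{O(d)}$ covering argument for balance, candidate crossing balls counted via the $2$-packing of centers plus the fractal-dimension bound, a $2/\rho$ intersection probability, and the same $\min\{O(k/\rho),O(\rho^{\delta-1})\}$ case analysis at the crossover $\rho=\Theta(k^{1/\delta})$. The only deviations are cosmetic: you anchor the construction on the smallest ball containing a $\frac{1}{\kappa+1}$ fraction of the \emph{centers} (which forces the small-$\rho$ degenerate case you correctly dispatch, where one should also note that a sphere inside a single ball of $D'$ gives the balance condition since $k=2^{O(d)}$), whereas the paper takes the smallest sphere enclosing that fraction of the unit balls themselves, which automatically gives $r\geq 1$.
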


\begin{proof}
Let $P'$ denote the set of centers of the $d$-balls in $D'$.
We have $|P'| = |D'| = k$. Also since the $d$-balls in $D'$ are disjoint we have that $P'$ is a $2$-packing of $P$.
Consider any $c \in \mathbb{R}^d$ and any $r \geq 1$.
Consider a random $(d-1)$-sphere $\sphere(c,r')$ with radius $r' \in [r,2r]$ chosen uniformly at random. Now we can bound the number of $d$-balls in $D'$ that intersect the $\sphere(c,r')$. First we note that the center of any $d$-ball in $D'$ that potentially intersects $\sphere(c,r')$ lies within $\ball(c,2r+1)$. Therefore the number of $d$-balls that potentially intersect $\sphere(c,r')$ is at most $|P' \cap \ball(c,2r+1)|$. Since $P'$ is a $2$-packing that can be augmented into a $2$-net by only adding points, we have that $|P' \cap \ball(c,2r+1)| \leq O((\frac{2r+1}{2})^\delta) = O(r^\delta)$. Therefore we have that the number of $d$-balls that potentially intersect $\sphere(c,r')$ is at most $\min \{k,O(r^\delta)\}$. Any $d$-ball in $D'$ intersects $\sphere(c,r')$ with probability at most $\frac{2}{r}$ since $r'$ is chosen uniformly at random from the interval $[r,2r]$. So in expectation the number of $d$-balls in $D'$ that intersect $\sphere(c,r')$ is at most $\min \{k \cdot \frac{2}{r} ,O(r^\delta) \cdot \frac{2}{r} \} $. When $r \leq k^{\frac{1}{\delta}}$ and $\delta > 1$ this is at most $O(r^\delta) \cdot \frac{2}{r} = O(k^{1-\frac{1}{\delta}})$, when $r \leq k^{\frac{1}{\delta}}$ and $\delta \leq 1$ this is at most $O(r^\delta) \cdot \frac{2}{r} = O(r^{\delta -1}) = O(1)$, and when $r > k^{\frac{1}{\delta}}$ this is again at most $k \cdot \frac{2}{r} = O(k^{1-\frac{1}{\delta}})$. This implies that there exists some specific  $r' \in [r,2r]$ such that the number of $d$-balls in $D'$ that intersect $\sphere(c,r')$ is at most $O(k^{1-\frac{1}{\delta}})$ when $\delta > 1$ and $O(1)$ when $\delta \leq 1$.
 
Now it remains to specify our choice of $c$ and $r$ so that $\sphere(c,r')$ induces a balanced separator. We will use the fact that for any $r>0$ any $d$-ball of radius $2r$ can be covered by at most $g(d) = 2^{O(d)}$ $d$-balls of radius $r$. Let $c$ and $r$ be chosen such that $\sphere(c,r)$ is the $(d-1)$-sphere with minimum radius that also contains in its interior $\frac{1}{g(d)+1}k$ elements of $D'$. Since the $d$-balls have unit radius it follows that the $r \geq 1$. This ensures that there are at least $\frac{1}{2^{O(d)}}$ elements of $D'$ in the interior of $\sphere(c,r)$ and therefore at most $(1-2^{-O(d)})k$ elements of $D'$ in the exterior of $\sphere(c,r)$. We have that $\ball(c,r')$ is contained within $\ball(c,2r)$. Since $\ball(c,2r)$ can be covered by at most $g(d)$ $d$-balls of radius $r$, by our choice of $c$ and $r$ we have that $\sphere(c,r')$ encloses at most $\frac{g(d)}{g(d)+1}k = (1-2^{-O(d)})k$ $d$-balls in $D'$ concluding the proof.
\end{proof}

\begin{theorem}
Let $d \geq 2$ be an integer. Let $\delta \in (0,d]$ be a real number. Let $P$ be a set of $n$ points in $\mathbb{R}^d$ with $\dimF(P) = \delta$. Let $D = \{ \ball(x,1): x\in P\}$.
Then there exists an algorithm that computes an independent set in $D$ of size $k$, if one exists, in time $T(n,k)$, where for any fixed $d$ we have
\begin{align*}
T(n,k) &= \left\{\begin{array}{ll}
n^{O(k^{1-1/\delta})} & \text{ if } \delta>1\\
n^{O(\log k)} & \text{ if } \delta \leq 1
\end{array}\right.
\end{align*} 
 
\end{theorem}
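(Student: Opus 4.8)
The plan is to solve this by divide and conquer on the parameter $k$, driven by the separator of Theorem~\ref{septhmparam}, exactly as Theorem~\ref{tsp} uses Theorem~\ref{septsp}. Fix a hypothetical independent set $D'\subseteq D$ with $|D'|=k$. Theorem~\ref{septhmparam} gives a $(d-1)$-sphere $\sphere(c,r')$ crossing at most $H$ elements of $D'$, with at most $(1-2^{-O(d)})k$ elements of $D'$ strictly inside and at most $(1-2^{-O(d)})k$ strictly outside. Since the algorithm does not know $D'$, the first step is to enumerate a polynomial-size family of ``canonical'' candidate separators that is guaranteed to contain a working one. By the proof of Theorem~\ref{septhmparam}, the base sphere $\sphere(c,r)$ is a minimum-radius sphere enclosing a $\Theta(1)$-fraction of the centers of $D'$ within distance $r-1$ of $c$; such a sphere is a minimum enclosing ball of at most $d+1$ points of $P$, so $(c,r)$ ranges over $n^{O(d)}$ choices, and the relevant $r'\in[r,2r]$ only changes the crossed set at the $O(n)$ critical radii $\|c-x\|\pm1$, $x\in P$. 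Hence there are $n^{O(d)}$ candidate separators, one of which realizes the guarantee.

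Next, for each candidate separator I would guess the set $E'\subseteq D$ of at most $H$ solution balls it crosses ($n^{O(H)}$ choices) and the number $k_1\le(1-2^{-O(d)})k$ of solution balls strictly inside (at most $k$ choices), so $k_2=k-|E'|-k_1\le(1-2^{-O(d)})k$. Then discard every ball of $D$ that crosses the separator but is not in $E'$, and every ball of $D$ that intersects a ball of $E'$; let $D_{\mathrm{in}},D_{\mathrm{out}}$ be the surviving balls strictly inside and strictly outside. Check that $E'$ is pairwise disjoint, and recursively decide whether $D_{\mathrm{in}}$ has an independent set of size $k_1$ and $D_{\mathrm{out}}$ one of size $k_2$; accept if some guess makes all checks succeed. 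Correctness follows from two observations: a ball strictly inside and a ball strictly outside the sphere are disjoint, and all retained balls avoid $E'$, so the union of $E'$ with the two recursively found sets is independent of size $k$; and conversely, for the correct separator, $E'$ and $k_1$, the set $D'$ survives the reduction. Also, the centers of $D_{\mathrm{in}}$ and $D_{\mathrm{out}}$ are subsets of $P$, so their fractal dimension is at most $\delta$ and Theorem~\ref{septhmparam} applies again in the recursive calls.

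This gives the recursion $T(n,k)\le n^{O(d)}\cdot n^{O(H)}\cdot k\cdot\bigl(T(n,k_1)+T(n,k_2)\bigr)$ with $k_1,k_2\le\beta k$ for the constant $\beta=1-2^{-O(d)}<1$, and a brute-force base case $T(n,O(1))=n^{O(1)}$. The recursion tree has depth $O(\log k)$, so along any root-to-leaf path the per-level multiplicative cost is $n^{O(H)}$ times a polynomial. When $\delta>1$, Theorem~\ref{septhmparam} gives $H=O(k^{1-1/\delta})$, the parameter shrinks geometrically, and $\sum_{j\ge0}(\beta^{j}k)^{1-1/\delta}=O(k^{1-1/\delta})$ since $\beta^{1-1/\delta}<1$; a routine induction $T(n,k)\le n^{Ck^{1-1/\delta}}$ then closes, choosing $C$ so that $C(1-\beta^{1-1/\delta})k^{1-1/\delta}$ dominates the additive $O(d)+O(k^{1-1/\delta})+\log_n(2k)$ terms. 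When $\delta\le1$, $H=O(1)$, so each level costs only $n^{O(1)}$; multiplying over the $O(\log k)$ levels yields $n^{O(\log k)}\cdot k^{O(\log k)}=n^{O(\log k)}$, using $k\le n$ to absorb $k^{O(\log k)}=2^{O(\log^2 k)}$ into $n^{O(\log k)}$. This matches the claimed bound.

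The hard part is really the first two steps: arguing that the separator promised by Theorem~\ref{septhmparam} lies in an efficiently enumerable polynomial-size family even though we do not know $D'$, and verifying that a guessed separator together with the guessed crossing set $E'$ splits the instance into two genuinely independent subproblems once the balls conflicting with $E'$ are removed. The running-time estimate itself is just the bookkeeping above, structurally identical to the analysis of Theorem~\ref{tsp}.
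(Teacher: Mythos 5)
Your proposal is correct and follows essentially the same route as the paper: enumerate canonical candidate separators of Theorem~\ref{septhmparam} via spheres determined by at most $d+1$ points of $P$ together with $O(n)$ critical radii, guess the at most $H$ solution balls crossed, and recurse on the two sides, with the same recursion and the same resolution into $n^{O(k^{1-1/\delta})}$ for $\delta>1$ and $n^{O(\log k)}$ for $\delta\le 1$. The extra bookkeeping you supply (guessing the split $k_1$, discarding balls conflicting with the guessed crossing set, and summing the exponents geometrically along the recursion) only makes explicit what the paper leaves implicit.
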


\begin{proof}
Let $D' \subseteq D$ denote the set of $k$ disjoint $d$-balls in any fixed optimal solution. Let $P'$ denote the set of centers of the $d$-balls in $D'$. We have $|P'| = |D'| = k$. We use a divide and conquer approach using the separator from Theorem \ref{septhmparam}. First we guess the center $c$ and radius $r$ of the smallest $(d-1)$-sphere enclosing $\frac{1}{g(d)+1}$ of the $d$-balls in $D'$. W.l.o.g.~we can assume that there exist a set of $d$-balls in $D'$ that are tangential to $\sphere(c,r)$ and are enclosed by $\sphere(c,r)$, of cardinality $d+1$. Moreover $\sphere(c,r)$ is uniquely defined by the $d$-balls that it is tangential to. This implies that $\sphere(c,r-1)$ intersects at least $d+1$ points in $P$ and can be uniquely defined by at most $d+1$ points in $P$. We can exhaustively guess $c$ by searching through all $(d-1)$-spheres uniquely defined by at most $d+1$ points in $P$ in time $n^{O(d)}$. Next we can assume w.l.o.g.~that $\sphere(c,r')$ from Theorem \ref{septhmparam} is tangential to at least one $d$-ball in $D'$ (otherwise $r'$ can be increased or decreased until this condition is met without altering the set of $d$-balls in $D'$ that are intersected by the separator). This means that given a fixed center $c$ we need to search through at most $2n$ different radii to guess $r'$. We enumerate over all such separators. For each such separator we again enumerate over all ways to pick the $d$-balls in $D'$ that are intersected. This can be done in time $n^{O(k^{1-1/\delta})}$ when $\delta > 1$ and $n^{O(1)}$ when $\delta \leq 1$. Therefore we have  $T(n,k) = n^{O(d)} \cdot O(n) \cdot n^{O(k^{1-1/\delta})} \cdot 2\cdot T\left(n,(1-2^{-O(d)})k\right)$ when $\delta > 1$ or $T(n,k) = n^{O(d)} \cdot O(n) \cdot n^{O(1)} \cdot 2\cdot T\left(n,(1-2^{-O(d)})k\right)$ when $\delta \leq 1$, which solves to the desired bound.
\end{proof}

\section{Approximation schemes}\label{sec:approx}

In this section we describe polynomial time approximation schemes for covering and packing problems.
We use the approach of Hochbaum and Maass \cite{hochbaum1985approximation}.

\begin{theorem}\label{coveringscheme}
Let $d\geq 2$ be some integer, and let $\delta\in (0,d]$ be some real number. Let $P$ be a set of $n$ points in $\mathbb{R}^d$ with $\dimF(P) = \delta$. Then there exists a polynomial time approximation scheme which given a natural number $l > 0$ and any $\eps >0$, computes a $(1 + \frac{d}{l})$-approximation to the $\eps$-cover of $P$,
in time $l^{d+\delta} n^{O({(l \sqrt{d})^{\delta})}}$.
\end{theorem}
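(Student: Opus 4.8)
The plan is to adapt the shifting strategy of Hochbaum and Maass~\cite{hochbaum1985approximation}, the one new ingredient being a use of $\dimF(P)=\delta$ to bound the optimum of the subproblem solved inside each grid cell. Recall that an \emph{$\eps$-cover} of a point set $A$ is a minimum cardinality $S\subseteq A$ that $\eps$-covers $A$, so the available centers are the points of $P$, i.e.\ the balls $\{\ball(p,\eps):p\in P\}$. The key observation is that if $Q\subseteq\mathbb{R}^d$ is an axis-parallel cube of side $2\eps\ell$ and $A=P\cap Q$, then the $\eps$-cover of $A$ has size $k_0=O((\ell\sqrt d)^{\delta})$: take any $\eps$-net $N$ of $A$; then $N\subseteq A$ is an $\eps$-covering of $A$, so it witnesses that the $\eps$-cover of $A$ has size at most $|N|$, while $N$ is also an $\eps$-packing (of $P$). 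Since $A$, hence $N$, lies in $\ball(x,\eps\ell\sqrt d)$ for $x$ the center of $Q$, the packing form of the definition of $\dimF(P)=\delta$ (see the footnote in Section~1) gives $|N|=|N\cap\ball(x,\eps\ell\sqrt d)|=O((\ell\sqrt d)^{\delta})$ when $\ell\sqrt d\ge 2$; when $\ell\sqrt d<2$ a direct volume bound on an $\eps$-packing inside a ball of radius $<2\eps$ gives $O(1)=O((\ell\sqrt d)^{\delta})$.

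Given this, I would solve the cell subproblem exactly by brute force: over a cube $Q$ as above, enumerate all subsets of $\{\ball(p,\eps):p\in P\}$ of size at most $k_0$ and output a smallest one that covers $P\cap Q$; such a subset exists by the observation, and for fixed $d$ this runs in $n^{O(k_0)}=n^{O((\ell\sqrt d)^{\delta})}$ time. Then, for each shift vector $a\in\{0,\dots,\ell-1\}^d$, partition $\mathbb{R}^d$ into the grid $\mathcal{G}_a$ of half-open cubes of side $2\eps\ell$ with faces orthogonal to axis $i$ at coordinates $2\eps(\ell k+a_i)$, $k\in\mathbb{Z}$; solve the subproblem on every nonempty $Q\in\mathcal{G}_a$ to get $S_Q\subseteq P$, put $S_a=\bigcup_Q S_Q$, and output the smallest $S_a$. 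Each $S_a$ is a valid $\eps$-covering of $P=\bigcup_{Q\in\mathcal{G}_a}(P\cap Q)$ contained in $P$. With $\ell^d$ shifts and at most $n$ nonempty cubes each, the running time is $\ell^d\,n^{O((\ell\sqrt d)^{\delta})}$; tracking the cell count and the $\ell^{\delta}$ hidden in $k_0$ yields the stated $\ell^{d+\delta}\,n^{O((\ell\sqrt d)^{\delta})}$.

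For the approximation ratio, let $S^\ast$ be an optimal $\eps$-cover of $P$ with $|S^\ast|=\OPT$. For a shift $a$ and $Q\in\mathcal{G}_a$, the set $S^\ast_Q=\{p\in S^\ast:\dist(\{p\},Q)\le\eps\}$ is an $\eps$-covering of $P\cap Q$ inside $P$ (a center of $S^\ast$ covering a point of $P\cap Q$ is within $\eps$ of $Q$), so $|S_Q|\le|S^\ast_Q|$ and hence $|S_a|\le\sum_{p\in S^\ast}\bigl|\{Q\in\mathcal{G}_a:\dist(\{p\},Q)\le\eps\}\bigr|$. A radius-$\eps$ ball spans at most $2\eps\le 2\eps\ell$ along each axis, so for a fixed $p$ the count is $\prod_{i=1}^{d}(1+X_i(a_i))$ with $X_i(a_i)\in\{0,1\}$ and $\sum_{a_i=0}^{\ell-1}X_i(a_i)=1$ (with half-open cells, $p$ straddles a face orthogonal to axis $i$ for exactly one residue). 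Averaging over $a$ and factorizing over coordinates,
\[
\frac{1}{\ell^{d}}\sum_{a}|S_a|\;\le\;\OPT\cdot\prod_{i=1}^{d}\frac{1}{\ell}\sum_{a_i=0}^{\ell-1}\bigl(1+X_i(a_i)\bigr)\;=\;\Bigl(1+\tfrac1\ell\Bigr)^{d}\OPT ,
\]
so $\min_a|S_a|\le(1+1/\ell)^{d}\OPT$, the same ratio as the Euclidean algorithm of Hochbaum and Maass (stated as $1+d/\ell$ above).

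The only step that is not a routine adaptation of known arguments is the cell bound $k_0=O((\ell\sqrt d)^{\delta})$: without the fractal hypothesis one only knows that $P\cap Q$ needs $O(\ell^{d})$ balls of radius $\eps$, which would give $n^{O(\ell^{d})}$ per cell; the improvement to $n^{O((\ell\sqrt d)^{\delta})}$ comes precisely from bounding an $\eps$-packing inside $\ball(x,\eps\ell\sqrt d)$ through $\dimF(P)=\delta$. The analogous $R$-packing statement follows the same template, solving in each cell the (small) maximum packing instance instead of the minimum cover.
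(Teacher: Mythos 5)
Your proposal follows essentially the same route as the paper's proof: the Hochbaum--Maass shifting strategy, with the fractal hypothesis invoked at exactly the same point, namely to bound the optimum of each cell subproblem by $O((\ell\sqrt d)^{\delta})$ (an $\eps$-net of $P\cap Q$ is simultaneously a feasible cover of the cell and an $\eps$-packing of $P$ inside a ball of radius $\eps\ell\sqrt d$), so that each cell can be solved by exhaustive enumeration in $n^{O((\ell\sqrt d)^{\delta})}$ time; your time accounting also matches the paper's.

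The one place you diverge is the approximation ratio, and there your final claim is incorrect as stated: your (otherwise correct) averaging argument yields $\min_a|S_a|\leq (1+1/\ell)^d\,\OPT$, and $(1+1/\ell)^d$ is strictly larger than $1+d/\ell$ for every $d\geq 2$, so it is not ``the same ratio'' as in the theorem. The paper reaches $1+d/\ell$ by a per-axis charging: for each axis there is a shift under which at most $\OPT/\ell$ of the optimal balls cross the hyperplanes orthogonal to that axis, hence at most $d\,\OPT/\ell$ balls meet more than one cell, and each such ball is charged one extra unit. Your product bound is precisely what you get when you instead charge a ball once for \emph{every} extra cell it meets (a ball straddling boundaries in $k$ axes is reused in up to $2^k$ cells), which is why it comes out as $(1+1/\ell)^d$ rather than $1+d/\ell$; so the two accountings genuinely differ, and to obtain the constant claimed in the statement you should either adopt the paper's charging scheme or observe that, for fixed $d$, rerunning your algorithm with shifting parameter $\Theta(d\ell)$ brings $(1+1/\Theta(d\ell))^d$ below $1+d/\ell$ while staying within the stated asymptotic running time. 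Apart from this constant-factor slip (and the measure-zero degeneracies where a point lies at distance exactly $\eps$ from two grid hyperplanes, which both you and the paper ignore and which can be removed by half-open cells and a generic translation), your argument is sound.
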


\begin{proof}
Let $A$ be a $d$-rectangle that encloses the points in $P$. Consider a set of hyperplanes perpendicular to an axis of the ambient space that subdivide $A$ into strips of width $2l\eps$, which are left closed and right open.
This gives a partition $P_0$ of $A$ where each strip has width $2l \eps$. Now for any integer $i$ where $0 < i < l$ we shift the hyperplanes that define the partition $P_0$ by $2i \eps$ to the right to get the partition $P_i$. Let $S = \{P_0,P_1,\ldots,P_{l-1}\}$. Let $\OPT$ be the optimal $\eps$-cover of $P$. Let $D$ be the set of $d$-balls of radius $\eps$ centered at the points in $\OPT$. Any $d$-ball in $D$ intersects the hyperplanes from at most one partition in $S$. Therefore there exists a partition $P_i$ such that at most $\frac{|D|}{l}$ $d$-balls in $D$ are intersected by the hyperplanes defining $P_i$. In other words at most $\frac{|D|}{l}$ $d$-balls in $D$ intersect more than one strip in $P_i$. Now we consider partitioning $A$ similarly along each axis  to get a grid of hypercubes of side length $2l \eps$, which we call \emph{cells}. Using the argument described above it follows that there exits a partition $P'$ such that at most $\frac{d|D|}{l}$ $d$-balls in $D$ intersect more than one cell in $P'$.

Now consider a cell $C$ of side length $2l \eps$. Since $\dimF(P) = \delta$ and $C$ is contained in a ball of radius $\sqrt{d}l \eps$ we have that there exists an $\eps$-cover of the points in $C$ of cardinality at most $O(\frac{\sqrt{d}l \eps}{\eps})^{\delta} = O(\sqrt{d}l )^{\delta}$. 

We combine the above observations to obtain our algorithm as follows. The algorithm enumerates all $l^d$ partitions of $P$ into cells of side length $2 l \eps$. Next it enumerates exhaustively all $\eps$-covers of cardinality at most $O((\sqrt{d}l)^{\delta})$ for each cell. Since verifying whether a set of points is a valid cover takes time $O(n(\sqrt{d}l)^\delta)= O(nl^{\delta})$ this step overall takes time at most $n^{O((\sqrt{d}l)^{\delta})} \cdot l^{\delta}$. Finally the algorithm takes the union of the $\eps$-covers of all the cells to get an $\eps$-cover of $P$ and returns the best solution over all partitions. Since there exists at least one partition where  at most $\frac{d|D|}{l}$ $d$-balls in $D$ intersect more than one cell in the partition, we have that the size of the solution returned is at most $(1+\frac{d}{l})|D| = (1+\frac{d}{l})|\OPT|$. The running time of the algorithm is $l^d \cdot n^{O((\sqrt{d} l)^{\delta})} \cdot l^{\delta} = l^{d+\delta} n^{O((l \sqrt{d})^{\delta})}$.
\end{proof}


\begin{theorem}
Let $d\geq 2$ be some integer, and let $\delta\in (0,d]$ be some real number. Let $P$ be a set of $n$ points in $\mathbb{R}^d$ with $\dimF(P) = \delta$. There exists a polynomial time approximation scheme which given a natural number $l > 0$ and any $\eps >0$, computes a $(1 + \frac{d}{l-d})$-approximation to the $\eps$-packing of $P$, in time $l^{d+\delta} n^{O({(l \sqrt{d})^{\delta})}}$.
\end{theorem}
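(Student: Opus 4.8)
The plan is to run the Hochbaum--Maass shifted-grid scheme exactly as in the proof of Theorem~\ref{coveringscheme}, and the only real novelty over the covering case is that the union of per-cell solutions need not be feasible for packing, so I would force feasibility by shrinking each cell before solving it. Concretely: take a bounding $d$-rectangle $A\supseteq P$; for each vector $(s_1,\dots,s_d)\in\{0,\dots,l-1\}^d$ form the grid of half-open cells of side length $2l\eps$ obtained by shifting the axis-$j$ dividing hyperplanes by $2s_j\eps$, giving $l^d$ partitions. For a cell $C$ call a point of $P$ a \emph{core point} of $C$ if it lies in $C$ at distance at least $\eps/2$ from $\partial C$, and write $\mathrm{core}(C)$ for this set. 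The algorithm: for each partition and each cell $C$, compute a maximum $\eps$-packing $S_C$ of $P\cap\mathrm{core}(C)$ by brute force; output $\bigcup_C S_C$; return the best over all partitions.

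Two facts make this correct and efficient. First, $\bigcup_C S_C$ is a valid $\eps$-packing of $P$: within a cell this is immediate, and if $p\in\mathrm{core}(C)$ and $q\in\mathrm{core}(C')$ with $C\neq C'$, then the open $(\eps/2)$-neighborhoods of $p$ and $q$ lie inside the disjoint cells $C$ and $C'$, so they cannot overlap and hence $\|p-q\|_2\ge\eps$. Second, each cell $C$ is contained in a ball of radius $\sqrt d\,l\eps$, so since $\dimF(P)=\delta$ (using the $\eps$-packing form of the definition, valid by the footnote), any $\eps$-packing lying in $C$ has cardinality $O((\sqrt d\,l)^{\delta})$; therefore $S_C$ can be found by enumerating all subsets of $P\cap C$ of size $O((\sqrt d\,l)^{\delta})$ and testing the packing property, which over all cells and all $l^d$ partitions takes time $l^{d}\cdot n^{O((\sqrt d\,l)^{\delta})}\cdot\mathrm{poly}(n,l)=l^{d+\delta}n^{O((l\sqrt d)^{\delta})}$, as in Theorem~\ref{coveringscheme}.

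It remains to bound the approximation ratio, which is where the averaging-over-shifts argument enters. Fix an optimal $\eps$-packing $\OPT$. For a fixed axis $j$, a point $p$ is dropped from $\mathrm{core}$ "because of axis $j$" exactly when its $j$-th coordinate lies within $\eps/2$ of some axis-$j$ grid hyperplane; as $s_j$ ranges over $\{0,\dots,l-1\}$ these hyperplane positions are $2\eps$-separated modulo $2l\eps$, and the bad slab around each has width $\eps<2\eps$, so $p$ is dropped because of axis $j$ for at most one of the $l$ shifts. Averaging over $s_j$, there is a choice of $s_j$ with at most $|\OPT|/l$ points of $\OPT$ dropped because of axis $j$; choosing such an $s_j$ independently for $j=1,\dots,d$ yields a single partition in which at most $d|\OPT|/l$ points of $\OPT$ lie outside all cores. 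For that partition,
\[
\Big|\bigcup_C S_C\Big|=\sum_C |S_C|\ \ge\ \sum_C |\OPT\cap\mathrm{core}(C)|\ \ge\ |\OPT|-\tfrac{d}{l}|\OPT|\ =\ \tfrac{l-d}{l}\,|\OPT|,
\]
so the returned solution is a $\big(1+\tfrac{d}{l-d}\big)$-approximation (the statement being meaningful only for $l>d$).

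The main thing to get right is this last arithmetic: that shrinking by exactly $\eps/2$ is simultaneously enough to guarantee feasibility of the merged packing (via disjointness of the $(\eps/2)$-balls) and small enough that the best shift loses only a $d/l$ fraction of $\OPT$ (via the $2\eps$-spacing of the $l$ hyperplane positions against the $\eps$-wide bad slab). Everything else — the fractal-dimension bound on per-cell packing size and the running-time bookkeeping — is identical to the covering scheme; I would also double-check the half-open cell convention so that "distance $\ge\eps/2$ from $\partial C$" really puts the open $(\eps/2)$-ball inside $C$, and that $\sqrt d\,l\eps\ge 2\eps$ so the $\dimF$ estimate applies.
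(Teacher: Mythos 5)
Your proposal is correct and follows essentially the same route as the paper's proof: the Hochbaum--Maass shifted-grid scheme, brute-force per-cell packings whose size is bounded by $O((l\sqrt d)^{\delta})$ via the fractal dimension, restriction to points at distance at least $\eps/2$ from cell boundaries so the merged solution remains an $\eps$-packing, and an averaging-over-shifts argument showing the best partition loses at most a $d/l$ fraction of $\OPT$. Your per-axis independent choice of shifts and the explicit derivation of the $(1+\tfrac{d}{l-d})$ factor (from the $(1-\tfrac{d}{l})|\OPT|$ lower bound) simply spell out details the paper states more tersely, with an immaterial difference in cell side length ($2l\eps$ versus the paper's $l\eps$).
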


\begin{proof}
We use the partitioning approach described in Theorem \ref{coveringscheme}. We consider cells of side length $l\eps$. Since any $\eps$-packing can be augmented into an $\eps$-net we have that any $\eps$-packing of the points in a cell has cardinality at most $O((\frac{\sqrt{d}l}{2})^{\delta})$. We consider $\eps$-packings for each cell where the points in the packing are all at least distance $\frac{\eps}{2}$ from the boundary of the cell; this ensures that the $d$-balls of radius $\frac{\eps}{2}$ centered at these points do not intersect multiple cells. Then we take the union of these points over all cells and take the minimum cardinality set over all partitions. The running time is $l^{d+\delta} n^{O({(l \sqrt{d})^{\delta})}}$ by the same reasoning used in Theorem \ref{coveringscheme}. Let $\OPT$ be the optimal $\eps$-packing of $P$. Since at most $\frac{d}{l}|\OPT|$ $d$-balls in the optimal packing intersect more than one cell we have that the solution returned by the algorithm has cardinality at least $(1 - \frac{d}{l})|\OPT|$, as required.
\end{proof}


\section{Spanners and pathwidth}\label{sec:spanners}


We remark that several other constructions of $(1+\eps)$-spanners for finite subsets of $d$-dimensional Euclidean space are known.
However, they do not yield graphs of small pathwidth.
Here we use a construction that is a modified version of the spanner due to Vaidya \cite{vaidya1991sparse}.
Let $P$ be a set of $n$ points in $\mathbb{R}^d$.
Let us first recall the construction from \cite{vaidya1991sparse}.
Let $\eps>0$.
We will define a graph $G$ with $V(G)=P$, that is a $(1+\eps)$-spanner for $P$.

Let $I_1,\ldots,I_d\subset \mathbb{R}$ be intervals, all having the same length, and such that each $I_i$ is either closed, open, or half-open.
Then we say that $b=I_1\times \ldots\times I_d$ is a \emph{box}.
We define $\size(b)$ to be the length of the interval $I_1$.
For each $i\in \{1,\ldots,d\}$, let $\psi(b)_i$ be the center of $I_i$, and define the half-spaces
$L_i(b)=\{(x_1,\ldots,x_d)\in \mathbb{R}^d : x_i<\psi_i(b)\}$
and
$R_i(b)=\{(x_1,\ldots,x_d)\in \mathbb{R}^d : x_i\geq\psi_i(b)\}$
Let $\ImSucc(b)$ be the set of boxes such that 
$\ImSucc(b) = \left\{b' : b'=b\cap \left(\bigcap_{i=1}^d f_i\right), \text{ where for all } i\in \{1,\ldots,d\}, f_i=L_i(b) \text{ or } f_i = R_i(b) \right\}$.
We also define $\shrunk(b)$ to be some box satisfying the following conditions:

(1)
If $|b\cap P|\leq 1$ then $\shrunk(b)=b\cap P$. Note that we allow $\shrunk(b)$ to be empty.

(2)
If $|b\cap P|\geq 2$ then $\shrunk(b)$ is some minimal box contained in $b$ with $\shrunk(b)\cap P=b\cap P$.
Note that if there are multiple choices for $\shrunk(b)$, then we choose one arbitrarily.

For some box $b$ with $|b\cap P|\geq 2$, we define
$\Succ(b)$ to be the set of boxes such that
$\Succ(b) = \left\{b' : \text{ there exists } b''\in \ImSucc(b) \text{ s.t.~} b''\cap P\neq \emptyset \text{ and } b'=\shrunk(b'') \right\}$.
If $|b\cap P|\leq 1$, then we define $\Succ(b)=\emptyset$.

The \emph{box-tree} of $P$ is defined to be a tree $T$ where every node is some box.
We set the root of $T$ to be some minimal box $b^*$ containing $P$.
For each $b\in V(T)$, the set of children of $b$ in $T$ is $\Succ(b)$. Note that $|b\cap P| = 1$ if and only if $b$ is a leaf of $T$.
For each $b\in V(T)\setminus \{b^*\}$ we denote by $\father(b)$ the father of $b$ in $T$.

For each $b\in V(T)$ let
\begin{align*}
\Near(b) &= \left\{ b'\in V(T)\setminus\{b^*\} : \size(b')<\size(b)\leq \size(\father(b')) \text{ and } \dist(b,b') \leq \frac{6\sqrt{d}}{\eps}  \size(b) \right\}.
\end{align*}

It follows by the construction that for each $b\in V(T)$, we have $b\cap P\neq \emptyset$.
For each $b\in V(T)$ pick some arbitrary point $\rep(b)\in b\cap P$.
We say that $\rep(b)$ is the \emph{representative} of $b$.
We further impose the constraint that for each non-leaf $b\in V(T)$, if $b'$ is the unique child of $b$ with $\rep(b)\in b'$, then $\rep(b')=\rep(b)$.
This implies that for every $b\in V(T)$, there exists a branch in $T$ starting at $b$ and terminating at some leaf, such that all the boxes in the branch have the same representative as $b$.
We remark that this additional requirement is not necessary in the original construction of Vaidya \cite{vaidya1991sparse}.

We define $E(G)=E_1\cup E_2$, 
where
$E_1 = \{\{\rep(b),\rep(b')\} : b\in V(T), b'\in \Succ(b), \rep(b)\neq \rep(b')\}$
and
$E_2 = \{\{\rep(b),\rep(b')\} : b\in V(T), b'\in \Near(\father(b))\}$.
This completes the description of the spanner construction due to Vaidya \cite{vaidya1991sparse}.
His result is summarized in the following.

\begin{theorem}[Vaidya \cite{vaidya1991sparse}]\label{thm:Vai}
$G$ is a $(1+\eps)$-spanner for $P$.
Moreover $|E(G)|= O(\eps^{-d} n)$.
\end{theorem}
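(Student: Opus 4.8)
The plan is to recognize that this is exactly Vaidya's theorem from \cite{vaidya1991sparse}, to check that our single departure from his construction --- the constraint that $\rep(b')=\rep(b)$ whenever $b'$ is the child of $b$ containing $\rep(b)$ --- is harmless, and then to recall his two-part argument. First I would note that this constraint changes neither the box-tree $T$ nor the sets $\ImSucc(\cdot)$, $\Succ(\cdot)$, $\Near(\cdot)$; that it is always realizable (assign representatives in a top-down traversal of $T$); and that it can only remove edges from $E_1$, namely those that would otherwise be self-loops. Hence the cardinality bound is unaffected, the counting estimates below make no reference to the choice of representatives, and Vaidya's dilation analysis uses only that $\rep(b)\in b\cap P$, so both assertions reduce to his.

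For the size bound I would first show $|V(T)|=O(n)$. The leaves of $T$ are exactly the boxes $b$ with $|b\cap P|=1$, so there are $n$ of them; and every internal node $b$ is, by construction, a minimal box with $|b\cap P|\geq 2$, so $\size(b)$ equals the $\ell_\infty$-diameter of $b\cap P$, which strictly exceeds the side length $\size(b)/2$ of every box of $\ImSucc(b)$. Hence the points of $b\cap P$ cannot all lie in a single box of $\ImSucc(b)$, so $b$ has at least two children, and therefore $|V(T)|\leq 2n-1$ and $|E_1|\leq|V(T)|-1=O(n)$. For $E_2$ it suffices to prove $|\Near(b)|=O(\eps^{-d})$ for each $b\in V(T)$, since every box has at most $2^d$ children and hence
\[
|E_2|\;\leq\;\sum_{b\in V(T)}|\Near(\father(b))|\;\leq\;2^d\sum_{c\in V(T)}|\Near(c)|\;=\;O\!\left(\eps^{-d}\,|V(T)|\right)\;=\;O(\eps^{-d}n).
\]
To bound $|\Near(b)|$ I would exploit that the box-tree is laminar and that the condition $\size(b')<\size(b)\leq\size(\father(b'))$ forces the boxes of $\Near(b)$ to be pairwise non-nested, hence pairwise disjoint; each of them contains a point of $P$ and lies within distance $\tfrac{6\sqrt d}{\eps}\size(b)+\diam(b')=O(\tfrac{\sqrt d}{\eps}\size(b))$ of $b$. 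A packing argument for such disjoint boxes inside a region of diameter $O(\tfrac{\sqrt d}{\eps}\size(b))$ --- grouping them by the size class of their fathers exactly as in \cite{vaidya1991sparse} --- then yields $|\Near(b)|=(1/\eps)^{O(d)}$, which is $O(\eps^{-d})$ for fixed $d$.

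For the dilation I would first observe that, since each edge of $G$ carries its Euclidean length, $d_G(x,y)\geq\|x-y\|_2$ holds automatically, so only the upper bound is at stake; moreover, after replacing $\eps$ by a suitable constant multiple of itself it is enough to produce, for all $p,q\in P$, a path of length at most $(1+O(\eps))\|p-q\|_2$. I would argue by induction on the rank of $\|p-q\|_2$ among the pairwise distances of $P$. In the inductive step one locates, at scale $\Theta(\eps\|p-q\|_2/\sqrt d)$, an ancestor $b_1$ of the leaf containing $p$ and an ancestor $b_2$ of the leaf containing $q$ with $b_2\in\Near(\father(b_1))$; then $\{\rep(b_1),\rep(b_2)\}\in E_2$, and this edge has length at most $(1+O(\eps))\|p-q\|_2$. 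Since $p,\rep(b_1)\in b_1$ and $q,\rep(b_2)\in b_2$ with $\diam(b_i)=O(\eps\|p-q\|_2)$, joining $p$ to $\rep(b_1)$ and $q$ to $\rep(b_2)$ are strictly smaller sub-instances, and summing the resulting geometric series over the recursion contributes only a further $(1+O(\eps))$ factor. This is Vaidya's argument and is insensitive to our representative convention. The one genuinely delicate point --- and the step I expect to be the main obstacle --- is this dilation analysis: one must choose the scale and the boxes $b_1,b_2$ so that the condition $\size(b_2)<\size(\father(b_1))\leq\size(\father(b_2))$ defining membership in $\Near(\father(b_1))$ can be met even though box sizes may drop by more than a factor of two across a $\shrunk$ step, and one must keep the cost of the recursive descents to $p$ and $q$ under control so that the series closes; both are carried out in \cite{vaidya1991sparse}.
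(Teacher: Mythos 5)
Your proposal is correct and matches the paper's treatment: the paper offers no proof of this statement at all, simply quoting Vaidya's result and remarking that the extra representative constraint is not part of his original construction, which is exactly the reduction you carry out (your observation that the constraint is realizable, can only shrink $E_1$, and that the dilation analysis only uses $\rep(b)\in b\cap P$ is the right justification, and your recap of Vaidya's counting and dilation arguments is faithful). The only quibble is cosmetic: your packing sketch yields $|\Near(b)|=(1/\eps)^{O(d)}$, which for fixed $d$ is $C(d)\eps^{-O(d)}$ rather than literally $O(\eps^{-d})$, but this constant-in-the-exponent issue is inherited from the level of precision of the cited source and does not affect how the theorem is used in the paper.
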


For each $e=\{u,v\}\in E(G)$, let $D_e$ be the circumscribed ball for the segment $u$-$v$.
Let ${\cal D}=\bigcup_{e\in E(G)}\{D_e\}$.
For each $i\in\{1,2\}$ let ${\cal D}_i=\bigcup_{e\in E_i}\{D_e\}$.

\begin{lemma}\label{lem:D1}
${\cal D}_1$ is $(2,d^{O(d)})$-thick.
\end{lemma}

\begin{proof}
Let $r>0$ and define
$E_{1,r} = \{\{x,y\}\in E_1 : r\leq \|x-y\|_2 <2r\}$.
Let 
${\cal D}_{1,r} = \{D_e \in {\cal D}_1 : e\in E_{1,r}\}$.
It suffices to show that ${\cal D}_{1,r}$ is $d^{O(d)}$-thick.

For each $e = \{x,y\} \in E_{1,r}$ we define some unordered pair of boxes $\gamma(e)=\{B(e), B'(e)\}$, as follows.
By the definition of $E_1$, there exists some $b\in V(T)$, $b'\in \Succ(b)$, with $\rep(b)\neq \rep(b')$, such that $\{x,y\}=\{\rep(b), \rep(b')\}$.
Assume w.l.o.g.~that $x=\rep(b)$ and $y=\rep(b')$.
By the choice of the representatives, there exist some branch $b_0,\ldots,b_t$ of $T$, for some $t\geq 1$ with $b_0=b$, that terminates at some leaf $b_t$, such that $x=\rep(b)=\rep(b_0)=\ldots=\rep(b_t)$.
Since $x,y\in b$, it follows that
$r\leq \|x-y\|_2 \leq \sqrt{d} \cdot \size(b)$.
Since $b_t$ is a leaf, we have $\size(b_t)=0$.
Let $t^*\in \{1,\ldots,t\}$ be the maximum integer such that 
$\size(b_{t^*-1}) \geq r/\sqrt{d}$.
Let $A\in \ImSucc(b_{t^*-1})$ such that $b_{t^*}\subseteq A$.
Note that $\size(A)\geq r/(2\sqrt{d})$, and $\size(b_{t^*}) < r/\sqrt{d}$.
Pick some box $B(e)$, with $b_{t^*}\subseteq B(e)\subseteq A$, such that
\begin{align}
\size(B(e)) \in \left[r/(2\sqrt{d}), r/\sqrt{d}\right] \label{eq:Be_size}
\end{align}
in a consistent fashion (i.e. for a fixed choice of $b_{t^*}$ and $A$ we always pick the same box).
Similarly, let $b_0',\ldots,b_s'$ be a sequence of boxes such that $b_0'\in \Succ(b)$, with $b'\subseteq b_0'$, and $b_1',\ldots,b_s'$ is a branch of $T$ starting at $b_1'=b'$ and terminating at some leaf $b_s'$.
Arguing as before, let $s^*\in \{1,\ldots,s\}$ be the maximum integer such that 
$\size(b_{s^*-1}') \geq r/(2\sqrt{d})$.
If $s^*=1$ then let $A'\in \Succ(b'_{s^*-1})$, with $b'_{s^*}\subseteq A'$; pick some box $B'(e)$, with $b_{s^*}\subseteq B'(e)\subseteq A'$, such that 
\begin{align}
\size(B'(e)) \in \left[r/(4\sqrt{d}), r/(2\sqrt{d})\right] \label{eq:BBe_size}
\end{align}
in a consistent fashion.

We say that $e$ is \emph{charged} to $\gamma(e)$.
By construction, there exists at most one edge in $E_{1,r}$ that is charged to each pair of boxes.

By \eqref{eq:Be_size} and \eqref{eq:BBe_size} we have that for each $e\in E_{1,r}$, the pair $\gamma(e)$ consists of two boxes, each of size $\Theta(r/\sqrt{d})$. Moreover by construction and our choice of boxes we have that for any $e,f \in E_{1,r}$ $B(e)$ and $B(f)$ are disjoint or equal. Similarly $B'(e)$ and $B'(f)$ are also disjoint or equal. Thus, each point in $\mathbb{R}^d$ can be contained in at most $O(1)$ boxes in all the pairs $\gamma(e)$, for all $e\in E_{1,r}$.
Moreover, $\dist(B(e),B'(e))\leq \|x-y\| < 2r$.
Thus, each box participates in at most $(\sqrt{d})^{O(d)} = d^{O(d)}$ pairs.
For each $e\in E_{1,r}$, let $A(e)=N(B(e),r)\cup N(B'(e),r)$, where $N(X,r)$ denotes the $r$-neighborhood of $X$ in $\mathbb{R}^d$.
It follows that $\{A(e)\}_{e\in E_{1,r}}$ is $d^{O(d)}$-thick.
Since for each $e\in E_{1,r}$, we have $D_e\in A(e)$, it follows that ${\cal D}_1$ is $d^{O(d)}$-thick, as required.
\end{proof}

\begin{lemma}\label{lem:D2}
${\cal D}_2$ is $(2,(d/\eps)^{O(d)})$-thick.
\end{lemma}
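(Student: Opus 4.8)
The plan is to repeat the argument for Lemma~\ref{lem:D1}, but to charge each edge of $E_2$ to a pair of boxes of size $\Theta(\eps r/\sqrt d)$ instead of $\Theta(r/\sqrt d)$; the extra factor of $(1/\eps)^{O(d)}$ will arise from the radius $\frac{6\sqrt d}{\eps}\size(b)$ in the definition of $\Near(b)$. As before, it suffices to fix $r>0$, set $E_{2,r}=\{\{x,y\}\in E_2:r\le\|x-y\|_2<2r\}$ and ${\cal D}_{2,r}=\{D_e:e\in E_{2,r}\}$, and prove that ${\cal D}_{2,r}$ is $(d/\eps)^{O(d)}$-thick. Let $e=\{x,y\}\in E_{2,r}$ be witnessed by $b\in V(T)$ and $b'\in\Near(\father(b))$ with $\{x,y\}=\{\rep(b),\rep(b')\}$, and put $c=\father(b)$. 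Since $x\in b\subseteq c$, $y\in b'$, $\dist(c,b')\le\frac{6\sqrt d}{\eps}\size(c)$ and $\diam(b')<\sqrt d\,\size(c)$, we get
\[
r\le\|x-y\|_2\le\diam(c)+\dist(c,b')+\diam(b')\le\Bigl(2\sqrt d+\tfrac{6\sqrt d}{\eps}\Bigr)\size(c),
\]
so $\size(c)=\Omega(\eps r/\sqrt d)$. (No matching upper bound holds: when $b'$ is a child of $c$, i.e.\ a sibling of $b$, $\size(c)$ may be arbitrarily large, which is why we must descend rather than use $c$ itself.)

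For each $z\in P$ I would walk down the root-to-$\{z\}$ branch of $T$ and, exactly as in Lemma~\ref{lem:D1}, choose consistently an auxiliary box $\beta(z)$ with $z\in\beta(z)$, $\size(\beta(z))\in[\frac{\eps r}{32\sqrt d},\frac{\eps r}{16\sqrt d}]$, sandwiched between a node of $T$ and an immediate successor of its father; such a box exists because along this branch $\size$ decreases from $\size(b^*)\ge\|x-y\|_2/\sqrt d\ge r/\sqrt d$ down to $0$, each parent-to-child step decreasing $\size$ by a factor at least $2$. Charge $e$ to the pair $\gamma(e)=\{\beta(x),\beta(y)\}$. Then: (i) each $\beta(z)$ has size $\Theta(\eps r/\sqrt d)$, and—arguing from the laminar structure of $T$ as in Lemma~\ref{lem:D1}—any two of the boxes $\beta(z)$ are disjoint or equal; (ii) $\dist(\beta(x),\beta(y))\le\|x-y\|_2<2r$ since $x\in\beta(x)$ and $y\in\beta(y)$; (iii) $D_e\subseteq\ball(x,2r)\subseteq N(\beta(x),2r)$. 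Set $A(e)=N(\beta(x),2r)\cup N(\beta(y),2r)\supseteq D_e$.

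Fix a point $p$. If $p\in A(e)$ then one of $\beta(x),\beta(y)$ lies within distance $2r$ of $p$; by~(i) and a volume bound, at most $(d/\eps)^{O(d)}$ distinct boxes $\beta(z)$ lie within $O(r)$ of $p$, and by~(ii) each such box occurs in at most $(d/\eps)^{O(d)}$ pairs $\gamma(e')$. Hence $p$ lies in $A(e)$—and so in $D_e$—for at most $(d/\eps)^{O(d)}$ edges $e$, \emph{provided that each pair $\{\beta(x),\beta(y)\}$ is charged by at most $(d/\eps)^{O(d)}$ edges of $E_{2,r}$}. I expect this proviso to be the crux and the only real departure from Lemma~\ref{lem:D1}: in $E_1$ the box $b'$ is a child of $b$, so an edge is recovered uniquely from the endpoint $\rep(b')$ and each box-pair carries at most one edge, whereas $\Near(\father(b))$ may contain $(1/\eps)^{O(d)}$ boxes and a box $\beta(x)$ may contain many points of $P$, each a candidate endpoint. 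To establish the proviso one must exploit the structure of $\Near$ and of $T$—e.g.\ splitting into the ``spread'' case $\dist(c,b')=\Theta(\frac{\sqrt d}{\eps}\size(c))$, which pins $\size(c)=\Theta(\eps r/\sqrt d)$ and reduces to a packing bound over tree nodes of that scale, and the ``clustered'' case in which $b'$ is nested inside or adjacent to $c$ (so $\size(\father(b'))\ge\size(c)$), handled by a packing argument along the relevant branch—and to show that in both cases only $(d/\eps)^{O(d)}$ witnessing triples $(c,b,b')$ can produce an edge charged to a fixed box-pair. Granting this, ${\cal D}_{2,r}$ is $(d/\eps)^{O(d)}$-thick for every $r$, and therefore ${\cal D}_2$ is $(2,(d/\eps)^{O(d)})$-thick.
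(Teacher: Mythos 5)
There is a genuine gap, and it is exactly where you flagged it: you never bound the number of edges of $E_{2,r}$ charged to a fixed pair $\{\beta(x),\beta(y)\}$, and the route you sketch for it (case analysis on $\Near$ plus a packing bound over witnessing triples) is not how this can be closed, nor does it obviously work: a box of size $\Theta(\eps r/\sqrt d)$ may a priori contain representatives of many nodes of $T$, so nothing in your argument prevents many pairs $(x,y)\in (B\cap P)\times(B'\cap P)$ from all being edges of $E_{2,r}$ charged to the same box pair. The paper closes this using the extra constraint it imposed on Vaidya's construction (and which you never invoke): representatives are inherited down a branch, i.e.\ if $b'$ is the child of $b$ containing $\rep(b)$ then $\rep(b')=\rep(b)$. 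This gives injectivity of the charging, as follows. Your lower bound $\size(\father(b))=\Omega(\eps r/\sqrt d)$ (the paper's $\size(\father(b))> r\eps/(9\sqrt d)$) says the witnessing node $\father(b)$ lies \emph{above} the threshold scale, while the charged box $\beta(x)$ sandwiches the threshold-crossing child $c^*$ on the root-to-$\{x\}$ branch, which lies \emph{below} it. Hence $\father(b)$ is a proper ancestor of $c^*$, its child $b$ containing $x$ is an ancestor of (or equal to) $c^*$, and the inheritance rule propagates $\rep(b)=x$ down that branch, forcing $\rep(c^*)=x$. So the box $\beta(x)$ determines its endpoint $x$ uniquely; the same argument applies on the $y$-side because $b'\in\Near(\father(b))$ guarantees $\size(\father(b'))\geq\size(\father(b))$ is also above the threshold. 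Consequently each pair of boxes is charged by at most one edge, which is the paper's assertion ``by construction, at most one edge is charged to each pair,'' and the rest of your counting then goes through verbatim.

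Two smaller remarks. First, your diagnosis of why Lemma \ref{lem:D1} is easier (``the edge is recovered uniquely from $\rep(b')$ because $b'$ is a child of $b$'') is not quite the right mechanism either: there too the uniqueness of the charge rests on the representative-inheritance constraint along the chosen branches, not merely on the parent--child relation. Second, the parts of your write-up preceding the proviso (the lower bound on $\size(\father(b))$, the consistent choice of sandwiched boxes of size $\Theta(\eps r/\sqrt d)$, disjoint-or-equal via laminarity, and the two packing counts giving $(d/\eps)^{O(d)}$) do match the paper's argument; only the crux step is missing, and it cannot be granted, since it is precisely the point where the modification of Vaidya's construction is used.
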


\begin{proof}
Let $r>0$ and define
$E_{2,r} = \{\{x,y\}\in E_2 : r\leq \|x-y\|_2 <2r\}$.
Let 
${\cal D}_{2,r} = \{D_e \in {\cal D}_2 : e\in E_{2,r}\}$.
It suffices to show that ${\cal D}_{2,r}$ is $d^{O(d)}$-thick.

As in the proof of Lemma \ref{lem:D1}, for each $e\in {\cal D}_{1,r}$ we define some unordered pair of boxes $\gamma(e)=\{B(e), B'(e)\}$.
By the definition of $E_2$, there exists some $b\in V(T)$, $b'\in \Near(\father(b))$, such that $\{x,y\}=\{\rep(b), \rep(b')\}$.
Assume w.l.o.g.~that $x=\rep(b)$ and $y=\rep(b')$.
Thus we have
$\size(b')<\size(\father(b)) \leq \size(\father(b'))$
and
$\dist(b', \father(b)) \leq \frac{6\sqrt{d}}{\eps}\size(\father(b))$.
Thus
$r \leq \|x-y\|_2 \leq \sqrt{d}\cdot  \size(b) + \sqrt{d}\cdot  \size(b') + \dist(b,b')
 < \sqrt{d}\cdot \size(\father(b)) + \sqrt{d} \cdot \size(\father(b)) + \dist(b',\father(b)) + \sqrt{d} \cdot \size(\father(b)) 
 \leq (3+6/\eps) \sqrt{d}\cdot \size(\father(b))$.
Thus
$\size(\father(b)) > r \eps/(9 \sqrt{d})$.
Let $b_0,\ldots,b_t$ be a branch in $T$ with $b_0=\father(b)$, and $b_t=\{x\}$.
Arguing as in Lemma \ref{lem:D1},
let $t^*\in \{0,\ldots,t-1\}$ be the maximum integer such that $\size(b_{t^*})\geq r\eps /(9\sqrt{d})$.
Let $A\in \Succ(b_{t^*})$, with $b_{t^*+1}\subseteq A$, and pick some box $B(e)\subset A$, with
\begin{align}
\size(B(e)) \in \left[(r\eps)/(18\sqrt{d}), (r\eps)/(9\sqrt{d})\right] \label{eq:Be_size2}
\end{align}
Similarly, let $b_0',\ldots,b_s'$ be a branch of $T$ with $b_0'=\father(b')$, and $b_s'$ is a leaf with $b_s'=\{y\}$.
Arguing as in Lemma \ref{lem:D1}, let $s^*\in \{0,\ldots,s-1\}$ be the maximum integer such that 
$\size(b_{s^*-1}') \geq (r\eps)/(9\sqrt{d})$.
Let $A'\in \Succ(b_{s^*-1})$ with $b_{s^*}\subseteq A$, and pick some box $B'(e)$, with $b_{s^*}\subseteq A \subseteq B'(e)$, such that
\begin{align}
\size(B'(e)) \in \left[(\eps r)/(18\sqrt{d}), (\eps r)/(9\sqrt{d})\right] \label{eq:BBe_size2}
\end{align}
We say that $e$ is \emph{charged} to $\gamma(e)$.
By construction, there exists at most one edge in $E_{1,r}$ that is charged to each pair of boxes.

By \eqref{eq:Be_size2} and \eqref{eq:BBe_size2} we have that for each $e\in E_{2,r}$, the pair $\gamma(e)$ consists of two boxes, each of size $\Theta((\eps r)/\sqrt{d})$.
Thus, each point in $\mathbb{R}^d$ can be contained in at most $O(1)$ distinct boxes in all the pairs $\gamma(e)$, for all $e\in E_{2,r}$.
Moreover, $\dist(B(e),B'(e))\leq \|x-y\| < 2r$.
Thus, each box participates in at most $(\sqrt{d}/\eps)^{O(d)} = (d/\eps)^{O(d)}$ pairs.
For each $e\in E_{2,r}$, let $A(e)=N(B(e),r)\cup N(B'(e),r)$.
It follows that $\{A(e)\}_{e\in E_{2,r}}$ is $(d/\eps)^{O(d)}$-thick.
Since for each $e\in E_{2,r}$, we have $D_e\in A(e)$, it follows that ${\cal D}_2$ is $(d/\eps)^{O(d)}$-thick, as required.
\end{proof}

\begin{lemma}\label{lem:D}
${\cal D}$ is $(2,(d/\eps)^{O(d)})$-thick.
\end{lemma}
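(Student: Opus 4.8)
The statement is that ${\cal D} = {\cal D}_1 \cup {\cal D}_2$ is $(2,(d/\eps)^{O(d)})$-thick. The plan is simply to combine Lemmas \ref{lem:D1} and \ref{lem:D2}, which state that ${\cal D}_1$ is $(2,d^{O(d)})$-thick and ${\cal D}_2$ is $(2,(d/\eps)^{O(d)})$-thick, and to observe that thickness bounds are additive under union of families: if a point $p$ is covered by at most $\kappa_1$ elements of any $\lambda$-related subset of ${\cal D}_1$ and by at most $\kappa_2$ elements of any $\lambda$-related subset of ${\cal D}_2$, then it is covered by at most $\kappa_1+\kappa_2$ elements of any $\lambda$-related subset of ${\cal D}_1\cup{\cal D}_2$. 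Taking $\lambda = 2$ and $\kappa_1 = d^{O(d)}$, $\kappa_2 = (d/\eps)^{O(d)}$ gives $\kappa_1 + \kappa_2 = (d/\eps)^{O(d)}$, since $d^{O(d)} \leq (d/\eps)^{O(d)}$ (we may assume $\eps \leq 1$ without loss of generality, as larger $\eps$ only makes the spanner easier to construct).

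First I would recall the definition of $(\lambda,\kappa)$-thick from the notation section: no point is covered by more than $\kappa$ elements of any $\lambda$-related subset. Then I would fix an arbitrary point $p\in\mathbb{R}^d$ and an arbitrary $2$-related subset $\mathcal{D}' \subseteq \mathcal{D}$, and split it as $\mathcal{D}' = \mathcal{D}'_1 \cup \mathcal{D}'_2$ where $\mathcal{D}'_i = \mathcal{D}' \cap \mathcal{D}_i$. Each $\mathcal{D}'_i$ is still $2$-related (being a subset of a $2$-related family), so by Lemma \ref{lem:D1} the number of elements of $\mathcal{D}'_1$ covering $p$ is at most $d^{O(d)}$, and by Lemma \ref{lem:D2} the number of elements of $\mathcal{D}'_2$ covering $p$ is at most $(d/\eps)^{O(d)}$. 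Summing, the number of elements of $\mathcal{D}'$ covering $p$ is at most $d^{O(d)} + (d/\eps)^{O(d)} = (d/\eps)^{O(d)}$.

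There is essentially no obstacle here — the lemma is a one-line corollary of the two preceding lemmas plus the trivial subadditivity of coverage counts under union. The only minor point to state cleanly is that a subset of a $\lambda$-related family is $\lambda$-related, which is immediate from the definition (the diameter ratio condition is inherited by subsets). I would keep the proof to two or three sentences.
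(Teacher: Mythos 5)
Your proposal is correct and is essentially identical to the paper's own proof: Lemma \ref{lem:D} is obtained there by exactly the same union-of-families argument, with $\kappa = d^{O(d)} + (d/\eps)^{O(d)} = (d/\eps)^{O(d)}$. Your added remark that subsets of a $2$-related family remain $2$-related just makes explicit a step the paper leaves implicit.
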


\begin{proof}[Proof of Lemma \ref{lem:D}]
By Lemma \ref{lem:D1} we have that ${\cal D}_1$ is $(2,d^{O(d)})$-thick, and by Lemma \ref{lem:D2} we have that ${\cal D}_2$ is $(2,(d/\eps)^{O(d)})$-thick.
Since ${\cal D}={\cal D}_1\cup {\cal D}_2$, we get that ${\cal D}$ is $(2,\kappa)$-thick, where $\kappa=d^{O(d)}+(d/\eps)^{O(d)}=(d/\eps)^{O(d)}$, as required.
\end{proof}


Let $x,y,z,w\in \mathbb{R}^d$.
We say that $zw$ is a \emph{shortcut} for $xy$ if the following conditions holds:

(1)
$\|x-z\|_2 \leq \eps \|z-w\|_2/20$.

(2)
The angle formed by the segments $x$-$y$ and $x$-$(w-z+x)$ is at most $\eps/20$.

We now proceed to modify $G$ to obtain a graph $G'$.
Initially, $G'$ contains no edges.
We consider all edges in $G$ in increasing order of length.
When considering an edge $e=\{x,y\}$, if there exists $\{z,w\}\in E(G')$ such that either $zw$ is a shortcut for $xy$ or $zw$ is a shortcut for $yx$, then we do not add $e$ to $G'$; otherwise we add $e$ to $G'$.
This completes the construction of $G'$. 
We next argue that $G'$ is a spanner with low dilation for $P$.
The proof of the following is standard (see e.g.~\cite{har2011geometric}). For completeness, we provide a sketch of the proof.

\begin{lemma}\label{lem:Gprime_spanner}
$G'$ is a $(1+2\eps)$-spanner for $P$.
\end{lemma}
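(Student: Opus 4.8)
The goal is to show that $G'$ — obtained from Vaidya's $(1+\eps)$-spanner $G$ by discarding any edge $xy$ that is ``shortcut'' by some already-inserted edge $zw\in E(G')$ — is still a $(1+2\eps)$-spanner for $P$. The lower bound $d_{G'}(x,y)\geq \|x-y\|_2$ is automatic since $G'$ is a geometric graph. For the upper bound, the strategy is the standard one for pruned spanners: argue that whenever we \emph{delete} an edge $e=\{x,y\}$, the endpoints are still connected in $G'$ by a path that is only slightly longer than $\|x-y\|_2$, and then bootstrap this to show every pair of points in $P$ is well-connected.

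First I would set up an induction on the edges of $G$ in the order they are processed (increasing length). The induction hypothesis is that after processing the first $i$ edges, for every edge $e=\{u,v\}$ among those first $i$ that was \emph{not} added to $G'$, we have $d_{G'}(u,v)\leq (1+2\eps)\|u-v\|_2$ using only edges of $G'$ among the first $i$. When a new edge $e=\{x,y\}$ is deleted, there is by construction some $\{z,w\}\in E(G')$ with, say, $zw$ a shortcut for $xy$: $\|x-z\|_2\leq \eps\|z-w\|_2/20$ and the angle between $x$-$y$ and $x$-$(w-z+x)$ is at most $\eps/20$. Since $e$ is processed in increasing order of length and $\{z,w\}$ was already present, we have $\|z-w\|_2\leq \|x-y\|_2$. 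The key geometric estimate is that the path $x \to z \to w \to y$ (using the edge $zw$ plus short detours at the two ends) has length at most $(1+2\eps)\|x-y\|_2$: the hop $x$-$z$ costs at most $\eps\|z-w\|_2/20\leq\eps\|x-y\|_2/20$; by the small-angle and length conditions, $w$ is within roughly $\eps\|x-y\|_2/10$ of the point $y' = x + (y-x)\|w-z\|_2/\|y-x\|_2 \cdot(\dots)$ lying along the ray from $x$ toward $y$, so $\|w-y\|_2 \leq \|x-y\|_2 - \|z-w\|_2 + O(\eps)\|x-y\|_2$; summing, the three hops total $\|x-y\|_2(1+O(\eps))$. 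I would carry out this triangle-inequality/law-of-cosines computation carefully enough to get the constant to land at $1+2\eps$ (here the original construction already being a $(1+\eps)$-spanner gives slack). But crucially, the sub-paths $x$-$z$ and $w$-$y$ may themselves need to be realized in $G'$: $\{x,z\}$ and $\{w,y\}$ need not be edges of $G'$. This is where one invokes that $\|x-z\|_2$ and $\|w-y\|_2$ are each \emph{strictly shorter} than $\|x-y\|_2$ — so one applies the spanner property of $G$ recursively to these shorter pairs, and (by a secondary induction on length) each is realized in $G'$ with dilation $1+2\eps$. Chaining: $d_{G'}(x,y)\leq (1+2\eps)(\|x-z\|_2+\|w-y\|_2) + \|z-w\|_2 \leq (1+2\eps)\|x-y\|_2$, provided the $O(\eps)$ slack is controlled; this is the computation to pin down.

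Having established that deleted edges are $(1+2\eps)$-spanned, I would finish by observing that for an arbitrary pair $p,q\in P$, Theorem~\ref{thm:Vai} gives a path $p=v_0,v_1,\dots,v_k=q$ in $G$ with $\sum \|v_{i-1}-v_i\|_2 \leq (1+\eps)\|p-q\|_2$. Replacing each edge $\{v_{i-1},v_i\}$ of this path that is absent from $G'$ by its $(1+2\eps)$-approximating $G'$-path (with strictly shorter constituent hops, so the substitution terminates), and keeping the edges that survived, yields a $p$-$q$ walk in $G'$ of total length at most $(1+2\eps)\sum\|v_{i-1}-v_i\|_2\leq (1+2\eps)(1+\eps)\|p-q\|_2$. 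To land exactly at $1+2\eps$ rather than $(1+\eps)(1+2\eps)$ one runs the $G$-spanner with parameter $\eps'=\eps/c$ for a suitable constant, or more cleanly one phrases the inductive claim directly as ``$d_{G'}(u,v)\leq (1+2\eps)\|u-v\|_2$ for all $u,v\in P$'' by strong induction on $\|u-v\|_2$ over the (finite) set of pairwise distances, using the $G$-path together with the deleted-edge replacement inside a single induction.

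\textbf{Main obstacle.} The delicate point is the termination/well-foundedness of the recursive edge-replacement together with keeping the accumulated error at exactly $1+2\eps$: each time we route around a deleted edge $xy$ via a shortcut $zw$, the new ``charged'' hops $x$-$z$ and $w$-$y$ are strictly shorter than $\|x-y\|_2$, so the recursion is well-founded on the finite set of interpoint distances, but one must make sure the multiplicative error does not compound across recursion levels — it does not, because the bound $(1+2\eps)$ is stated in absolute (not incremental) form and re-proved from scratch at each distance scale using the $(1+\eps)$ guarantee of $G$ plus the $O(\eps)$ shortcut overhead, which together stay below $1+2\eps$ when $\eps$ is small. Getting the shortcut-overhead constants (the $1/20$'s and the angle bound) to combine into at most an extra $\eps$ is the one genuinely computational step, and it is exactly the kind of estimate that is ``standard'' in the spanner literature.
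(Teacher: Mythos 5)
Your proposal is correct and follows essentially the same route as the paper's own (sketched) proof: a strong induction over pairs in order of increasing $\|x-y\|_2$, routing each pruned edge through its shortcut edge $zw$ and handling the end segments $x$-$z$ and $w$-$y$ by the inductive hypothesis, with non-edges of $G$ handled via the spanner path in $G$. If anything, you are slightly more careful than the paper's sketch in flagging (and fixing) the potential $(1+\eps)(1+2\eps)$ compounding in the non-edge case, which the paper glosses over.
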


\begin{proof}
We consider all $\{x,y\}\in {P\choose 2}$ in order of increasing $\|x-y\|_2$ and we prove by induction that $d_{G'}(x,y)\leq (1+2\eps) \|x-y\|_2$.
If $\{x,y\} \notin E(G)$, then the assertion follows by applying the inductive hypothesis on all the edges in the shortest-path between $x$ and $y$ in $G$.
If $\{x,y\} \in E(G)\cap E(G')$, then $d_{G'}(x,y)=\|x-y\|_2$, and the inductive hypothesis holds trivially.
Finally, it remains to consider the case $\{x,y\}\in E(G)\setminus E(G')$.
Since $\{x,y\}$ was not added to $G'$ it follows that there exists some $\{z,w\}\in E(G')$ such that either $zw$ is a shortcut for $xy$ or $zw$ is a shortcut for $yx$.
Assume w.l.o.g.~that $zw$ is a shortcut for $xy$.
We have
\begin{align*}
d_{G'}(x,y) &\leq d_{G'}(x,z)+d_{G'}(z,w)+d_{G'}(w,y)\\
 &\leq (1+2\eps) \|x-z\|_2 + \|z-w\|_2 + (1+2\eps)\|w-y\|_2\\
 &\leq (1+\eps/20+\eps^2/10) \|z-w\|_2 + (1+2\eps) \|w-y\|_2\\
 &< (1+\eps/20+\eps^2/10) (1+\eps/4) (\|x-y\|_2-\|w-y\|_2) + (1+2\eps) \|w-y\|_2\\
 &< (1+\eps) (\|x-y\|_2-\|w-y\|_2) + (1+2\eps)\|w-y\|_2\\
 &< (1+2\eps) \|x-y\|_2,
\end{align*}
which concludes the proof.
\end{proof}

\begin{lemma}\label{lem:pw_cross}
Let $c\in \mathbb{R}^d$ and let $r>0$.
Let 
$E^* = \{\{x,y\} \in E(G') : \|x-y\| > 2r \text{ and }x\text{-}y\cap \sphere(c,r)\neq \emptyset\}$.
Then $|E^*| \leq (d/\eps)^{O(d)}$.
\end{lemma}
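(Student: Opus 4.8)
The plan is to bound $|E^*|$ by exhibiting, for each long edge $\{x,y\}$ crossing $\sphere(c,r)$, a witness box of size $\Theta(\eps r/\sqrt d)$ near the crossing point, and then to show that each point of $\mathbb{R}^d$ lies in only $(d/\eps)^{O(d)}$ such witnesses. Concretely, let $\{x,y\}\in E^*$. Since $\|x-y\|>2r$ and the segment $x$-$y$ meets $\sphere(c,r)$, the segment in fact contains a point of $\sphere(c,r)$ at distance $>r$ from at least one endpoint; fix a crossing point $p\in \sphere(c,r)\cap (x\text{-}y)$. I will use the shortcut-freeness of $G'$: since $\{x,y\}\in E(G')$, no edge of $G'$ is a shortcut for $xy$ or $yx$, and by the construction of $G'$ from $G$ together with Theorem \ref{thm:Vai} the edge lengths and geometry of edges of $G$ are controlled by the box-tree. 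In particular, following the same charging scheme used in Lemmas \ref{lem:D1} and \ref{lem:D2}, associate to $e=\{x,y\}$ a box $B(e)$ in the box-tree with $\size(B(e))=\Theta(\eps r/\sqrt d)$ that is within distance $O(r)$ of the crossing point $p$, chosen in a consistent fashion so that distinct edges get equal-or-disjoint boxes.

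The key steps, in order: (1) Pass to the dyadic scale: let $E^*_r$ have $\|x-y\|\in[2^j r, 2^{j+1}r)$; it suffices to bound each $|E^*_r|$ by $(d/\eps)^{O(d)}$ and observe that crossing $\sphere(c,r)$ forces a segment of length $\geq r$ on one side, so only $O(\log(\text{aspect}))$... actually better: argue directly that a segment of length $>2r$ meeting $\sphere(c,r)$ has its nearer endpoint within distance $3r$ of $c$, hence $B(e)\subseteq \ball(c, O(r))$. (2) As in Lemma \ref{lem:D2}, each box of size $\Theta(\eps r/\sqrt d)$ participates in at most $(\sqrt d/\eps)^{O(d)}=(d/\eps)^{O(d)}$ charged pairs, because the two boxes of a pair are within distance $2\cdot 2r=O(r)$ and there are only $(d/\eps)^{O(d)}$ boxes of that size in any ball of radius $O(r)$ (boxes of a common size in the box-tree are interior-disjoint). (3) Since all these boxes lie in $\ball(c,O(r))$, which contains at most $(1/\eps)^{O(d)} \cdot 2^{O(d)} = (d/\eps)^{O(d)}$ boxes of size $\Theta(\eps r/\sqrt d)$ from the box-tree, and each is charged $(d/\eps)^{O(d)}$ times, we get $|E^*_r| \le (d/\eps)^{O(d)}$. (4) Finally, since a crossing long edge is determined up to the dyadic length scale, and for each scale the count is $(d/\eps)^{O(d)}$, but in fact a single scale suffices because $B(e)$ has a fixed size $\Theta(\eps r/\sqrt d)$ independent of $\|x-y\|$ once we only demand it be near $p$ — so summing is unnecessary — conclude $|E^*| = (d/\eps)^{O(d)}$.

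The main obstacle I anticipate is step (1)/(4): making the witness box $B(e)$ depend only on the fixed radius $r$ (and the crossing point), not on the potentially much larger length $\|x-y\|$, while still guaranteeing it exists in the box-tree at the right scale and that the charging is injective. This is exactly the place where the extra "representative-consistency" constraint imposed on the Vaidya construction in this paper (that a box shares the representative of its unique child containing it) is needed: it lets us walk down a branch of $T$ from a box containing $x$ until the size first drops below $r\eps/(9\sqrt d)$, producing $B(e)$ at the desired scale regardless of how long the edge is, mirroring the argument in Lemma \ref{lem:D2}. Once that walk is set up correctly, the disjointness/packing bookkeeping is routine and identical in spirit to Lemmas \ref{lem:D1} and \ref{lem:D2}.
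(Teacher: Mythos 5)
There is a genuine gap, and it sits exactly at the point you flagged as the ``main obstacle.'' Your step (1) asserts that a segment of length $>2r$ meeting $\sphere(c,r)$ has an endpoint within distance $3r$ of $c$; this is false. A long edge can pass straight through $\ball(c,r)$ with both endpoints arbitrarily far away (e.g.\ endpoints at distance $500r$ on opposite sides of $c$). This breaks the whole charging scheme: any witness box you can actually extract from the box-tree (by walking down a branch from a box containing an endpoint, using the representative-consistency constraint) shrinks toward that endpoint, i.e.\ toward a point of $P$, so it need not lie in $\ball(c,O(r))$, and your packing count ``all witness boxes lie in $\ball(c,O(r))$'' collapses. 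You cannot repair this by placing the witness box ``near the crossing point $p$'' instead, because $p$ is not a point of $P$ and $P$ may have no points anywhere near $c$; the box-tree then simply has no boxes of size $\Theta(\eps r/\sqrt d)$ in that region, even though many long edges of the spanner may cross $\sphere(c,r)$ there. Relatedly, although you announce that you ``will use the shortcut-freeness of $G'$,'' the counting in steps (2)--(4) never uses it, and without it no purely volumetric/charging argument at a single scale can work: edges crossing the sphere can live at unboundedly many length scales, and a per-scale packing bound summed over scales gives a factor $\log(\diam(P)/r)$, not $(d/\eps)^{O(d)}$.

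For comparison, the paper splits $E^*$ by length. Edges with $2r<\|x-y\|\leq 100r/\eps$ fall into only $O(\log(1/\eps))$ dyadic buckets; each bucket is a $2$-related family of circumballs contained in a ball of radius $O(r/\eps)$, so the $(2,(d/\eps)^{O(d)})$-thickness of ${\cal D}$ (Lemma \ref{lem:D}) bounds each bucket, giving $(d/\eps)^{O(d)}$ in total. For the remaining edges, of length $>100r/\eps$, the shortcut-freeness of $G'$ is the essential tool: if there were more than $(d/\eps)^{Cd}$ of them, two would cross the small sphere while forming an angle $<\eps/20$, and then (the length threshold guaranteeing condition (1) of the shortcut definition, since $2r\leq \eps\|z-w\\|_2/20$) one would be a shortcut for the other, contradicting the construction of $G'$. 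Your proposal is missing both of these mechanisms, so it does not establish the lemma.
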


\begin{proof}[Proof of Lemma \ref{lem:pw_cross}]
Let $E^*_0=\{\{x,y\}\in E^* : \|x-y\|\leq 100r/\eps\}$.
We can partition $E^*_0$ into $O(\log(1/\eps))$ buckets, where the $i$-th bucket contains the balls with radius in $[r2^i, r2^{i+1})$.
Since by Lemma \ref{lem:D}, ${\cal D}$ is $(2,(d/\eps)^{O(d)})$-thick, and all the balls in $E^*$ are contained in a ball of radius $O(r/\eps)$, it follows that each bucket can contain at most $(1/\eps)^{O(d)} \cdot (d/\eps)^{O(d)}$ balls.
Thus
$|E^*_0| = O(\log(1/\eps)) \cdot (1/\eps)^{O(d)} \cdot (d/\eps)^{O(d)} = (d/\eps)^{O(d)}$.

Let $E_1^*=E^*\setminus E_0^*$.
Suppose that $|E_1^*| > (d/\eps)^{Cd}$.
Setting $C$ to be a sufficiently large universal constant it follows that there exist distinct edges  $\{x,y\}, \{z,w\} \in E_1^*$ that form an angle of less than $\eps/20$.
Assume w.l.o.g.~that $\|x-y\|_2\geq \|z-w\|_2$, $x\in \ball(c,r)$, and $z\in \ball(c,r)$.
Then $zw$ must be a shortcut for $xy$, which is a contradiction since $\{x,y\}\in E(G')$, concluding the proof.
\end{proof}


We now prove the main result of this section.

\begin{theorem}
Let $d\geq 2$ be some fixed integer, and let $\delta \in (0,d]$ be some real number.
Let $P\subset \mathbb{R}^d$ be some finite point set with $|P|=n$, such that $\dimF(P)=\delta$.
Then, for any fixed 
$\eps \in (0,1]$, there exists a $(1+\eps)$-spanner, $G'$, for $P$, with a linear number of edges, and with 
\[
\pw(G)=\left\{\begin{array}{ll}
O(n^{1-1/\delta} \log n) & \text{ if } \delta>1\\
O(\log^2 n) & \text{ if } \delta=1\\
O(\log n) & \text{ if } \delta<1
\end{array}\right.
\]
Moreover, given $P$, the graph $G'$ can be computed in polynomial time.
\end{theorem}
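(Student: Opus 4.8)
The plan is to keep the graph $G'$ exactly as constructed above, run with $\eps$ replaced by $\eps/2$. Then Lemma~\ref{lem:Gprime_spanner} shows $G'$ is a $(1+\eps)$-spanner for $P$, Theorem~\ref{thm:Vai} shows $|E(G)|=O((\eps/2)^{-d}n)=O(n)$ (hence $|E(G')|=O(n)$) for fixed $d$ and $\eps$, and the box-tree $T$, the graph $G$, and the greedy shortcutting procedure are each computable in polynomial time; so the only remaining task is the bound on $\pw(G')$.

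For the pathwidth I would argue by induction on the number of vertices, over the family of all subgraphs of $G'$ induced by a subset of $P$. Write $\Lambda=O(n^{1-1/\delta})$ if $\delta>1$, $\Lambda=O(\log n)$ if $\delta=1$, and $\Lambda=O(1)$ if $\delta<1$; I would prove that every such induced subgraph $H$ satisfies $\pw(H)\le\Lambda\cdot O(\log(|V(H)|+2))$, and then instantiate $H=G'$. The induction is set up so that all hypotheses are hereditary: the family ${\cal D}_H=\{D_e:e\in E(H)\}$ of circumscribed balls of the edges of $H$ is a subfamily of ${\cal D}$, hence $(2,(d/\eps)^{O(d)})$-thick by Lemma~\ref{lem:D}; each $D_e\in{\cal D}_H$ contains the two endpoints of $e$, which lie in $V(H)\subseteq P$; the net-size bound defining $\dimF$ is inherited by subsets of $P$, so Theorem~\ref{sep} applies to ${\cal D}_H$ with exponent $\delta$; and Lemma~\ref{lem:pw_cross} applies to $H$ as well, since the set $E^*$ it bounds only shrinks when passing from $G'$ to the subgraph $H$.

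For the inductive step I would apply Theorem~\ref{sep} to ${\cal D}_H$ with $\lambda=2$, choosing the auxiliary ball $C'$ in its proof to be a smallest ball containing at least $|V(H)|/(k(d)+1)$ points of $V(H)$ (the bound on the number of small straddling balls is, as noted in the proof of Theorem~\ref{sep}, insensitive to this choice). This yields a $(d-1)$-sphere $C=\sphere(c,\rho)$ that is \emph{point}-balanced --- at most $(1-2^{-O(d)})|V(H)|$ points of $V(H)$ lie on either strict side of $C$ --- and such that at most $O(|E(H)|^{1-1/\delta})$ balls of ${\cal D}_H$ of diameter at most $\diam(C)$ meet $C$ (this being $O(\log|E(H)|)$ when $\delta=1$ and $O(1)$ when $\delta<1$). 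Let $X\subseteq V(H)$ be the set of endpoints of all edges of $H$ whose segment meets $C$: an edge of length at most $\diam(C)=2\rho$ contributes one of the balls just counted, and Lemma~\ref{lem:pw_cross} shows there are only $(d/\eps)^{O(d)}=O(1)$ edges of length greater than $2\rho$ meeting $C$; using $|E(H)|\le|E(G')|=O(n)$ this gives $|X|\le\Lambda$ after absorbing constants. Every edge joining a strictly-interior to a strictly-exterior vertex has its segment crossing $C$, hence both endpoints in $X$, and any vertex of $V(H)$ lying on $C$ but not in $X$ is isolated in $H-X$; so $H-X$ is the disjoint union of the subgraph $H_{\mathrm{in}}$ induced by the strictly-interior vertices not in $X$, the subgraph $H_{\mathrm{out}}$ induced by the strictly-exterior vertices not in $X$, and isolated vertices, where $H_{\mathrm{in}}$ and $H_{\mathrm{out}}$ are each induced by a subset of $P$ of size at most $(1-2^{-O(d)})|V(H)|$. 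Adding $X$ to every bag of a path decomposition of $H-X$ gives $\pw(H)\le|X|+\pw(H-X)=|X|+\max\{\pw(H_{\mathrm{in}}),\pw(H_{\mathrm{out}})\}$, and the inductive hypothesis then yields $\pw(H)\le\Lambda+\Lambda\cdot O(\log((1-2^{-O(d)})|V(H)|+2))\le\Lambda\cdot O(\log(|V(H)|+2))$, closing the induction. With $H=G'$ and $|E(G')|=O(n)$ this gives $\pw(G')=O(\Lambda\log n)$, which is $O(n^{1-1/\delta}\log n)$, $O(\log^2 n)$, and $O(\log n)$ in the three cases.

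The step I expect to be the main obstacle is reconciling the two demands on the recursion. It must shrink geometrically, which I arrange by choosing $C'$ so that the separating sphere is balanced with respect to the \emph{points} of $V(H)$ rather than, as in the bare statement of Theorem~\ref{sep}, with respect to the balls; this yields an $O(\log n)$-depth recursion regardless of the (possibly large) vertex degrees of $G'$. At the same time each separator $X$ must be small, which is where Theorem~\ref{sep} (for the short straddling circumballs) and Lemma~\ref{lem:pw_cross} (for the $O(1)$ long edges meeting $C$ --- the one point where the shortcut-freeness of $G'$ is used) come in. The extra $\log n$ factor relative to the separator size itself arises precisely because the per-level separator bound is controlled only through the global edge count $|E(G')|=O(n)$, which does not decrease along the recursion, so the $O(\log n)$ levels each contribute $O(n^{1-1/\delta})$ (resp. $O(\log n)$, $O(1)$).
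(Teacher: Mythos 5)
Your proposal is correct and follows essentially the same route as the paper: apply Theorem \ref{sep} to the $(2,(d/\eps)^{O(d)})$-thick family of circumballs (Lemma \ref{lem:D}) with the auxiliary ball chosen to balance \emph{points}, handle the long crossing edges via Lemma \ref{lem:pw_cross}, take the endpoints of all crossing edges as a separator, and recurse, adding the separator to every bag to get width $O(M\log n)$. Your only deviations are bookkeeping ones (running the construction with $\eps/2$ and phrasing the recursion as an explicit induction over induced subgraphs with hereditary hypotheses), which merely make explicit what the paper's recursive argument does implicitly.
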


\begin{proof}
Let $G'$ be the spanner constructed above.
The bound on the number of edges of $G'$ follows by Theorem \ref{thm:Vai} since $G'\subseteq G$.
We will bound the pathwidth of $G'$.
By Lemma \ref{lem:D} we have that ${\cal D}$ is $(2,(d/\eps)^{O(d)})$-thick.
By Theorem \ref{sep} there exists some $(d-1)$-sphere $C$ with radius $r$ such that at most $(1-2^{-O(d)})n$ points of $P$ are contained in either side of $C$, and $|A|\leq M$, where
\[
M=\left\{\begin{array}{ll}
O(n^{1-1/\delta}) & \text{ if } \delta>1\\
O(\log n) & \text{ if } \delta=1\\
O(1) & \text{ if } \delta<1
\end{array}\right.,
\]
and $A=\{\{x,y\}\in E(G): \|x-y\|\leq 2r \text{ and } x\text{-}y \cap C \neq \emptyset\}$.

Let $A'=\{\{x,y\}\in E(G'): \|x-y\|\leq 2r \text{ and } x\text{-}y \cap C \neq \emptyset\}$. We have $A'\subseteq A$, and thus $|A'|\leq |A|$.
Let $A''=\{\{x,y\}\in E(G'): \|x-y\|>2r \text{ and } x\text{-}y \cap C \neq \emptyset\}$. By Lemma \ref{lem:pw_cross} we have $|A''|=O(1)$ (for fixed $d$ and $\eps$).
Let $S$ be the set of all endpoints of all the edges in $A'\cup A''$.
We have $|S|\leq 2|A'\cup A''| = O(|A|)$.
Let $U$ (resp.~$U'$) be the set of points in $P$ that are inside (resp.~outside) $C$.
Then $S$ separates in $G'$ every vertex in $U$ from every vertex in $U'$.
We may thus recurse on $G'[U\setminus (S\cup U')]$ and $G'[U'\setminus (S\cup U)]$ and obtain path decompositions $X_1,\ldots,X_t$ and $Y_1,\ldots,Y_s$ respectively.
We now obtain the path decomposition $X_1\cup S, \ldots,X_t\cup S,Y_1\cup S,\ldots,Y_s\cup S$ for $G'$.
The width of the resulting path decomposition is at most $O(M \log n)$, concluding the proof.
\end{proof}

\paragraph{Acknowledgements}
The authors wish to thank Sariel Har-Peled, Dimitrios Thilikos, and Yusu Wang for fruitful discussions.

\bibliography{bibfile}

\begin{thebibliography}{10}

\bibitem{alber2002geometric}
Jochen Alber and Ji{\v{r}}{\'\i} Fiala.
\newblock Geometric separation and exact solutions for the parameterized
  independent set problem on disk graphs.
\newblock In {\em Foundations of Information Technology in the Era of Network
  and Mobile Computing}, pages 26--37. Springer, 2002.

\bibitem{assouad1983plongements}
Patrice Assouad.
\newblock Plongements lipschitziens dans $\mathbb{R}^n$.
\newblock {\em Bulletin de la Soci{\'e}t{\'e} Math{\'e}matique de France},
  111:429--448, 1983.

\bibitem{bartal2012traveling}
Yair Bartal, Lee-Ad Gottlieb, and Robert Krauthgamer.
\newblock The traveling salesman problem: low-dimensionality implies a
  polynomial time approximation scheme.
\newblock In {\em Proceedings of the forty-fourth annual ACM symposium on
  Theory of computing}, pages 663--672. ACM, 2012.

\bibitem{chan2005hierarchical}
Hubert~TH Chan, Anupam Gupta, Bruce~M Maggs, and Shuheng Zhou.
\newblock On hierarchical routing in doubling metrics.
\newblock In {\em Proceedings of the sixteenth annual ACM-SIAM symposium on
  Discrete algorithms}, pages 762--771. Society for Industrial and Applied
  Mathematics, 2005.

\bibitem{chan2009small}
T-H~Hubert Chan and Anupam Gupta.
\newblock Small hop-diameter sparse spanners for doubling metrics.
\newblock {\em Discrete \& Computational Geometry}, 41(1):28--44, 2009.

\bibitem{cole2006searching}
Richard Cole and Lee-Ad Gottlieb.
\newblock Searching dynamic point sets in spaces with bounded doubling
  dimension.
\newblock In {\em Proceedings of the thirty-eighth annual ACM symposium on
  Theory of computing}, pages 574--583. ACM, 2006.

\bibitem{falconer2004fractal}
Kenneth Falconer.
\newblock {\em Fractal geometry: mathematical foundations and applications}.
\newblock John Wiley \& Sons, 2004.

\bibitem{gottlieb2008optimal}
Lee-Ad Gottlieb and Liam Roditty.
\newblock An optimal dynamic spanner for doubling metric spaces.
\newblock In {\em European Symposium on Algorithms}, pages 478--489. Springer,
  2008.

\bibitem{gupta2003bounded}
Anupam Gupta, Robert Krauthgamer, and James~R Lee.
\newblock Bounded geometries, fractals, and low-distortion embeddings.
\newblock In {\em Foundations of Computer Science, 2003. Proceedings. 44th
  Annual IEEE Symposium on}, pages 534--543. IEEE, 2003.

\bibitem{gupta2012online}
Anupam Gupta and Kevin Lewi.
\newblock The online metric matching problem for doubling metrics.
\newblock In {\em International Colloquium on Automata, Languages, and
  Programming}, pages 424--435. Springer, 2012.

\bibitem{har2011geometric}
Sariel Har-Peled.
\newblock {\em Geometric approximation algorithms}, volume 173.
\newblock American mathematical society Providence, 2011.

\bibitem{har2011simple}
Sariel Har-Peled.
\newblock A simple proof of the existence of a planar separator.
\newblock {\em arXiv preprint arXiv:1105.0103}, 2011.

\bibitem{har2006fast}
Sariel Har-Peled and Manor Mendel.
\newblock Fast construction of nets in low-dimensional metrics and their
  applications.
\newblock {\em SIAM Journal on Computing}, 35(5):1148--1184, 2006.

\bibitem{heinonen2012lectures}
Juha Heinonen.
\newblock {\em Lectures on analysis on metric spaces}.
\newblock Springer Science \& Business Media, 2012.

\bibitem{hochbaum1985approximation}
Dorit~S Hochbaum and Wolfgang Maass.
\newblock Approximation schemes for covering and packing problems in image
  processing and vlsi.
\newblock {\em Journal of the ACM (JACM)}, 32(1):130--136, 1985.

\bibitem{hubert2012approximating}
T-H Hubert~Chan and Anupam Gupta.
\newblock Approximating tsp on metrics with bounded global growth.
\newblock {\em SIAM Journal on Computing}, 41(3):587--617, 2012.

\bibitem{karger2002finding}
David~R Karger and Matthias Ruhl.
\newblock Finding nearest neighbors in growth-restricted metrics.
\newblock In {\em Proceedings of the thiry-fourth annual ACM symposium on
  Theory of computing}, pages 741--750. ACM, 2002.

\bibitem{khoury2010fractal}
Marc Khoury and Rephael Wenger.
\newblock On the fractal dimension of isosurfaces.
\newblock {\em IEEE Transactions on Visualization and Computer Graphics},
  16(6):1198--1205, 2010.

\bibitem{krauthgamer2005black}
Robert Krauthgamer and James~R Lee.
\newblock The black-box complexity of nearest-neighbor search.
\newblock {\em Theoretical Computer Science}, 348(2):262--276, 2005.

\bibitem{krauthgamer2006algorithms}
Robert Krauthgamer and James~R Lee.
\newblock Algorithms on negatively curved spaces.
\newblock In {\em 2006 47th Annual IEEE Symposium on Foundations of Computer
  Science (FOCS'06)}, pages 119--132. IEEE, 2006.

\bibitem{krauthgamer2005measured}
Robert Krauthgamer, James~R Lee, Manor Mendel, and Assaf Naor.
\newblock Measured descent: A new embedding method for finite metrics.
\newblock {\em Geometric \& Functional Analysis GAFA}, 15(4):839--858, 2005.

\bibitem{marx2005efficient}
D{\'a}niel Marx.
\newblock Efficient approximation schemes for geometric problems?
\newblock In {\em European Symposium on Algorithms}, pages 448--459. Springer,
  2005.

\bibitem{marx2014limited}
D{\'a}niel Marx and Anastasios Sidiropoulos.
\newblock The limited blessing of low dimensionality: when 1-1/d is the best
  possible exponent for d-dimensional geometric problems.
\newblock In {\em Proceedings of the thirtieth annual symposium on
  Computational geometry}, page~67. ACM, 2014.

\bibitem{salowe1991construction}
Jeffrey~S Salowe.
\newblock Construction of multidimensional spanner graphs, with applications to
  minimum spanning trees.
\newblock In {\em Proceedings of the seventh annual symposium on Computational
  geometry}, pages 256--261. ACM, 1991.

\bibitem{smith1998geometric}
Warren~D Smith and Nicholas~C Wormald.
\newblock Geometric separator theorems and applications.
\newblock In {\em Foundations of Computer Science, 1998. Proceedings. 39th
  Annual Symposium on}, pages 232--243. IEEE, 1998.

\bibitem{takayasu1990fractals}
Hideki Takayasu.
\newblock {\em Fractals in the physical sciences}.
\newblock Manchester University Press, 1990.

\bibitem{talwar2004bypassing}
Kunal Talwar.
\newblock Bypassing the embedding: algorithms for low dimensional metrics.
\newblock In {\em Proceedings of the thirty-sixth annual ACM symposium on
  Theory of computing}, pages 281--290. ACM, 2004.

\bibitem{vaidya1991sparse}
Pravin~M Vaidya.
\newblock A sparse graph almost as good as the complete graph on points ink
  dimensions.
\newblock {\em Discrete \& Computational Geometry}, 6(3):369--381, 1991.

\bibitem{zahn1962black}
CT~Zahn.
\newblock Black box maximization of circular coverage.
\newblock {\em Journal of Research of the National Bureau of Standards B},
  66:181--216, 1962.

\end{thebibliography}

\newpage
\appendix
\section{Fractal dimension and doubling dimension}\label{sec:doubling}

In this section we observe that the fractal dimension and the doubling dimension of a set of points are related up to a constant factor.

\begin{theorem}\label{thm:doubling}
For any metric space $M$ we have
$\dimD(M) = \dimF(M) + O(1)$ and $\dimF(M) = O(\dimD(M))$.
\end{theorem}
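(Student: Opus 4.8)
The plan is to prove the two inequalities relating $\dimD(M)$ and $\dimF(M)$ separately, in each case translating between the packing/covering language underlying $\dimF$ and the doubling (ball-covering) language underlying $\dimD$. Throughout, recall that $\dimF(M)$ is the infimum of $\delta$ such that every $\eps$-net $N$ of $M$ satisfies $|N\cap\ball(x,r)|=O((r/\eps)^\delta)$ for all $x$ and all $r\ge 2\eps$; equivalently (by the footnote in the definition) the same bound for every $\eps$-packing.

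\textbf{Step 1: $\dimF(M)=O(\dimD(M))$.} Let $\kappa$ be the doubling constant, so $\dimD(M)=\log_2\kappa$. The standard packing bound from doubling: any ball of radius $r$ can be covered by $\kappa^{\lceil\log_2(r/\eps)\rceil}=O((r/\eps)^{\log_2\kappa})$ balls of radius $\eps/2$, obtained by iterating the doubling property $\lceil\log_2(r/\eps)\rceil$ times. If $N$ is an $\eps$-packing, then each such radius-$(\eps/2)$ ball contains at most one point of $N$ (two points of $N$ at distance $\ge\eps$ cannot both lie in a ball of radius $\eps/2$). Hence $|N\cap\ball(x,r)|=O((r/\eps)^{\log_2\kappa})$, which shows $\dimF(M)\le\log_2\kappa+O(1)=O(\dimD(M))$. (The additive $O(1)$ is absorbed since $\dimD(M)\ge 1$ for any $M$ with at least two points; the only subtlety is the edge case where $M$ has at most one point, where both quantities are $0$.)

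\textbf{Step 2: $\dimD(M)=\dimF(M)+O(1)$.} For the upper bound $\dimD(M)\le\dimF(M)+O(1)$: fix a ball $\ball(x,r)$ and let $N$ be a maximal $(r/4)$-packing of $\ball(x,r)$ (extendable to an $(r/4)$-net of $M$, so the fractal bound applies with $\eps=r/4$ and radius $r\ge 2\eps$), giving $|N|=O(4^{\delta'})=O(1)$ for any $\delta'>\dimF(M)$, where the constant depends on $\delta'$. By maximality the balls $\ball(y,r/2)$ for $y\in N$ cover $\ball(x,r)$. Iterating this at every scale shows any radius-$r$ ball is covered by a constant number of radius-$(r/2)$ balls, so $\kappa=O(1)$ and hence $\dimD(M)=O(1)+\dimF(M)$ — but we must be careful: we want the \emph{additive} form, i.e. the multiplicative constant in front of $\dimF(M)$ should be $1$. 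To get this, one applies the covering argument at a finer scale: cover $\ball(x,r)$ by balls of radius $\eps$ for small $\eps\ll r$, count them using the fractal bound as $O((r/\eps)^{\delta'})$, and then group these into balls of radius $r/2$; taking $\eps\to 0$ the exponent governing the number of radius-$(r/2)$ balls needed is exactly $\delta'$, giving $\log_2\kappa\le\delta'+O(1)$ and hence $\dimD(M)\le\dimF(M)+O(1)$.

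\textbf{Main obstacle.} The delicate point is entirely in Step 2: extracting the \emph{additive} (rather than merely multiplicative) relationship $\dimD(M)\le\dimF(M)+O(1)$. A naive one-scale covering argument only yields $\dimD(M)=O(\dimF(M)+1)$; obtaining the sharp constant $1$ in front requires letting the net scale $\eps$ tend to $0$ so that the growth rate of net cardinalities, which is precisely $\dimF(M)$ by definition, transfers directly to the exponent $\log_2\kappa$ after re-aggregating small balls into half-radius balls and absorbing lower-order multiplicative factors into the $O(1)$ additive term. I would also take care to handle the trivial metric spaces (one point, or uniformly discrete spaces such as the two-point example flagged in the paper's footnote) where $\dimF=0$ but $\dimD$ can be $1$, confirming these are consistent with the stated additive $O(1)$ and with the fact that $\dimD(M)=O(\dimF(M))$ need not hold.
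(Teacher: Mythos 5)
Your Step 1 is correct and is essentially the paper's own argument for $\dimF(M)=O(\dimD(M))$: iterate the doubling property down to scale $\eps/2$ and inject the points of an $\eps$-packing into the resulting small balls. The problem is in Step 2, specifically in the fix you propose for getting the additive bound. The obstacle you identify there is not real: the one-scale argument already yields $\dimD(M)\le\dimF(M)+O(1)$, provided you run it at scale $\eps=r/2$ rather than $r/4$. A maximal $(r/2)$-packing $N$ of $\ball(x,r)$ is an $(r/2)$-packing of $M$, and the definition of $\dimF$ applies with $r=2\eps$ (it only requires $r\geq 2\eps$), so $|N|=O(2^{\delta'})$ for any $\delta'>\dimF(M)$; by maximality the balls $\ball(y,r/2)$ with $y\in N$ cover $\ball(x,r)$, hence $\kappa=O(2^{\delta'})$ and $\log_2\kappa\le\delta'+O(1)$. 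This is exactly the paper's proof (phrased there with $(r/2)$-nets). Your factor-$2$ loss came solely from choosing the packing scale $r/4$.

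By contrast, the replacement argument you sketch --- cover $\ball(x,r)$ by $O((r/\eps)^{\delta'})$ balls of radius $\eps$, ``group'' them into balls of radius $r/2$, and let $\eps\to0$ --- does not go through. The fractal dimension provides only an \emph{upper} bound on cardinalities of nets and packings; it gives no lower bound on how many radius-$\eps$ balls a single radius-$(r/2)$ ball must absorb (a half-ball may contain very few points of $M$). Consequently there is no valid inequality letting you divide the fine-scale covering count by a per-half-ball capacity, and covering numbers do not factor across scales in the way the grouping requires: the number of radius-$(r/2)$ balls needed is simply not controlled by the radius-$\eps$ count. The correct route is the packing--covering duality at the single scale $r/2$ described above, which is what the paper does; with that substitution your Step 2 is complete, and your closing remarks about the degenerate cases (one- and two-point spaces) are consistent with the paper's footnote.
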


\begin{proof}
Let $M=(X,\rho)$.
We first show that $\dimD(M)=O(\dimF(M))$.
Let $\dimF(M)=\delta$.
For any $\eps$-net $N$ of $M$, at most $O((\frac{r}{\eps})^\delta)$ points in $N$ are contained in any ball of radius $r$, for any $r>\eps$.
Setting $\eps=\frac{r}{2}$ and taking the balls of radius $\frac{r}{2}$ centered at the points of $N$, we get that any ball of radius $r$ is covered by the union of at most $O(2^\delta)$ balls of radius $\frac{r}{2}$.
Thus $\dimD(M)=\delta+O(1)$.

Next we show that $\dimF(M)=O(\dimF(M))$.
Let $\dimD(M)=\lambda$.
From the definition of doubling dimension we have that for any $r>0$, any ball of radius $r$ can be covered by at most $2^\lambda$ balls of radius $\frac{r}{2}$. Given $r>\eps>0$ and $x \in \mathbb{R}^d$, applying the definition of doubling dimension $\log(\frac{2r}{\eps /2})$ times and taking the centers of the balls obtained, we get that there exists $S \subseteq X$ such that $S$ is an $\frac{\eps}{2}$-covering of $X$ and $|S \cap \ball(x,2r)| \leq (\frac{4r}{\eps})^{\lambda}$. Consider any $S' \subseteq X$ such that $S'$ is $\eps$-packing. We have that any two points in $S'$ are covered by different points in $S$ since they are at least distance $\eps$ apart. Also every point in $S' \cap \ball(x,r)$ is covered by some point in $S \cap \ball(x,2r)$. This implies that $|S' \cap \ball(x,r)| \leq |S \cap \ball(x,2r)| \leq (\frac{4r}{\eps})^{\lambda}$. Therefore for any $\eps$-net $N$, which is also an $\eps$-packing by definition, we have that $|N \cap \ball(x,r)| \leq (\frac{4r}{\eps})^{\lambda}$. 
Thus $\dimF(M)=O(\lambda)$, concluding the proof.
\end{proof}

\end{document}